\numberwithin{equation}{section}
\newtheorem{thm}{Theorem}[section]
\newtheorem{lem}[thm]{Lemma}
\newtheorem{cor}[thm]{Corollary}
\newtheorem{prop}[thm]{Proposition}
\newtheorem{rem}[thm]{Remark}
\newcommand\cF{{\mathcal F}}
\newcommand\cH{{\mathcal H}}
\newcommand\cL{{\mathcal L}}
\newcommand\cP{{\mathcal P}}
\newcommand\cO{{\mathcal O}}
\newcommand\cT{{\mathcal T}}
\newcommand\cU{{\mathcal U}}
\newcommand\bE{{\mathbb E}}
\newcommand\bN{{\mathbb N}}
\newcommand\bR{{\mathbb R}}
\newcommand\bZ{{\mathbb Z}}
\newcommand\bq{{\mathbf q}}
\newcommand\bp{{\mathbf p}}
\newcommand\be{{\mathbf e}}
\newcommand\ve{\varepsilon}
\newcommand\vf{\varphi}
\newcommand\dd{{\mathrm d}}
\newcommand\ed{{\mathrm e}}
\newcommand\RR{{\mathbb R}}
\newcommand\TT{{\mathbb T}}
\newcommand\ZZ{{\mathbb Z}}
\newcommand\R{{\mathbb R}}
\newcommand\T{{\mathbb T}}
\newcommand\Z{{\mathbb Z}}
\newcommand\Id{{\mathds{1}}}
\newcommand{\noise}{\varsigma}
\newcommand{\imaginary}{{\mathrm i}}
\newcommand{\spec}{\delta}
\newcommand{\coJ}{\mathbbm{k}}
\renewcommand\ll{\langle\!\langle}
\renewcommand\gg{\rangle\!\rangle}
\newcommand{\mc}[1]{{\mathcal #1}}
\newcommand{\mb}[1]{{\mathbf #1}}
\begin{document}

\title[Green-Kubo formula]{Green-Kubo formula for
  weakly coupled systems with noise}
\author{C.Bernardin, F.Huveneers, J.L.Lebowitz, C.Liverani and S.Olla}

\address{ C\'edric Bernardin\\
Laboratoire J.A. Dieudonn\'e
UMR CNRS 7351\\
Universit\'e de Nice Sophia-Antipolis,
Parc Valrose\\
06108 NICE Cedex 02, France}
\email{{\tt cbernard@unice.fr}}

\address{Fran\c cois Huveneers\\
 CEREMADE, UMR-CNRS 7534\\
 Universit\'{e} Paris Dauphine\\
 75775 Paris Cedex, France.}
 \email{{\tt huveneers@ceremade.dauphine.fr}}

\address{ Joel L.~Lebowitz\\
Departments of Mathematics and Physics,
Rutgers University\\
110 Frelinghuysen Road
NJ 08854 Piscataway,
USA\\
and IAS Princeton., USA.
}
\email{{\tt lebowitz@math.rutgers.edu}}

\address{ Carlangelo Liverani\\
 Dipartimento di Matematica\\
 II Universit\`{a} di Roma (Tor Vergata)\\
 Via della Ricerca Scientifica, 00133 Roma, Italy.}
 \email{{\tt liverani@mat.uniroma2.it}}

\address{Stefano Olla\\
 CEREMADE, UMR-CNRS 7534\\
 Universit\'{e} Paris Dauphine\\
 75775 Paris Cedex, France.}
 \email{{\tt olla@ceremade.dauphine.fr}}

\thanks{
We thank Herbert Spohn and David Huse for very helpful comments. 
CB, FH, JLL and SO thank Tom Spencer and Horng-Tzer Yau for the hospitality at the IAS where this work was completed, and all of us thank the BIRS-Banff. 
The research of SO and CL was founded in part by the European Advanced Grant Macroscopic Laws and Dynamical Systems (MALADY) (ERC AdG 246953). 
The research of CB was supported in part by the French Ministry of Education through the grant ANR-10-BLAN 0108 (SHEPI). 
The research of JLL was supported in part by NSF grant DMR1104500.}

\date{\today.}
\begin{abstract}
We study the Green-Kubo (GK) formula $\kappa (\ve, \noise)$
for the heat conductivity of an infinite chain of $d$-dimensional finite systems (cells) coupled by a smooth nearest neighbor potential $\ve V$. 
The uncoupled systems evolve according to Hamiltonian dynamics perturbed stochastically by an energy conserving noise of strength $\noise$. 
Noting that $\kappa (\ve, \noise)$ exists and is finite whenever $\noise > 0$, we are interested in what happens when the strength of the noise $\noise \to 0$. 
For this, we start in this work by formally expanding $\kappa (\ve, \noise)$ in a power series in $\ve$, $\kappa (\ve, \noise) = \ve^2 \sum_{n\ge 2} \ve^{n-2} \kappa_n (\noise)$
and investigating the (formal) equations satisfied by $\kappa_n (\noise)$. 
We show in particular that $\kappa_2 (\noise)$ is well defined when no pinning potential is present, 
and coincides formally with the heat conductivity obtained in the weak coupling (van Hove) limit, where time is rescaled as $\ve^{-2}t$, for the cases where the latter has been established \cite{LO, DL}.
For one-dimensional systems, we investigate $\kappa_2 (\noise)$ as $\noise \to 0$ in three cases: 
the disordered harmonic chain, the rotor chain and a chain of strongly anharmonic oscillators.
Moreover, we formally identify $\kappa_2 (\noise)$ with the conductivity obtained by having the chain between two reservoirs at temperature $T$ and $T+\delta T$, 
in the limit $\delta T\to 0$, $N \to \infty$, $\ve \to 0$.   

\end{abstract}

\keywords{Thermal conductivity, Green-Kubo formula, coupling expansion, small noise} 

\maketitle

\section{Introduction}
\label{sec:introduction}

Energy transport in nonequilibrium macroscopic systems is described
phenomenologically by Fourier's law. This relates the energy flux $J$,
at the position $r$ in the system, to the temperature gradient at $r$,
via $J = -\kappa \nabla T$. The computation of the thermal
conductivity $\kappa$, which depends on the temperature and the
constitution of the system, from the underlying microscopic dynamics
is one of the central mathematical problems in nonequilibrium statistical
mechanics (see \cite{blr}\cite{LLP}\cite{dhar} and references therein).

The Green-Kubo (GK) formula gives a linear response expression for the
thermal conductivity. It is defined as the
asymptotic space-time variance for the energy currents in an infinite
system in equilibrium at temperature $T= \beta^{-1}$, evolving
according to the appropriate dynamics. It is therefore always nonnegative. For purely Hamiltonian (or
quantum) dynamics, there is no proof of convergence of the GK formula
(and consequently no proof of Fourier law).  
One way to overcome this problem is to add a dash of randomness
(noise) to the dynamics \cite{BO11}. In the present work we explore
the resulting GK formula and start an investigation of what happens
when the strength of the noise, $\noise$, goes to zero.

Our basic setup is a chain of coupled systems described in Section 2. 
Each uncoupled system (to which
we will refer as a {\it cell}) evolves according to Hamiltonian
dynamics (like a billiard, a geodesic flow on a manifold of negative
curvature, or an anharmonic oscillator...) perturbed by a dynamical
energy preserving noise, with intensity $\noise$. We will consider cases
where the only conserved quantity for the dynamics with $\noise > 0$,
is the energy. The cells are coupled
by a smooth nearest neighbor potential $\ve V$.
 We assume that the resulting infinite volume Gibbs measure has a
 convergent expansion in $\ve$ for small $\ve$. We
 are interested in
the behaviour of the resulting GK formula for $\kappa(\ve, \noise)$ given
explicitly by equation \eqref{eq:gk1} below, for small $\noise$ and
$\ve$ keeping the temperature $\beta^{-1}$ and 
other parameters fixed.

We start in Section \ref{sec:green-kubo-formula-1} by noting that for
$\noise>0$, the GK formula is well defined 
and has a finite upper bound \cite{BO11}. We do not however have a
strictly positive lower bound on $\kappa(\ve,\noise)$ except in some
special cases \cite{BO11}. We expect however that
$\kappa(\ve,\noise)>0$ whenever $\ve>0, \noise>0$, i.e. there are no
(stable) heat insulators. The situation is different when we let $\noise\to 0$. In that case we
have examples where $\kappa(\ve,\noise)\to 0$  (disordered harmonic
chains \cite{CBFH}), and where $\kappa(\ve,\noise)\to \infty$ (periodic
harmonic systems).

To make progress in elucidating the properties of
$\kappa(\ve,\noise)$, when $\noise\to 0$, we carry out in Section
\ref{sec:form-expans-green} a purely formal expansion of
$\kappa(\ve,\noise)$ in powers of $\ve$: $\kappa(\ve,\noise) =
\sum_{n\ge 2} \kappa_n(\noise) \ve^n$. This is formal for several reasons, among which is the fact that
space-time correlations entering in the GK formula involve non-local
functions and depend themselves on $\ve$.  

This, together with a similar formal expansion in Section
\ref{sec:remark-non-equil}, are the only non rigorous sections of the
paper (apart from a technical assumption at the end of Section \ref{sec:markov-operator-mc}), yet they  allow to identify the basic objects of interest.

In Section \ref{sec:markov-operator-mc} we start the study of the objects loosely introduced in Section \ref{sec:form-expans-green}. 
More precisely we show that a formal operator is in fact well defined and is a Markov generator.

We then investigate in Section \ref{sec:green-kubo-formula2} 
the structure of the term $\kappa_2(\noise)$, which we believe, but do not prove, coincides with the $\lim_{\ve\to 0}
\kappa(\ve,\noise)/\ve^2$. We show in certain cases that $\kappa_2(\noise)$ is finite
and strictly positive for $\noise>0$ by proving that it is equal to the
conductivity obtained from a weak coupling limit in which there is a
rescaling of time as $\ve^{-2}t$ (cf. \cite{LO, LOS}). We argue
further that $\lim_{\noise\to 0}\kappa_2(\noise)$ exists and is
closely related to the weak coupling macroscopic conductivity obtained for the purely 
Hamiltonian dynamics $\noise = 0$ from the beginning. The latter is computed for a geodesic flow on a surface of negative curvature, and
is strictly positive  \cite{DL}.  
The rigorous study of the zero noise limit would require the extension to random perturbations of the theory developed for deterministic perturbations in \cite{BL07, BL13}. A first step in such a direction has been achieved recently by S.Dyatlov and M.Zworski for a noise given by the full Laplace-Beltrami operator \cite{Zw}. For the noise considered here a similar result should be possible by arguing as in the discrete time case \cite{GL06}.

The identification of $\kappa_2(\noise)$ with the weak coupling limit conductivity
(suggested by H. Spohn \cite{S}, see also \cite{Huveneers}) gives some hope that the higher order
terms, can also be shown to be well defined and studied in the limit $\noise\to 0$.  
This could then lead (if nature and mathematics are kind) to a proof
of the convergence and positivity of the GK formula for a Hamiltonian
system. 

We next, formally, show in Section \ref{sec:remark-non-equil} that we obtain the
same $\kappa_2(\noise)$ for the thermal conductivity of an open
system: $N$ coupled cells in which cell $1$ and cell $N$ are in
contact with Langevin reservoirs at different temperatures, when we
let $N\to\infty$ and the two reservoir temperatures  approach to
$\beta^{-1}$. 

Section \ref{sec:ex1} is devoted to a detailed study of
$\kappa_2(\noise)$ for three examples: 
\begin{enumerate}[1)]
\item a chain of coupled pinned anharmonic oscillators;
\item a chain of rotors;
\item a harmonic chain with random (positive) pinnings.
\end{enumerate}
In  all cases we can prove that, generically, $\limsup_{\noise\to 0}
\kappa_2(\noise) < +\infty$, as 
 contrasted with the regular harmonic chain where
 $\kappa_2(\noise) \to \infty$ when $\noise \to 0$
 \cite{B08}\cite{DLK}. In case 1 we have no lower bound for this limit. 
In case 2 we expect but do not prove that $\kappa_2 (\noise)$ vanishes as $\noise$ goes to $0$. In case 3 we prove that the 
 limit of $\kappa_2(\noise)$ is zero, as it is for the conductivity $\kappa (\ve, \noise)$ of the harmonic chain with random pinning springs (\cite{CBFH}), when $\noise\to 0$.  Phase mixing, due to lack of resonances between
 frequencies of different cells at different energies, is the relevant
 ingredient for the finiteness of $\kappa_2(\noise)$ when $\noise\to 0$.

\section{The System}
\label{sec:randomdynamics}

\subsection{Cell dynamics.}

We define first the dynamics of a single uncoupled cell.
This will be given by a Hamiltonian dynamics generated by
$$
H = p^2/2 + W (q)
$$
where the position $q$ has values in some d-dimensional manifold, $q
\in M$,  and the momentum $p$ belongs to the cotangent bundle of $M$, which can be locally identified with $\bR^d$. We generally assume that $W\ge 0$, and its minimum value is $0$.
In the case of the dynamics of a billiard, $W=0$ and $M\subset \mathbb R^2$ is the corresponding
compact set of allowed positions with reflecting condition on the
boundary. Another chaotic example is given by $M$ a manifold with
negative curvature and $W=0$ (cf \cite{DL}). We will also consider
cases where $d=1$, that are completely integrable. 
Of course in the case $W=0$, the manifold $M$ will always be taken compact. If $M$ is not compact, then we ask that $W(q)$ goes to infinity with $q$ fast enough.

The Hamiltonian flow in a cell is perturbed by a noise that acts on the velocity, conserving the kinetic
energy and thus the internal energy of each cell (as in \cite{BO11}\cite{LO}).
Noises that exchange energy between different cells will not be
considered here. Consequently the energy current will be due entirely
to the deterministic interaction between the cells. 
 
The time evolved configuration $\{q(t), p(t)\}$ is given by a Markov process on the
state space  
$\Omega = M\times \bR^d$, generated by   
$$ 
 L^0 = A^0 + \noise S^0
$$
where $A^0 = p \cdot  \partial_{q} - \nabla W(q) \cdot \partial_{p}$ is the Liouville operator associated to the Hamiltonian flow and $S^0$ is the generator of the stochastic perturbation, 
which acts only on the momentum $p$ and is such that $S^0 |p|^2 = 0$. Here, $|\cdot |$ denotes the induced norm in $\RR^d$ and ``$\cdot$" the corresponding scalar product.

In dimension 1, we take an operator $S$ that generates at random
exponential times a flip on the sign of the velocity: 
\begin{equation}\label{gen-flip}
S^0 f(q,p) = \frac12[f(q, -p) - f(q,p)].
\end{equation}
More generally we can take in any dimension the random dynamics
generated by\footnote{The notation $\langle\cdot\;|\; \cF\rangle$ stands for the conditional expectation with respect to the $\sigma$-algebra $\cF$. In particular $\langle\cdot\;|\; e\rangle$ is the expectation with respect to the Liouville (microcanonical) measure on the energy surface $e=H(q,p)$.}
\begin{equation}
  \label{eq:renewalvel}
  S^0 f(q,p) = \langle f | q,e \rangle - f(q,p)
\end{equation}
where at exponential times the momentum is renewed by choosing a new
momentum with the uniform distribution $\langle \cdot | q,e\rangle$ on the sphere $\{p \in \RR^d \, ;\, |p|^2=2(e-W(q))\}$. Alternatively for $d\ge 2$, we can choose a continuous noise
by just taking for $S^0$ the Laplacian on the sphere $\{p \in \RR^d  \, ;\, |p|^2=2(e-W(q))\}$, \cite{LO}. 

In both cases $S^0$ has a spectral gap  on the
subspace of functions such that $\langle f\;|\; q, e\rangle= 0$. 
Note that the measures $\langle \cdot\;|\; q, e\rangle $ are all even in the momentum
$p$.

\subsection{Interaction}
\label{sec:interaction}

Consider now the dynamics on $\Omega^{\bZ}$ constituted
by infinitely many processes $\{ (\bq (t), \bp (t) \}:=\{q_x(t), p_x(t)\}_{x\in\bZ}$
as above, but coupled by a nearest neighbor potential $\ve V$. 
The dynamics is then generated by\footnote{ Note that, in general, we should write $V(q_x,q_{x-1})$ as $q$ might not belong to a vector space. We avoid it to simplify notation, see \cite{DL} for details.} 
\begin{equation}\label{eq:geneve}
  L_{\ve} = \sum_{x\in\bZ} \left[\noise S^x + A^x + \ve \nabla V(q_x
    - q_{x-1}) \cdot
    (\partial_{p_{x-1}} - \partial_{p_{x}})\right] = L_0 + \ve G
\end{equation}
where $L_0=A +\noise S$, $A= \sum_x A^x$, $S=\sum_x S^x$. Here $A^x$ and $S^x$ act as $A^0$ and $S^0$ on the $x$-th component of $\Omega^\ZZ$ and like the identity on the other components. For simplicity, in general we assume that the interaction potential $V$ is smooth and bounded together with its derivatives. Note however that in the special examples discussed in Section \ref{sec:ex1} we will consider also more general cases.

The energy of each cell, which is the sum of the internal energy and
of the interaction energy, is defined by 
\begin{equation}
  \label{eq:11}
  e_x^\ve = e_x + \frac \ve 2 \left(V(q_{x+1} - q_{x}) + V(q_{x} -
    q_{x-1})\right).
\end{equation}
To simplify notation we write $e_x$ for $e_x^0 = \frac{|p_x|^2}{2} + W(q_x)$, the energy of the
isolated system $x$. 
 The dynamics generated by $L_0$ preserves all the
  energies $e_x$. We denote by ${\bf e}:=\{ e_x \, ; \, x\in \ZZ\}$ the collection of the internal energies. 
 
The dynamics generated by $L_\ve$ conserves the total energy. The
corresponding energy currents $\ve j_{x,x+1}$, 
defined by the local conservation law
$$
 L_{\ve} e_x^\ve = \ve \left(j_{x-1,x}-j_{x,x+1}\right)
$$
are antisymmetric functions of the $p$'s such that
\begin{equation}
\label{current:  general formula}
 j_{x,x+1} =  -\frac 12\,  (p_x + p_{x+1}) \cdot \nabla V(q_{x+1} -q_{x}).
\end{equation}

Let us denote by $\mu_{\beta,\ve}=\langle \cdot \rangle_{\beta,\ve}$
the canonical Gibbs measure at temperature $\beta^{-1}>0$ defined by
the Dobrushin-Lanford-Ruelle equations, which of course depends on
the interaction $\ve V$. We shall assume in all the cases considered
that $\mu_{\beta,\ve}$ 
is analytical in $\ve$ for sufficiently small $\ve$ (when applied to
local bounded functions). Since we are considering here for simplicity
only the one dimensional lattice with nearest neighbor interaction, it
follows under great general conditions on $V$ and $W$ that the Gibbs
state is unique and has spatial exponential decay of correlations for
bounded local functions.

Also we assume that the \emph{equilibrium} infinite dynamics is well
defined, i.e. for a set of  initial conditions which has probability measure
one with respect to $\mu_{\beta,\ve}$ (this can be proven by standard 
techniques as in \cite{FFL, LLL, MPP, OT} and references
therein).  This defines a strongly continuous contracting semigroup of $L^2 (\mu_{\beta,\ve})$ with infinitesimal generator $L_{\ve}$ for which the smooth local functions form a core. 

We conclude this section by introducing some basic notation that will be used in the following sections.
For any given bounded local functions $f,g$, define the semi-inner product
\begin{equation}
 \label{eq:sipt}
  \ll f,g \gg_{\beta, \ve} \ =\ \sum_{x\in \ZZ} [\langle\tau_x f, g\rangle_{\beta,\ve} -
  \langle f\rangle_{\beta, \ve} \langle g\rangle_{\beta,\ve}]. 
\end{equation}
Here $\tau_x$ is the shift operator by $x$ and $\langle \cdot , \cdot \rangle_{\beta, \ve}$ the scalar product in $L^2 (\mu_{\beta,\ve})$. The sum is finite in the case $\ve =0$, and converges for $\ve>0$ thanks to the exponential decay of the spatial correlations for
local bounded functions, \cite[Proposition 8.34]{Otto}.  
Remark that, since
the velocities are always distributed independently, $\ll j_{x,x+1},j_{x,x+1}\gg_{\beta,\ve}<\infty$.

Denote by $\cH_{\ve} = L^2( \ll\cdot ,\cdot \gg_{\beta,\ve})$ the
corresponding closure. Observe that if $g$ is local, then 
$g\in \mc H_\ve$ is equivalent to $g\in \mc H_0$. In addition, if a
local function $g$ belongs to $L^2(\mu_{\beta,\ve})$ then it belongs
to $\cH_{\ve}$, even though in general $L^2(\mu_{\beta,\ve})\not\subset\cH_{\ve}$.

Note that the semigroup $e^{t L_\ve}$ is a contraction semigroup on $\cH_\ve$ as well, since, for each local functions $f$,
\[
\ll e^{t L_\ve}f, e^{t L_\ve}f\gg_{\beta,\ve}= \lim_{L\to\infty}\frac1{2L+1}\sum_{|x|, |y|\leq L}\langle e^{t L_\ve}\tau_x f,e^{t L_\ve} \tau_y f\rangle_{\beta,\ve}\leq \ll f,f\gg_{\beta,\ve}
\]
where we have used the translation invariance of the semigroup. Also one can easily check that if $f$ is a local smooth function, then
\[
\ll e^{t L_\ve}f-f, e^{t L_\ve}f-f\gg_{\beta,\ve}\leq t^2\ll L_\ve f,  L_\ve f\gg_{\beta,\ve}
\]
from which it readily follows that $L_\ve$ generates a strongly continuous semigroup on $\cH_\ve$ as well.

It is also convenient to define the following semi-inner product
\[
\ll f,g\gg_1=\ll f, (-S) g\gg_{\beta,\ve}.
\]
Let $\cH^1_\ve$ be the associated Hilbert space. We also define the Hilbert space $\cH^{-1}_\ve$ via the duality
given by the $\cH_\ve$ norm, that is  
\[
\|f\|_{-1}^2=\sup_{g} \{ \, 2 \ll f,g\gg_{\beta,\ve}-\ll g,g\gg_1 \, \} 
\]
where the supremum is taken over local bounded functions $g$.

Define the subspace of the antisymmetric functions in the velocities
\begin{equation}
 \label{eq:antisymm}
 \mc H_\ve^a = \left\{ f \in \mc H_\ve: f({\bf q},-{\bf p}) = -f({\bf q},{\bf p})\right\}.
\end{equation}
where ${\bf q} = \{q_x\in M\}_{x\in \Z}, {\bf p} = \{p_x\in \R^d \}_{x\in
 \Z}$.
Similarly define the subspace of the symmetric functions in  $\bf p$ as
$\mc H_\ve^s$.  Remark that $\cH^s_\ve\perp\cH^a_\ve$ and $\cH^s_\ve\oplus\cH^a_\ve=\cH_\ve$.
Let us define $\mc P_{\ve}^a$ and $\mc P_{\ve}^s$
the corresponding orthogonal projections, whose definition, in fact,
depends on $\ve$ only in a trivial way. Therefore we sometime omit the index $\ve$ to denote them.

Finally, given a $\sigma$-algebra $\cF$ we will use $\bE_{\beta,\ve}(\cdot\;|\;\cF)$, $\mu_{\beta,\ve}(\cdot\;|\;\cF)$ or $\langle \cdot \;|\;\cF\rangle_{\beta,\ve}$ to designate the conditional expectation. Then, for any function $f\in L^2(\mu_{\beta,\ve})$, we define
$$
(\Pi_\ve f) (\mb e) =  \mu_{\beta,\ve} (f|\mb e),\quad Q_\ve =\Id-\Pi_\ve.
$$
Note that $\Pi_\ve$ is a projector and is self-adjoint also when seen as an operator acting on $\cH_\ve$, hence it is a well defined operator both on $L^2(\mu_{\beta,\ve})$ and $\cH_\ve$.
We conclude this section with a useful Lemma.
\begin{lem}
\label{eq:gapL2}
There exists $\spec>0$ such that, for each smooth local function $f$, 
\[
\langle f , -S f\rangle_{\beta,\ve}\;\geq  \spec \langle [\, f-\langle f\;|\; {\be}, {\bq}\,  \rangle_{\beta,\ve} \, ]^2\rangle_{\beta,\ve}.
\]
\end{lem}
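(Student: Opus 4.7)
The plan is to reduce the claimed global spectral gap to the single-site gap already postulated for $S^0$, by exploiting the product structure of $\mu_{\beta,\ve}$ conditioned on $(\be,\bq)$.

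The Hamiltonian $H = \sum_x \left(\tfrac{|p_x|^2}{2} + W(q_x)\right) + \ve \sum_x V(q_x-q_{x-1})$ splits into a kinetic part depending only on $\bp$ and a configurational part depending only on $\bq$, so $e^{-\beta H}$ factorizes and under $\mu_{\beta,\ve}$ the momenta are independent centered Gaussians (variance $\beta^{-1}$), jointly independent of the positions. Conditioning on $(\be,\bq)$ then fixes $|p_x|^2 = 2(e_x-W(q_x))$ for every $x$; by rotational symmetry of the Gaussian the conditional law of each $p_x$ is uniform on its sphere, and these distributions are mutually independent:
\[
\mu_{\beta,\ve}(\,\cdot\, \mid \, \be, \bq) \;=\; \bigotimes_{x\in\Z}\langle\,\cdot\,|\,q_x,e_x\rangle.
\]

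With the conditional measure in product form, the Poincar\'e inequality tensorizes. For a smooth local $f$ only finitely many $S^x$ act nontrivially, each on its own momentum variable, and the single-site spectral gap for $S^0$ recalled in Section~2.1 supplies a constant $\spec>0$, uniform in the sphere radius, such that for every smooth $g$
\[
\langle [g - \langle g\,|\, q,e\rangle]^2\,|\,q,e\rangle \;\leq\; \spec^{-1}\, \langle g,-S^0 g \,|\, q,e\rangle.
\]
Applying this factor by factor and summing (Efron--Stein for independent coordinates) gives
\[
\langle [f - \langle f\,|\,\be,\bq\rangle_{\beta,\ve}]^2 \,|\, \be,\bq\rangle_{\beta,\ve} \;\leq\; \spec^{-1}\,\langle f,-Sf\,|\,\be,\bq\rangle_{\beta,\ve},
\]
and integrating against the $(\be,\bq)$-marginal of $\mu_{\beta,\ve}$ removes the conditioning, yielding the Lemma.

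The only delicate point is the uniformity of $\spec$ over the range of the sphere parameters: for the velocity-flip \eqref{gen-flip} and for the renewal dynamics \eqref{eq:renewalvel} one trivially has $\spec=1$ independently of the radius, and for the spherical Laplacian the required uniformity is built into the spectral gap assumption made in Section~2.1. No infinite-dimensional technicality arises since the locality of $f$ restricts the tensorization to finitely many factors.
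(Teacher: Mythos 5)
Your proof is correct and follows essentially the same route as the paper: both arguments reduce the claim to the single-site spectral gap of $S^0$ by decomposing the variance of $f-\langle f\mid \be,\bq\rangle_{\beta,\ve}$ site by site and summing the single-site Dirichlet forms $\langle f,(-S^x)f\rangle_{\beta,\ve}$, the paper doing this via a telescoping (martingale-type) decomposition of conditional expectations, while you make the product structure of $\mu_{\beta,\ve}(\cdot\mid \be,\bq)$ (independent uniform sphere measures) explicit and invoke Efron--Stein tensorization. The uniformity in $(q,e)$ of the single-site gap, which you flag, is precisely the assumption the paper's proof also relies on.
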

\begin{proof}
For any local function let $\psi_f=f-\langle f\;|\;e,q \rangle_{\beta,\ve}$, then
\[
\begin{split}
\langle \psi_f^2 \rangle_{\beta, \ve} &=\sum_{x}\left\langle \, f\, \big\{ \,  \langle f\;|\;\{e_z,q_z\}_{ z\leq x} \rangle_{\beta,\ve} -\langle f\;|\;\{e_z,q_z\}_{ z< x} \rangle_{\beta,\ve} \, \big\}\, \right\rangle_{\beta,\ve}\\
&\leq 2\sum_{x}\left\langle \,  \big[ \langle f\;|\;\{e_z,q_z\}_{ z\leq x} \rangle_{\beta,\ve} \big]^2 \,  \right\rangle_{\beta,\ve}.
\end{split}
\]
Thus, remembering the spectral gap property for each $S^x$, we get that
\[
\begin{split}
\langle f,  -S f\rangle_{\beta,\ve}\;&\geq \sum_x\langle f , (-S^x) f\rangle_{\beta,\ve} 
\geq \sum_x2\spec\, \left\langle \, \big[ \, \langle f\;|\;\{e_z,q_z\}_{ z\neq x} \rangle_{\beta,\ve} \, \big]^2 \right\rangle_{\beta,\ve}\\
&\geq \spec \langle \psi_f^2 \rangle_{\beta,\ve}^2.
\end{split}
\]
\end{proof}
If needed, the above result can be extended to a more general class of functions by density.

\section{The Green-Kubo formula}
\label{sec:green-kubo-formula-1}

The argument of Section 5 in \cite{BO11}, that will be recalled and extended shortly,
 gives the convergence of the thermal conductivity defined by the  
Green-Kubo formula\footnote{Here and in the following we work in units in which the Boltzmann constant equals one.}  
\begin{equation}
  \label{eq:gk1}
  \kappa(\ve, \noise) = \beta^2 \ve^2 \int_0^\infty \sum_{x\in \ZZ} \mathbb
  E_{\beta,\ve} \left( j_{x,x+1}(t) j_{0,1}(0) \right) dt.
\end{equation}
Here $\mathbb E_{\beta,\ve}$ indicates the expectation of the infinite
dynamics in equilibrium at temperature $\beta^{-1}$. The convergence of
 the integral in \eqref{eq:gk1} is in fact defined as 
\begin{equation}
  \label{eq:1}
  \lim_{\nu\to 0}\, \ll  j_{0,1},(\nu-L_\ve)^{-1} j_{0,1} \gg_{\beta,\ve}.
\end{equation}
Moreover, since the derivatives of $V$ are assumed to be uniformly bounded, there exists a constant $C>0$ such that
\begin{equation}
  \label{eq:15}
  \sup_{\nu>0}\, \ll j_{0,1},(\nu- L_\ve)^{-1} j_{0,1} \gg_{\beta,\ve}  
  \ \le  \ \frac C \noise .
\end{equation}
The above facts follow from the next proposition since $j_{0,1} \in {\cH}_{\ve}^a$ so that $$ \ll j_{0,1},(\nu- L_\ve)^{-1} j_{0,1} \gg_{\beta,\ve} =  \ll j_{0,1}\, , \,  {\mc P}_{\ve}^a (\nu- L_\ve)^{-1} j_{0,1} \gg_{\beta,\ve} .$$  
\begin{prop} 
\label{prop:31}
Let $\spec>0$ be the constant appearing in the statement of Lemma \ref{eq:gapL2}. There exists $\ve_0, C>0$ such that, for all $\ve\in[0,\ve_0]$, there exists a bounded operator 
$\cT_\ve$ from $\cH^{-1}_\ve$ to $\cH^1_\ve$, with norm bounded by $\noise^{-1}\spec^{-\frac 12}$,  such that
\begin{equation}
\label{eq:taepsilon}
\cT_\ve g=\lim_{\nu\to 0} \cP_{\ve}^a (\nu-L_\ve)^{-1} g.
\end{equation}
Further, for each smooth local function $h\in\cH_0^1\cap\cH_0$ such that $\cP_0^a h=0$, we have
\begin{equation}\label{eq:identity}
\cT_0 L_0h=\cT_0^*L_0^*h=0,
\end{equation}
where $L_0^*=-A+\noise S$ and $\cT_0^*=\lim_{\nu \to 0} {\cP}_0^a (\nu -L_0^*)^{-1}$ are the adjoint operators of $L_0$ and $\cT_0$ in $L^2 (\mu_{\beta,0})$.
\end{prop}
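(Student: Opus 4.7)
The proof naturally splits into two parts: the construction and boundedness of $\cT_\ve$, and the kernel identity \eqref{eq:identity}.

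For the first part, I would proceed by a resolvent--variational estimate. Fix a smooth local function $g$ and $\nu > 0$, and set $u_\nu := (\nu - L_\ve)^{-1} g$, which is well defined in $\cH_\ve$ since $L_\ve$ generates a strongly continuous contraction semigroup there. The crucial structural fact is the invariance of $\mu_{\beta,\ve}$ under $L_\ve$, equivalently $L_\ve + L_\ve^* = 2\noise S$, so that $A + \ve G$ is skew-adjoint and $\noise S$ is self-adjoint and nonpositive in $\cH_\ve$. Testing the equation $(\nu - L_\ve) u_\nu = g$ against $u_\nu$ yields
\[
\nu \, \ll u_\nu, u_\nu \gg_{\beta,\ve} \;+\; \noise \, \ll u_\nu, -S u_\nu\gg_{\beta,\ve} \;=\; \ll u_\nu, g \gg_{\beta,\ve} \;\leq\; \|u_\nu\|_1 \, \|g\|_{-1},
\]
and Young's inequality gives the uniform-in-$\nu$ bound $\|u_\nu\|_1 \leq \noise^{-1} \|g\|_{-1}$. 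The factor $\spec^{-1/2}$ in the claimed operator norm enters when bounding the $\cH_\ve$-norm of $u_\nu$ by its $\cH^1_\ve$-norm via Lemma~\ref{eq:gapL2}, which is needed to interpret the pairing with $g \in \cH^{-1}_\ve$.

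To establish convergence of $\cP^a_\ve u_\nu$ in $\cH^1_\ve$ as $\nu \to 0$, I would follow a Kipnis--Varadhan-type argument. The a priori bound provides weak precompactness in $\cH^1_\ve$; passing to a subsequential weak limit $U$, the resolvent equation tested against smooth local functions, together with the skew/self-adjoint decomposition of $L_\ve$, characterizes $U$ uniquely by a variational formulation, so the whole net converges weakly to $U =: \cT_\ve g$. Strong convergence in $\cH^1_\ve$ then follows from convergence of norms: since $\nu \|u_\nu\|_{\cH_\ve}^2 \to 0$, the energy identity gives $\noise \|u_\nu\|_1^2 \to \ll U, g\gg = \noise \|U\|_1^2$. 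Density then extends the definition from smooth local $g$ to arbitrary $g \in \cH^{-1}_\ve$, while preserving the norm bound.

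For the identity \eqref{eq:identity} at $\ve = 0$, I would apply the resolvent identity
\[
(\nu - L_0)^{-1} L_0 h \;=\; -h + \nu (\nu - L_0)^{-1} h,
\]
project onto $\cH^a_0$, and send $\nu \to 0$. The hypothesis $\cP^a_0 h = 0$ kills the first term; since the smooth local $h \in \cH_0$ embeds continuously into $\cH^{-1}_0$, the first part of the proposition shows that $\cP^a_0 (\nu - L_0)^{-1} h$ is uniformly bounded in $\cH^1_0$, and multiplication by $\nu$ drives the remaining term to zero, yielding $\cT_0 L_0 h = 0$. The identity $\cT_0^* L_0^* h = 0$ follows verbatim upon noting that $L_0^* = -A + \noise S$ also generates a strongly continuous contraction semigroup on $\cH_0$ (the time-reversed Markov process under $\mu_{\beta,0}$), so that the construction of $\cT_0^*$ mirrors that of $\cT_0$. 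The main obstacle is the identification of the weak limit $U$ together with the upgrade to strong convergence: because $L_\ve$ is not self-adjoint and $\cP^a_\ve$ does not commute with $L_\ve$, the variational characterization must track both the symmetric and antisymmetric components through the coupling $A + \ve G$, and the $\ve$-dependence of $\mu_{\beta,\ve}$ (handled via its analyticity in $\ve$) is what determines the admissible range $\ve \in [0, \ve_0]$.
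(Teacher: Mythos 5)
Your overall skeleton for the first part (energy estimate giving $\ll u_\nu,u_\nu\gg_1\le \noise^{-2}\|g\|_{-1}^2$, weak compactness, identification of the limit, upgrade to strong convergence) is the same as the paper's, but the two steps you merely assert are exactly the hard ones, and the route you indicate for them would stall. There is no sector condition here: the a priori bound controls only the $\cH^1_\ve$ seminorm of $u_\nu$ together with $\nu\ll u_\nu,u_\nu\gg_{\beta,\ve}\le C$, so the symmetric part of $u_\nu$ (in particular its component measurable with respect to $(\bq,\be)$, on which $\ll\cdot,\cdot\gg_1$ vanishes) is \emph{not} bounded in $\cH_\ve$. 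Consequently ``testing the resolvent equation against smooth local functions'' does not pass to the limit: the pairing $\ll L_\ve^*\vf,u_\nu\gg_{\beta,\ve}$ involves precisely that uncontrolled component, and $L_\ve^*\vf$ has no reason to belong to $\cH^{-1}_\ve$. Likewise $\nu\ll u_\nu,u_\nu\gg_{\beta,\ve}\to 0$ does not follow from the energy estimate alone. The paper handles both points simultaneously with a two-parameter trick absent from your sketch: writing the resolvent equations for $f_\nu$ and $f_\mu$, pairing each with the opposite-parity part of the other and summing, the dangerous terms involving $(A+\ve G)$ acting between symmetric and antisymmetric parts cancel by skew-symmetry; letting $\nu\to0$ then $\mu\to0$ yields $\noise\ll f_*,f_*\gg_1=\ll f_*,g\gg_{\beta,\ve}$, and from this single identity uniqueness of the weak limit, the vanishing of $\nu\ll f_\nu,f_\nu\gg_{\beta,\ve}$, and strong convergence are all deduced. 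Your Kipnis--Varadhan appeal presupposes these conclusions rather than deriving them. (Minor: Lemma \ref{eq:gapL2} is stated in $L^2(\mu_{\beta,\ve})$ and only on the sector with vanishing conditional expectation; to use it in $\cH_\ve$ one needs the averaging step $F_L=L^{-1/2}\sum_{|y|\le L}\tau_y f$, and the resulting gap bounds $\cP^a_\ve u_\nu$, not $u_\nu$ itself.)

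In the second part there is a concrete error: you justify $\nu\,\cP^a_0(\nu-L_0)^{-1}h\to0$ by claiming that a smooth local $h\in\cH_0$ embeds continuously into $\cH^{-1}_0$. This is false precisely for the functions at hand: $h$ is symmetric ($\cP^a_0h=0$), and for instance a nonzero function of the energies has $\ll h,(-S)h\gg_{\beta,0}=0$, hence $\|h\|_{-1}^2=\sup_g\{2\ll h,g\gg_{\beta,0}-\ll g,g\gg_1\}=+\infty$. Only antisymmetric elements of $\cH_0$ lie in $\cH^{-1}_0$ (via the spectral gap, Remark \ref{rem:boundedlocdf}); this is exactly why the hypothesis is $h\in\cH^1_0\cap\cH_0$: it guarantees $L_0h=Ah+\noise Sh\in\cH^{-1}_0$ (with $\|Sh\|_{-1}\le \ll h,h\gg_1^{1/2}$ and $Ah$ antisymmetric), not that $h$ itself is in $\cH^{-1}_0$. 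The paper's remedy is to dualize: pair $\cT_0L_0h$ against a smooth local $\vf$, use the resolvent identity as you do, and move the resolvent onto $\cP^a_0\vf\in\cH^{-1}_0$, so that the error term becomes $\nu\ll(\nu-L_0^*)^{-1}\cP^a_0\vf,h\gg_{\beta,0}$, which vanishes by the estimate $\nu\ll(\nu-L_0^*)^{-1}\cP^a_0\vf,(\nu-L_0^*)^{-1}\cP^a_0\vf\gg_{\beta,0}\to0$ established in the first part applied to $L_0^*$ (an alternative repair would be an Abelian mean ergodic argument showing the limit of $\nu(\nu-L_0)^{-1}h$ lies in $\ker L_0$, hence is symmetric, but you would still have to supply it). So your resolvent-identity plan is the right shape, but its key justification must be replaced.
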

\begin{proof}
We follow a strategy put forward in \cite{BO11}.

The first step is to show that $S$ has a spectral gap on $\cH^a_\ve$. Let $f$ be a local function such that $\cP^s_\ve f=0$, note that this implies $\langle f\;|\;\bq,\be\; \rangle_{\beta,\ve}=0$. 
 Set $F_L= L^{-1/2}\sum_{|y|\leq L}\tau_y f$, and note that $\cP^s_\ve
 F_L=0$, then by Lemma \ref{eq:gapL2} we have
 \begin{equation*}
 \spec  \langle F_L^2 \rangle_{\beta,\ve} \le  \langle F_L ,  (-S)
   F_L \rangle_{\beta,\ve}
 \end{equation*}
Then since, for $L\to\infty$, we have that 
$$\langle F_L^2 \rangle_{\beta,\ve} \to \ll f, f
\gg_{\beta,\ve}, \quad \langle F_L (-S) F_L \rangle_{\beta,\ve}\to \ll
f, (-S) f \gg_{\beta,\ve},$$ the spectral gap property follows.

We then study $\nu f_\nu-L_\ve f_\nu= g$. Since $A_\ve$ inverts the
parity and $S$ preserves it, 
for each function $\vf \in {\cH}_{\ve}$ we set $\vf^+=\cP_\ve^s \vf$ and $\vf^-=\cP_\ve^a \vf$. We can then write
\begin{equation}\label{eq:oldsplit}
\begin{split}
&\nu\ll f_{\mu}^+,f_{\nu}^+\gg_{\beta,\ve}-\ll f_{\mu}^+, A f_{\nu}^-\gg+\noise\ll f_{\mu}^+,f_{\nu}^+\gg_1=\ll f_{\mu},g^+\gg_{\beta,\ve}\\
&\mu\ll f_{\nu}^-,f_{\mu}^-\gg_{\beta,\ve}-\ll f_{\nu}^-, A f_{\mu}^+\gg+\noise\ll f_{\mu}^-,f_{\nu}^-\gg_1=\ll f_{\nu},g^-\gg_{\beta,\ve}.
\end{split}
\end{equation}
Summing the above equations we have
\begin{equation}\label{eq:baseone}
\nu\ll f_{\mu}^+,f_{\nu}^+\gg+\mu\ll f_{\nu}^-,f_{\mu}^-\gg+\noise\ll f_{\mu},f_{\nu}\gg_1=\ll f_{\mu},g^+\gg_{\beta,\ve}+\ll f_{\nu},g^-\gg_{\beta,\ve}
\end{equation}
Putting $\mu=\nu$ we get
\[
\nu\ll f_{\nu}^2\gg_{\beta,\ve}+\noise\ll f_{\nu},f_{\nu}\gg_1\leq \| f_{\nu}\|_1\| g\|_{-1}.
\]
Hence $f_\nu$ is uniformly bounded in $\cH^1_\ve$ and by the spectral gap property so is  $f_\nu^-$ in $\cH_\ve$. Moreover, $\nu f_{\nu}$ converges strongly to $0$ in ${\mc H}_{\ve}$. We can then extract weakly convergent subsequences. Taking first the limit, in \eqref{eq:baseone}, $\nu\to 0$ and then $\mu\to 0$ along one such subsequences (converging to $f_*$) we have 
\[
\noise\ll f_{*},f_{*}\gg_1=\ll f_{*},g\gg_{\beta,\ve}.
\]
Taking again the limit along such subsequence, with $\mu=\nu$, we have then
\begin{equation}\label{eq:smallL2norm}
\lim_{\nu\to 0} \nu\ll f_{\nu}^2\gg_{\beta,\ve}=0.
\end{equation}
Next, taking the limit along different weakly convergent subsequences (let $f^*$ be the other limit) we have
\[
\noise\ll f_{*},f^{*}\gg_1=\ll f_{*},g^+\gg_{\beta,\ve}+\ll f^{*},g^-\gg_{\beta,\ve}
\]
and, exchanging the role of the two sequences
\[
2\noise\ll f_{*},f^{*}\gg_1=\ll f_{*},g\gg_{\beta,\ve}+\ll f^{*},g\gg_{\beta,\ve}=\noise\ll f_{*},f_{*}\gg_1+\noise\ll f^{*},f^{*}\gg_1
\]
which implies $f_*=f^*$, that is all the subsequences have the same limit. Finally, arguing similarly to the above, for $\nu(f_{\nu}-f_{\mu})+(\nu-\mu)f_{\mu}-L_\ve (f_{\nu}-f_{\mu})=0$ we obtain that the convergence takes place in the strong norm.

We are left with the proof of \eqref{eq:identity}. Note that $L_0 h= A h+\noise S h$, then $S h\in\cH_0^{-1}$ while $Ah\in\cH_0^a\subset \cH_0^{-1}$. We can thus apply the previous consideration to $g=L_0h$ and obtain, for each smooth local function $\vf\in\cH_0$,
\[
\begin{split}
\ll\vf,\cT_0 g\gg_{\beta,0}&=\lim_{\nu\to 0}\ll\cP_0^a\vf, (\nu-L_0)^{-1} L_0 h\gg_{\beta,0}\\
&=-\ll\cP_0^a\vf, h\gg_{\beta,0}+\lim_{\nu\to 0}\nu \ll(\nu-L_0^*)^{-1}\cP_0^a\vf, h\gg_{\beta,0}.
\end{split}
\]
Next, note that $\cP_0^a\vf\in\cH_0^{-1}$ and that all the above discussion applies verbatim to the operator $L_0^*$, thus \eqref{eq:smallL2norm} implies that the above limit is zero, hence the claim of the Proposition (the proof for the adjoint being the same).
\end{proof}
\begin{rem} 
\label{rem:boundedlocdf}
Since if $g \in {\mc H}^a_{\ve}$, by the spectral gap property, $\spec\|g\|_{-1}^2\leq  \ll g^2\gg_{\beta,\ve}$, it follows that $\cT_\ve$ is a bounded operator on $\cP_{\ve}^a\cH_\ve$. Also note that by essentially the same proof it is a bounded operator on $\cP_\ve^a L^2$.
\end{rem}
\begin{rem}
\label{rem:localt0}
Note that the operator ${\mc T}_0$ is a local operator in the sense that if $g$ is a local function then ${\mc T}_0g$ is also a local function.
\end{rem}

\section{Formal expansion of $\kappa(\ve, \noise)$}
\label{sec:form-expans-green}

It follows from \eqref{eq:15} that $\kappa(\ve, \noise)$  
is of order $\ve^2$, i.e.
\begin{equation}
\label{eq:limsup}
\begin{split}
&\limsup_{\ve \to 0} \ve^{-2}{\kappa} (\ve,\noise)
= \hat{\kappa}_2 (\noise)<+\infty.
\end{split}
\end{equation}
We conjecture that the limit exists and it is given by
$\kappa_2(\noise)$, the lowest term in the formal expansion of
${\kappa} (\ve,\noise)$ in powers of $\ve$:


\begin{equation}
  \label{eq:2}
  \kappa(\ve, \noise) = \sum_{n=2}^\infty
  \ve^n\kappa_n(\noise) .    
\end{equation}

It turns out that, for calculating the terms in this
expansion, it is convenient to choose $\nu = \ve^2\lambda$ in
\eqref{eq:15}, for a $\lambda >0$, and solve the resolvent equation 
\begin{equation}\label{eq:poisson}
(\lambda\ve^2-L_\ve) u_{\lambda,\ve} = \ve j_{0,1}
\end{equation}
for the unknown function $u_{\lambda,\ve}$. The reason for
considering $\lambda>0$ is to have well defined solutions also for the
infinite system. The factor $\ve^2$ is the natural scaling in view of
the subsequent computations.  In fact, in order to see an
  energy diffusion, we need to look at times of order $\ve^{-2}$, as
  suggested by the weak coupling limit \cite{LO} \cite{DL}. 

We have already remarked that $\mc P^a_\ve u_{\lambda,\ve}$ converges in
$\mc H_\ve$, for any fixed $\ve>0$.

Let us formally assume that a solution of \eqref{eq:poisson} is in the form 
\begin{equation}
  \label{eq:6}
 u_{\lambda,\ve} = \sum_{n \ge 0} (v_{\lambda,n} + w_{\lambda,n}) \ve^n,
\end{equation}
where $\Pi v_{\lambda,n} =Q w_{\lambda,n}=0$. Here $\Pi=\Pi_0$ and $Q=
Q_0$ refer to the uncoupled measure $\mu_{\beta,0}$.
This should be considered as an ansatz, and the manipulations
of the rest of this section are done without worrying about the
existence and the local regularity of the functions involved.

Given the expression (\ref{eq:6}) we can, in principle, use it in
  \eqref{eq:1} to write 
\begin{equation}\label{eq:kcoef}
\begin{split}
\beta^{-2} \kappa(\ve, \noise)& =\ve^2\lim_{\lambda\to 0}\ll
j_{0,1},(\lambda\ve^2-L_\ve)^{-1}j_{0,1}\gg_{\beta,\ve}\\ 
&=\lim_{\lambda\to 0}\sum_{n\geq0}\ve^{n+1}\ll j_{0,1},v_{\lambda,n}+w_{\lambda,n}\gg_{\beta,\ve}\\
&=\sum_{n\geq 1}\lim_{\lambda\to 0}\ve^{n}\ll j_{0,1},v_{\lambda,n-1}\gg_{\beta,\ve}
\end{split}
\end{equation}
where we have used the fact that $\ll j_{0,1},w_{\lambda,n}\gg_{\beta,\ve} =0$, by symmetry\footnote{ We will see that $w_{\lambda,0}\in \cH^s_\ve$, hence $j_{0,1}w_{\lambda,0}$ is an integrale function antisymmetric in $p$. For the higher order terms the fact that the expectation is well defined is a conjecture.} and we have, 
arbitrarily, exchanged the limit with the sum. 

Note that \eqref{eq:kcoef} is not of the type \eqref{eq:2} since the terms in the expansion depend
themselves on $\ve$. To identify the coefficients $\kappa_n$ we
would need to expand in $\ve$ also the expectations. The existence of such an expansion is not
obvious, in spite of the assumption on the Gibbs measure, since we
will see that the functions $v_{\lambda,n}$ are non local. Nevertheless, we conjecture that
\begin{equation}
  \label{eq:3}
\begin{split}
 \beta^{-2} \kappa (\ve, \noise)&={\ve }^2 \lim_{\lambda \to 0} \ll j_{0,1}, v_{\lambda,1} 
\gg_{\beta,\ve} \; +\; {o} (\ve^2)\\
&={\ve }^2 \lim_{\lambda \to 0} \ll j_{0,1}, v_{\lambda,1} 
\gg_{\beta,0} \; +\; {o} (\ve^2).
\end{split}
\end{equation}
Recalling (\ref{eq:limsup}) this yields the formula 
\begin{equation}\label{eq:kappa2}
\hat\kappa_2(\noise)=\kappa_2(\noise)= \beta^2 \lim_{\lambda \to 0}
\ll j_{0,1}, v_{\lambda,1} \gg_{\beta,0}.
\end{equation}
In fact, in Section \ref{sec:green-kubo-formula2} we will prove the second equality of \eqref{eq:3} in the special case $W=0$.

Our next task is then to find explicit formulae for $v_{\lambda,n}, w_{\lambda,n}$. Observe that $L_0 w_{\lambda,n} = 0$ for all $n\ge 0$ and that  $\Pi G\mc P^s =0$,
since $G$, defined by \eqref{eq:geneve}, changes the symmetry. 
Accordingly
\begin{equation}
  \label{eq:7}
  \begin{split}
  &v_{\lambda,0}=0\\
  & - L_0 v_{\lambda,1}- G w_{\lambda,0}=j_{0,1}\\
   &\lambda w_{\lambda,n-2}+\lambda v_{\lambda,n-2} - L_0
   v_{\lambda,n}-Gv_{\lambda,n-1}-G w_{\lambda,n-1}=0, \quad n\ge 2.
   \end{split}
\end{equation}
Let us consider first the last equation for $n=2$. Note that $\Pi L_0=0$, thus applying $\Pi$ we have, together with the second of \eqref{eq:7},
\begin{equation}\label{eq:stepone}
\begin{split}
- L_0 v_{\lambda,1}&=j_{0,1}+G w_{\lambda,0}\\
\lambda w_{\lambda,0} &=\Pi Gv_{\lambda,1}.
\end{split}
\end{equation}
Equations \eqref{eq:stepone} can be written as\footnote{The inverse of $-L_0$ should be understood as the limit of $(\nu-L_0)^{-1}$, when $\nu\to 0$, in some appropriate topology.}
\begin{equation}\label{eq:steponeb}
  \begin{split}
 \mc P^a v_{\lambda,1}&= \mc P^a(-L_0)^{-1}(j_{0,1}+ G w_{\lambda,0})\\
\lambda w_{\lambda,0} &=\Pi G \mc P^a v_{\lambda,1}.
\end{split}
\end{equation}

We can then apply $\Pi G$ to obtain\footnote{Note however that there is no obvious reason why the functions should be in the domain of $G$. Thus the objects can only be interpreted as distributions. For the moment we do not worry about this issue since we are just doing formal computations.}
\[
\begin{split}
\Pi G \mc P^a v_{\lambda,1}&= \Pi G \mc P^a
(-L_0)^{-1}\left[j_{0,1}+G w_{\lambda,0}\right]\\ 
\lambda w_{\lambda,0} &=\Pi G \mc P^a v_{\lambda,1}.
\end{split}
\]
It is then natural to consider the operator
\begin{equation}
  \label{eq:8}
  \mathcal L =   \Pi G \mc P^a (-L_0)^{-1}G\Pi.
\end{equation}
We will show in Proposition \ref{prop:Lweak} below
that the operator $\cL$ is a generator of a Markov process so that
$(\lambda- {\cL})^{-1}$ is well defined for $\lambda>0$. Hence, we have 
\begin{equation} \label{eq:w0}
\begin{split}
w_{\lambda,0}&=(\lambda-\cL)^{-1}\Pi G \mc P^a_0 (-L_0)^{-1}j_{0,1}\\
\mc P^a v_{\lambda,1}&= \mc P^a (-L_0)^{-1}\left[j_{0,1}+G w_{\lambda,0}\right].
\end{split}
\end{equation}
We can now analyse the case $n>2$. 
If we apply $\Pi$ and $Q$ at the last equation of \eqref{eq:7} we obtain
\[
\begin{split}
\lambda w_{\lambda,n-2}&=\Pi Gv_{\lambda,n-1}\\
L_0 v_{\lambda,n}&=\lambda v_{\lambda,n-2} -QGv_{\lambda,n-1}-G w_{\lambda,n-1}.
\end{split}
\]
Arguing as before we have
\begin{equation}
  \label{eq:hereitis}
\begin{split}
&w_{\lambda,n}=(\lambda-\cL)^{-1}\Pi G (-L_0)^{-1}\left[-\lambda
  v_{\lambda, n-1} + QG v_{\lambda,n}\right], \qquad n\ge 1\\
&v_{\lambda,n+1}= (-L_0)^{-1}\left[-\lambda v_{\lambda, n-1} + G
  w_{\lambda,n} + QG v_{\lambda,n}\right], \qquad n\ge 1.
\end{split}
\end{equation}
In the following we will discuss explicitly only $\cP^a v_{\lambda,1}$ and $w_{\lambda,0}$, showing that they are well defined. The study of the higher order terms is harder due to the presence of the $QGv_{\lambda,n}$ which, on the one hand, depends on the symmetric part of $v_{\lambda,n}$ (on which we have poor bounds) and, on the other hand, has no obvious reason to be in $\cH^{-1}$ (on which we know how to invert $L_0$). In order to make further progresses one must, at least, extend Proposition \ref{prop:31} to all the functions that have zero average with respect to each microcanonical measure. This, in principle, can be done in specific cases (e.g., see \cite{LO}) but at the price of dealing with less conventional functional spaces, hence making the argument much more delicate.



\section{The operator ${\cL}$  }
\label{sec:markov-operator-mc}

In this section we rigorously identify the quadratic (Dirichlet) form
associated to the operator $\cL= \Pi G  {\mc T}_0 G\Pi$ defined in \eqref{eq:8}. 

Let us denote by $\rho_{\beta} (d \be)$ the distribution of the internal
energies $\be=\{ e_x \, ; \, x \in \ZZ \}$ under the Gibbs measure
$\mu_{\beta,0}$. It can be written in the form  
\begin{equation}
d\rho_{\beta} (\be)= \prod_{x \in \ZZ} Z_\beta^{-1} \exp( - \beta e_x -U(e_x)) de_x\label{eq:rhobeta}
\end{equation}
for a suitable function $U$. We denote the formal sum $\sum_x U(e_x)$
by ${\mc U}:={\mc U} (\be)$. We denote also, for a given value of the
internal energy ${\tilde e}_x$ in the cell $x$,  by $\nu_{{\tilde e}_x}^{x}$ the microcanonical probability measure in the cell
$x$, i.e. the uniform probability measure on the manifold
$$
\Sigma_{{\tilde e}_x}:=\{ (q_x,p_x) \in \Omega \, ; \, e_x (q_x,p_x)
\, =\, {\tilde e}_x \}.
$$ 
Note that it is not obvious that the operator is well defined in $\cH_\ve$ since we do not know if $\cT_0$ maps the range of $G$ in its domain, yet, by Proposition \ref{prop:31}, it is well defined as an operator from smooth local functions to distributions.

We are going to identify $\cL$ as the Markov generator of a Ginzburg-Landau dynamics
 \begin{equation}
 {\cL}_{\rm{GL}} = \sum_x e^{\cU} (\partial_{e_{x+1}} - \partial_{e_x} )
 \left[e^{-\cU} \gamma^2 (e_x, e_{x+1}) 
(\partial_{e_{x+1}}  - \partial_{e_x} ) \right],\label{eq:12}
  \end{equation}
where
\begin{equation}
\label{gamma square}
  \gamma^2 (e_0, e_{1})= \int_{\Sigma_{e_0} \times
    \Sigma_{e_{1}}} \big(  j_{0,1}\;  {\mc T}_0\,   j_{0,1} \big) \; d\nu_{e_0}^0 d\nu_{e_{1}}^1. 
\end{equation}
We recall that ${\mc T}_0$ is defined by (\ref{eq:taepsilon}) and that ${\mc T}_0 j_{0,1}$ is a local function. Since $\nabla V$ is uniformly bounded, $j_{0,1} \in L^{2} ( \mu_{\beta,0})$ so that $j_{0,1} \in L^{2} ( \nu_{e_0}^0 \otimes \nu_{e_{1}}^1)$ for almost every $e_0,e_1$. Recalling that ${\mc T}_0$ is a bounded operator on ${\mc P}^a L^2$ (see Remark \ref{rem:boundedlocdf}), we conclude that ${\mc T}_0 j_{0,1} \in L^2  ( \mu_{\beta,0})$ and thus that ${\mc T}_0 j_{0,1} \in L^{2} ( \nu_{e_0}^0 \otimes \nu_{e_{1}}^1)$ for almost every $e_0,e_1$. Therefore, $\gamma^2 (e_0,e_1)$ is finite for almost every $e_0, e_1$ and $\gamma$ belong both to $L^2$ and $\cH_\ve$.

Formally the previous formula reads
\begin{equation}
\label{gamma square1}
  \gamma^2 (e_0, e_{1})= \int_0^\infty dt\;  \int_{\Sigma_{e_0} \times
    \Sigma_{e_{1}}}  j_{0,1}\;  (e^{t L_0} j_{0,1}) \; d\nu_{e_0}^0 d\nu_{e_{1}}^1 
\end{equation}
where $e^{t L_0}$ denotes the semigroup of the uncoupled dynamics
generated by $L_0$.  

The operator ${\cL}_{\rm{GL}} $ is well defined only if $\gamma^2$ has some regularity
properties, that are actually proven in specific examples
\cite{LO}\cite{DL}.
In the generality we consider in this work we can only prove that
\eqref{gamma square} is well defined and that the corresponding drift
\begin{equation}\label{eq:alpha}
\alpha(e_x, e_{x+1}) = e^{\mc U({\bf e})} (\partial_{e_{x+1}}
    - \partial_{e_x} ) \left[e^{-\mc U({\bf e})} \gamma(e_x,
      e_{x+1})^2\right]. 
\end{equation}
is a well defined distribution. 
So we show that the Dirichlet forms associated to $\cL$ and $\cL_{GL}$ coincide. Then in the cases where $\gamma^2$ is proven smooth \eqref{eq:12} is well defined and $\cL=\cL_{GL}$. A simple computation shows that
\begin{equation}
\label{eq:DF-GL}
  \begin{split}
    \ll g, (-{\cL}_{\rm{GL}})  f \gg_{\beta,0} &= \left\langle \gamma^2 (e_0, e_1)
      \left[(\partial_{e_1} 
        -\partial_{e_0})\Gamma_f\right] \left[(\partial_{e_1}
        -\partial_{e_0})\Gamma_g \right]\right\rangle_{\beta,0}
  \end{split}
\end{equation}
where for any function $f$ of the energies, we denote by $\Gamma_f = \sum_x
\tau_x f$ (intended as a formal sum). 

\begin{prop}\label{prop:Lweak}
For each local smooth functions $f,g$ of the energies only we have
\begin{equation}\label{dir-form}
\ll g ,(-\cL) f \gg_{\beta,0} = \ll g, (-\cL_{\rm{GL}}) f \gg_{\beta,0}. 
  \end{equation}
In addition, $\alpha (e_x,e_{x+1}) =\Pi G \mc T_0\,   j_{x,x+1}$ 
is a well defined distribution equal to \eqref{eq:alpha}.
\end{prop}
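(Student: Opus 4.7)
My starting point is an explicit formula for $Gf$ when $f$ is a smooth local function of the energies only. Combining the chain rule $\partial_{p_x} f = p_x\,\partial_{e_x} f$ with the algebraic identities
\[
p_x\cdot\nabla V(q_{x+1}-q_x)=-j_{x,x+1}-\tfrac{1}{2}L_0 V_{x,x+1},\qquad p_{x+1}\cdot\nabla V(q_{x+1}-q_x)=-j_{x,x+1}+\tfrac{1}{2}L_0 V_{x,x+1},
\]
where $V_{x,x+1}:=V(q_{x+1}-q_x)$ and $L_0 V_{x,x+1}=A V_{x,x+1}$ is antisymmetric in momenta, I obtain
\[
Gf=\sum_{x}\Bigl[\,j_{x,x+1}\,D_x f-\tfrac{1}{2}(L_0 V_{x,x+1})\,\bar D_x f\,\Bigr],
\]
with $D_x:=\partial_{e_{x+1}}-\partial_{e_x}$ and $\bar D_x:=\partial_{e_{x+1}}+\partial_{e_x}$, the sum being finite since $f$ is local. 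Both $A$ and $S$ commute with multiplication by a function of $\be$ (the first as a derivation annihilating $\be$, the second as an operator acting within a fixed energy shell on each cell), hence so do $L_0$ and $\cT_0$. Since each $V_{x,x+1}$ is smooth, local and lies in $\cH_0^s$, identity \eqref{eq:identity} forces $\cT_0 L_0 V_{x,x+1}=0$, killing the second sum. What survives is the key formula
\[
\cT_0 Gf=\sum_x (D_x f)\,u_x,\qquad u_x:=\cT_0 j_{x,x+1},
\]
with each $u_x$ an antisymmetric local function supported on cells $\{x,x+1\}$ by translation invariance.

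\textbf{Dirichlet form identity.} For $g$ also a smooth local function of the energies, using $\Pi g=g$ and the self-adjointness of $\Pi$ in $\ll\cdot,\cdot\gg_{\beta,0}$,
\[
\ll g,(-\cL)f\gg_{\beta,0}=-\sum_x\ll g,G\bigl[(D_x f)\,u_x\bigr]\gg_{\beta,0}.
\]
The plan is to expand via the derivation property $G[(D_xf)u_x]=(D_xf)Gu_x+u_x G(D_xf)$, substitute once more the Step~1 expression for $G(D_xf)$, and check that the $L_0 V_{y,y+1}$-contributions (multiplied by the symmetric-in-momenta factor $u_x\,\bar D_y D_xf$) combine into $L_0$ of a symmetric-in-momenta local quantity, so that another application of \eqref{eq:identity} wipes them out. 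The surviving terms involve $\Pi[u_x\,j_{y,y+1}]$: because $\mu_{\beta,0}$ factorizes over cells and both $u_x$ and $j_{y,y+1}$ are odd in the momenta of their supports, this conditional expectation vanishes unless $y=x$, where by \eqref{gamma square} it equals $\gamma^2(e_x,e_{x+1})$. A single integration by parts against $d\rho_\beta\propto e^{-\beta\sum e_z-\cU}\prod de_z$ then assembles the remainder into
\[
\ll g,(-\cL)f\gg_{\beta,0}=\sum_x \int \gamma^2(e_x,e_{x+1})\,(D_x g)(D_x f)\,d\rho_\beta,
\]
which coincides with $\ll g,(-{\cL}_{\rm{GL}})f\gg_{\beta,0}$ by the derivation of \eqref{eq:DF-GL}.

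\textbf{Identification of $\alpha$.} The distribution $\alpha(e_x,e_{x+1}):=\Pi G u_x=\Pi G\cT_0 j_{x,x+1}$ is well defined because $u_x$ is supported on only the pair $\{x,x+1\}$, so $G u_x$ reduces to the action of the three bond operators $G_{x-1,x}$, $G_{x,x+1}$ and $G_{x+1,x+2}$ on $u_x$. I would compute $\Pi G u_x$ directly by conditioning on the energies and integrating by parts on the microcanonical surfaces $\Sigma_{e_x}$ and $\Sigma_{e_{x+1}}$; the momentum derivatives $\partial_p$ inside $G$ are then converted into energy derivatives $\partial_{e_x},\partial_{e_{x+1}}$ together with the logarithmic-derivative factor $U'(e)=-\partial_e\log\int\delta(H-e)\,dq\,dp$ that accounts for the normalization of $d\nu_{e}$. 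Using the definition $\gamma^2(e_x,e_{x+1})=\int u_x\,j_{x,x+1}\,d\nu_{e_x}^x\,d\nu_{e_{x+1}}^{x+1}$ and the same $p\cdot\nabla V=-j-\tfrac{1}{2}L_0 V$ identity of Step~1, a direct calculation yields $\alpha(e_x,e_{x+1})=e^{\cU}(\partial_{e_{x+1}}-\partial_{e_x})[e^{-\cU}\gamma^2(e_x,e_{x+1})]$, matching \eqref{eq:alpha}.

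\textbf{Main obstacle.} The principal difficulty will be rigor: $\cT_0$ is known a priori only to map smooth local functions to distributions, so expressions such as $G u_x$ or $\Pi[u_x\,L_0 V_{y,y+1}]$ are not a priori classical functions and the integrations by parts above need justification. My plan is to carry out every identity of Steps~1 and 2 with the regularized resolvent $\cP_0^a(\nu-L_0)^{-1}$ in place of $\cT_0$ for $\nu>0$, where all objects live in $\cH_0$ by Proposition~\ref{prop:31}, and only at the end to pass to $\nu\to 0$ using the strong convergence established there together with the locality of $u_x$ and $j_{x,x+1}$, which keeps the sums over $x,y$ finite throughout the pairing.
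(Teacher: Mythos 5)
Your Step 1 is exactly the paper's own computation: the identity $Gf=\sum_x j_{x-1,x}(\partial_{e_x}-\partial_{e_{x-1}})f-L_0\omega$ with $\omega$ symmetric in $\bf p$, followed by \eqref{eq:identity} and the commutation of $\cT_0$ with multiplication by functions of $\be$, is precisely \eqref{eq:G-action}--\eqref{eq:GL}. The divergence, and the gap, is in Step 2. You expand $G\bigl[(D_xf)\,u_x\bigr]=(D_xf)Gu_x+u_x\,G(D_xf)$ with $u_x=\cT_0 j_{x,x+1}$; but $u_x$ is only known to lie in $\cH_0$ (respectively $L^2(\mu_{\beta,0})$), not in the domain of $G$, so $Gu_x$ is not defined and the product rule is not available. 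This is exactly the difficulty the paper flags ("we do not know if $\cT_0$ maps the range of $G$ in its domain") and its proof is built to avoid it: instead of applying $G$ to $\cT_0 Gf$, it moves $G$ onto the smooth test function $g$ through the explicit adjoint $G^*=-G+\beta\sum_y L_0^*V(q_y-q_{y-1})$ in $L^2(\mu_{\beta,0})$, and kills the extra terms by pairing them with $\cT_0$, i.e.\ via $\cT_0^*L_0^*h=0$ from Proposition \ref{prop:31}. Your proposed repair -- redo everything with $\cP_0^a(\nu-L_0)^{-1}$ and let $\nu\to0$ -- does not close this gap: the resolvent of $L_0$ only gives membership in the domain of $L_0$ (control of $A u+\noise S u$), which does not control the individual momentum derivatives entering $G$; the noise here is a jump/flip noise, so no hypoelliptic smoothing is available, and $\mathrm{Dom}(L_0)\not\subset\mathrm{Dom}(G)$ even for $\nu>0$.

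A second, related problem is your treatment of the cross terms. The factor $u_x\,\bar D_yD_xf$ is \emph{antisymmetric} in the momenta (since $u_x\in\cH_0^a$), not symmetric as you assert, and "another application of \eqref{eq:identity}" cannot be invoked there: after the product-rule expansion the operator $\cT_0$ is buried inside $u_x$ and multiplied by other functions, so \eqref{eq:identity} (which concerns $\cT_0L_0h$ and $\cT_0^*L_0^*h$ acting on symmetric smooth local $h$) no longer applies; trying to make the cancellation rigorous forces you either to give meaning to $L_0 u_x$ or $L_0^*u_x$ (again not available classically, and involving the symmetric part of the resolvent, on which there is no uniform control), or to reorganize the pairing by adjointness -- at which point you have reproduced the paper's argument and the detour through $Gu_x$ was unnecessary. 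The same issue affects your Step 3: $\Pi G\,\cT_0 j_{x,x+1}$ can only be given sense as a distribution by duality against smooth functions of the energies (which is how the paper both defines it and identifies it with \eqref{eq:alpha}); the microcanonical integration by parts you sketch presupposes the very regularity of $\cT_0 j_{x,x+1}$ that is not known. In short, the formal identities you write are the right ones, but the proof of the proposition consists precisely in establishing them without ever differentiating $\cT_0 j_{x,x+1}$, and your route does not achieve that.
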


\begin{proof}

Let us start by noting that, for each local smooth function $f$ depending
only on the energies,
\begin{equation}\label{eq:G-action}
\begin{split}
Gf&=-\sum_x \nabla V(q_x-q_{x-1})\cdot
(p_x\partial_{e_x}-p_{x-1}\partial_{e_{x-1}})f\\ 
&=\sum_x j_{x-1,x}(\partial_{e_x}-\partial_{e_{x-1}})f\\
&\quad\quad-\frac 12\sum_x[L_0V(q_x-q_{x-1})](\partial_{e_{x}}+\partial_{e_{x-1}})f\\
&=\sum_x j_{x-1,x}(\partial_{e_x}-\partial_{e_{x-1}})f- L_0\omega
\\
&=\sum_x j_{x-1,x}(\partial_{e_x}-\partial_{e_{x-1}})f+ L_0^*\omega,
\end{split}
\end{equation}
where
\[
\omega=\frac 12\sum_x V(q_x-q_{x-1})(\partial_{e_{x}}+\partial_{e_{x-1}})f.
\]
Since $j_{0,1}(\partial_{e_0}-\partial_{e_{1}})f \in \mc H_0^{-1}$ and $\omega\in\cH_0\cap\cH^1$ with $\cP^a\omega=0$, Proposition \ref{prop:31} implies that $\mc T_0 \, Gf$ is
in $\mc H^a_0$ and it is equal to  
\begin{equation}
\label{eq:GL}
  \begin{split}
   {\mc T}_0 Gf \, = \,  \sum_x [\mc T_0\,  j_{x-1,x}] (\partial_{e_x}-\partial_{e_{x-1}})f. 
\end{split}
\end{equation}

Next, note that the adjoint of $G_y$ in $L^2 (\mu_{\beta,0})$ is given by
\begin{equation}\label{eq:G-adjoint}
  \begin{split}
    G_y^*&=-G_y-\beta\nabla V(q_y-q_{y-1})\cdot
    (p_y-p_{y-1})\\
    &=-G_y-\beta L_0 V(q_y - q_{y-1}) = -G_y + \beta L_0^* V(q_y - q_{y-1}).
  \end{split}
\end{equation}
and that $G_y^* g \in \mc H^a_0$. It follows that the adjoint of $G=\sum_y G_y$ in ${\mc H}_0$ (that we still denote by $G^*$) is given by 
$$-G + \beta \sum_y L_0^* V(q_y - q_{y-1}).$$

Thus, if $f,g$ are smooth local functions of the energies only then we have
\[
\begin{split}
  \ll g\, , \, \cL f \gg_{\beta,0} =&\ll G^* g\, , \, {\mc T}_0 Gf
  \gg_{\beta,0} \\
  =& -\ll G g\, , \,  {\mc T}_0 Gf
  \gg_{\beta,0} + \beta\, \ll \,  \sum_y L_0^* V(q_y- q_{y-1}) g\, , \,  {\mc T}_0 Gf
  \gg_{\beta,0}\\
 =&-\ll G g\, , \,  {\mc T}_0 Gf
  \gg_{\beta,0}
\end{split}
\]
where we have used Proposition \ref{prop:31} again.

By using (\ref{eq:GL}), (\ref{eq:G-action}) and Proposition \ref{prop:31} one last time we get then 
\[
\begin{split}
  \ll g\, , \, \cL f \gg_{\beta,0} =& -\sum_{x,y}\ll
  \, j_{y-1,y}(\partial_{e_y}-\partial_{e_{y-1}})g \, , \, 
  {\mc T}_0 j_{x-1,x}(\partial_{e_x}-\partial_{e_{x-1}})f
  \, \gg_{\beta,0}
  \end{split}
\]
where in the expressions above the sums are restricted to the finite
support of the corresponding functions. This can be rewritten as
\[
 \begin{split}
 & \ll g\, , \, \cL f \gg_{\beta,0} \\
 &= -\sum_{x,y,z}\langle
  \, j_{y+z-1,y+z} \tau_z \big[ (\partial_{e_y}-\partial_{e_{y-1}})g \big] \, , \, 
  {\mc T}_0 j_{x-1,x}(\partial_{e_x}-\partial_{e_{x-1}})f
  \, \rangle_{\beta,0}
  \end{split}
\]
Note that ${\mc T}_0 j_{x-1,x}$ is a local function depending only on $p_{x-1}, q_{x-1}, p_x,q_x$ and that we have
\[
\langle j_{y-1,y} {\mc T}_0 j_{x-1,x}|{\mb e} \rangle_{\beta,0}=
 \delta_{xy}\gamma^2 (e_{x-1}, e_x),
\]
and consequently we obtain that the Dirichlet form $\ll g \, , \, -\cL f \gg_{\beta,0}$ coincides with the RHS of (\ref{eq:DF-GL}). Equation \eqref{eq:alpha} is proved by similar arguments.
%
\end{proof}

We conclude this section by noting that 
the operator ${\mc L}_{\rm{GL}}$ is the generator of a 
Ginzburg-Landau dynamics which is reversible with respect to 
$\rho_{\beta}$, for any $\beta>0$. 
It is conservative
in the energy $\sum_x  e_x$ and the microscopic current corresponding to this conservation
law is given by the righthand side of (\ref{eq:alpha}). The
corresponding finite size 
dynamics appears in \cite{LO,DL} as the weak coupling limit of a
finite number $N$ (fixed) of cells weakly coupled by  a potential
$\epsilon V$ in the limit $\epsilon \to 0$ when time $t$ is rescaled
as $t \ve^{-2}$.

\begin{rem}\label{rem:core}
It should be possible to apply existing arguments to prove that $\cL_{GL}$, and hence $\cL$, is a closed operator also in $\cH_0$ and that the local smooth functions constitute a core of self-adjointness of $\mathcal L$ in $\cH_0$ (see \cite{F2} and \cite{S0}, where this
statement is proven for the more difficult model of interacting
brownian motions). A proof of this fact exceeds the scopes of the present article, thus in the following we will simply assume it.
\end{rem}
The hydrodynamic limit of the Ginzburg-Landau dynamics is then given (in the diffusive scale $tN^2$,
$N \to + \infty$), by a heat equation with diffusion coefficient
 which coincides with $\kappa_2$ as given by \eqref{eq:gke} below
 (\cite{Var},\cite{LOS}).

\section{The lowest order term $\kappa_2(\noise)$}
\label{sec:green-kubo-formula2}

In this section we restrict our study to the unpinned case $W=0$, obtaining an explicit formula for $\kappa_2(\noise)$.  In particular we prove, for this case, the second equality of \eqref{eq:3}. Note however that if one accepts the formula \eqref{eq:kappa2} for  $\kappa_2(\noise)$, then the following arguments, with $\ve=0$, yield an explicit formula for $\kappa_2(\noise)$ in quite some generality. 

We assume also in the following that $\gamma_{0,1}^{-1}\in L^2(\mu_{\ve,\beta})$, such an assumption is verified in all the examples discussed in this paper.
Also we assume that $\gamma^2$ is smooth, as proven in the examples considered in \cite{LO} and \cite{DL}. Thus we have ${\mc L}={\mc L}_{\rm{GL}}$.

Observe that by the definition of $\rho_\beta$ (given by \eqref{eq:rhobeta}), if $f$ depends only on the energies, we have  
$\langle f\rangle_{\beta,0} =\rho_{\beta} (f)$. In the following we
will denote by $\langle\cdot\rangle_\beta$ the integration with respect
to $\rho_{\beta}$. Moreover the semi-inner product $\ll \cdot, \cdot
\gg_{\beta}$ corresponding to $\langle \cdot \rangle_{\beta}$ is
defined as in (\ref{eq:sipt}).

To simplify notation, we denote $\gamma (e_x,e_{x+1})$
(resp. $\alpha(e_x, e_{x+1})$)  by $\gamma_{x,x+1}$
(resp. $\alpha_{x,x+1}$) and $w_{\lambda,0}$ by $w_\lambda$. Define
the operator $D_{x,x+1} = 
\gamma_{x,x+1} (\partial_{e_{x+1}} 
- \partial_{e_x})$,
 then the adjoint, with respect to $\rho_\beta$, is given by
\begin{equation*} 
  D^*_{x,x+1} =- e^{-\cU}  (\partial_{e_{x+1}} - \partial_{e_x}) e^{\cU}\gamma_{x,x+1}
\end{equation*}
and consequently $\cL = \sum_x D_{x,x+1}^* D_{x,x+1}$. First note that
for any $\lambda>0$,  we have by \eqref{eq:w0}
\begin{equation}
\lambda  w_{\lambda}-\cL w_{\lambda}= \alpha_{0,1}
= D_{0,1}^* \gamma_{0,1} .\label{eq:res}
\end{equation}
This resolvent equation, which involves only functions of the
energies, has a well defined solution $w_{\lambda} \in \mc H_0$ for
any $\lambda >0$.   

For each smooth local functions $f,g$,  relation  \eqref{eq:DF-GL} imply
\begin{equation}
  \begin{split}
    \ll f, -\cL g \gg_{\beta} &= \left\langle \gamma^2_{0,1}
      \left[(\partial_{e_1} 
        -\partial_{e_0})\Gamma_f\right] \left[(\partial_{e_1}
        -\partial_{e_0})\Gamma_g \right]\right\rangle_{\beta} \\
    &=
    \left \langle  (D_{0,1} 
      \Gamma_f) (D_{0,1}\Gamma_g) \right \rangle_{\beta}, \\
    \ll f , \alpha \gg_{\beta} &= - \left\langle \gamma_{0,1}
      D_{0,1}\Gamma_f \right \rangle_{\beta} .
  \end{split}
\end{equation}
Since $w_\lambda$ is in the domain of $\cL$, by Remark \ref{rem:core} and equation \eqref{eq:res} we have
\begin{equation}
  \lambda \ll w_{\lambda}, w_{\lambda} \gg_{\beta} + \left \langle (D_{0,1}
      \Gamma_{w_{\lambda}})^2 \right \rangle_{\beta} = - \left \langle \gamma_{0,1}
      D_{0,1}\Gamma_{w_{\lambda}} \right \rangle_{\beta} 
\end{equation}
thus by Schwarz inequality
\begin{equation*}
  \lambda \ll w_{\lambda}, w_{\lambda} \gg_{\beta} + \left \langle (D_{0,1}
      \Gamma_{w_{\lambda}})^2 \right \rangle_{\beta} \le  \left
      \langle \gamma_{0,1}^2 \right\rangle_\beta^{1/2} \left\langle  
      (D_{0,1}\Gamma_{w_{\lambda}})^2 \right \rangle_{\beta}^{1/2}
\end{equation*}
and this gives the bounds
\begin{equation}\label{eq:Gamma-bound}
  \lambda \ll w_{\lambda}, w_{\lambda} \gg_{\beta} \le \left \langle
    \gamma_{0,1}^2 \right\rangle_{\beta},
  \qquad \left \langle(D_{0,1}
      \Gamma_{w_{\lambda}})^2 \right\rangle_{\beta} \le \left \langle \gamma_{0,1}^2 \right\rangle_{\beta}.
\end{equation}
The standard Kipnis-Varadhan argument (\cite{KV}, \cite{KLO}
chapter 1) then gives
\begin{equation*}
  \lim_{\lambda\to 0} \lambda \ll w_{\lambda}, w_{\lambda} \gg_{\beta} = 0.
\end{equation*}
It also follows form the same argument (\cite{KV}, \cite{KLO}
Chapter 1) that  $D_{0,1} \Gamma_{w_{\lambda}}$ converges strongly in
$L^2 (\rho_{\beta})$ 
to a limit that we denote with $\eta$ and that satisfies the relation
\begin{equation*}
  \left \langle \eta^2 \right\rangle_{\beta}= - \left \langle
    \gamma_{0,1} \eta \right \rangle_{\beta}  .
\end{equation*}

Before continuing we need a small technical Lemma.
\begin{lem}\label{lem:functional-extension}
Consider the linear functional defined on smooth local functions $g$ of the energies by
\begin{equation}
  \label{eq:23}
 \ell_\ve(g)= \ll j_{0,1} ,  \mc T_0 G g \gg_{\beta,\ve} .
\end{equation}
If $\gamma^{-1}\in L^2(\mu_{\beta,\ve})$, then $\ell_\ve$ can be continuously extended to the domain of $\cL$.
\end{lem}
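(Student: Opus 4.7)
The plan is to express $\ell_\ve(g)$ as an explicit pairing of $g$ against an energy-dependent function, and then dominate it by the Dirichlet form of $\cL$. Starting from identity \eqref{eq:GL} in the proof of Proposition \ref{prop:Lweak}, for smooth local $g$ of the energies one has
\[
\mc T_0 G g \;=\; \sum_x (\mc T_0 j_{x-1,x})\,(\partial_{e_x}-\partial_{e_{x-1}}) g .
\]
Substituting this into the definition of $\ell_\ve(g)$ and rewriting $(\partial_{e_x}-\partial_{e_{x-1}})g = \gamma_{x-1,x}^{-1}\, D_{x-1,x} g$, I obtain a sum of bilinear terms in which the potentially singular factor $\gamma_{x-1,x}^{-1}$ appears only paired with $\mc T_0 j_{x-1,x}$. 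The hypothesis $\gamma_{0,1}^{-1}\in L^2(\mu_{\beta,\ve})$, combined with $\mc T_0 j_{x-1,x}\in L^2$ from Remark \ref{rem:boundedlocdf}, ensures that each term is absolutely convergent, with the remaining energy-dependent factor $D_{x-1,x}g$ to be controlled via the Dirichlet form of $\cL$.

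In the product-measure case $\ve=0$, the conditional identity $\langle j_{y-1,y}\,\mc T_0 j_{x-1,x}\mid\mb{e}\rangle_{\beta,0} = \delta_{xy}\,\gamma_{x-1,x}^2$ already used in the proof of Proposition \ref{prop:Lweak} collapses the double sum over the shifts appearing in the semi-inner product, and translation invariance of $\rho_\beta$ repackages the result as the single expression
\[
\ell_0(g) \;=\; \langle \gamma_{0,1}^2\,(\partial_{e_1}-\partial_{e_0})\Gamma_g\rangle_\beta \;=\; \langle \gamma_{0,1}\, D_{0,1}\Gamma_g\rangle_\beta .
\]
Cauchy-Schwarz in $L^2(\rho_\beta)$ together with the Dirichlet form formula \eqref{eq:DF-GL}, which gives $\ll g,-\cL g\gg_{\beta,0} = \langle (D_{0,1}\Gamma_g)^2\rangle_\beta$, yields the key estimate
\[
|\ell_0(g)|^2 \;\le\; \langle \gamma_{0,1}^2\rangle_\beta \cdot \ll g,-\cL g\gg_{\beta,0}.
\]
Since $\cL$ is self-adjoint on $\cH_0$, the Dirichlet form is dominated by the graph norm via the spectral Cauchy-Schwarz inequality $\ll g,-\cL g\gg_{\beta,0}\le \ll g,g\gg_{\beta,0}^{1/2}\,\ll \cL g,\cL g\gg_{\beta,0}^{1/2}$. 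Because smooth local functions of the energies form a core of $\cL$ in $\cH_0$ by Remark \ref{rem:core}, this continuity bound extends $\ell_0$ by density to the whole of $\mathrm{dom}(\cL)$.

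For $\ve>0$ the same strategy applies but, since $\mu_{\beta,\ve}$ is no longer a product measure, the diagonal collapse just described is only approximate. The main technical obstacle is to replace it by an estimate exploiting the exponential decay of correlations of $\mu_{\beta,\ve}$ to control off-diagonal contributions in $x,y$; the hypothesis $\gamma^{-1}\in L^2(\mu_{\beta,\ve})$ is precisely what is needed to guarantee that each cross factor $(\mc T_0 j_{x-1,x})\gamma_{x-1,x}^{-1}$ lies in $L^2(\mu_{\beta,\ve})$, uniformly in $x$, so that the sum in the representation of $\ell_\ve(g)$ can be summed and bounded by the Dirichlet form as above. The continuous extension to $\mathrm{dom}(\cL)$ then follows from the same density argument.
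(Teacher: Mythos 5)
Your treatment of the case $\ve=0$ is essentially the paper's argument and is fine: identity \eqref{eq:GL}, insertion of $\gamma_{0,1}^{-1}$, Cauchy--Schwarz against $\langle (D_{0,1}\Gamma_g)^2\rangle_{\beta}=\ll g,(-\cL)g\gg_{\beta,0}$, and extension to the domain of $\cL$ via the core assumption of Remark \ref{rem:core}. The gap is the case $\ve>0$, which is the one the lemma is actually needed for (it is applied to $w_\lambda$ inside $\ll j_{0,1},\cT_0 G w_\lambda\gg_{\beta,\ve}$ in \eqref{eq:k2345}), and there your proof stops at an acknowledged ``technical obstacle''. The route you suggest --- exponential decay of correlations of $\mu_{\beta,\ve}$ to sum the off-diagonal terms in $x,z$ --- is both not sufficient and not what is needed. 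Not sufficient, because a mixing bound on $\langle j_{z,z+1}\,[\cT_0 j_{x,x+1}]\,(\partial_{e_{x+1}}-\partial_{e_x})g\rangle_{\beta,\ve}$ controls it by norms of $(\partial_{e_{x+1}}-\partial_{e_x})g$ that are \emph{not} dominated by the Dirichlet form when $\gamma$ degenerates; but a bound expressed solely through $\cL$ is exactly what the density argument requires. Not needed, because the off-diagonal terms with $|z-x|\ge 2$ vanish \emph{identically for every} $\ve$: under $\mu_{\beta,\ve}$ the velocities are independent with an even, $\ve$-independent marginal (the interaction involves only positions), $j_{z,z+1}$ is odd in $(p_z,p_{z+1})$, while $[\cT_0 j_{x,x+1}](\partial_{e_{x+1}}-\partial_{e_x})g$ depends on $p_z,p_{z+1}$ only through $e_z,e_{z+1}$, hence evenly. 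This reduces the double sum to the three shifts $y=-1,0,1$, and each of these is bounded, after inserting $\gamma_{0,1}^{-1}$, by the conditional Cauchy--Schwarz estimate
\[
\sum_{y=-1,0,1}\bigl\langle\, \langle j_{y,y+1}\,\cT_0 j_{0,1}\,|\,\mathbf e\rangle_{\beta,\ve}^2\,\gamma_{0,1}^{-2}\,\bigr\rangle_{\beta,\ve}^{1/2}\,
\bigl\langle (D_{0,1}\Gamma_g)^2\bigr\rangle_{\beta,\ve}^{1/2},
\]
the first factor being finite precisely by the hypothesis $\gamma^{-1}\in L^2(\mu_{\beta,\ve})$.

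A second point you leave untouched for $\ve>0$: the factor $\langle (D_{0,1}\Gamma_g)^2\rangle_{\beta,\ve}$ is an expectation with respect to $\mu_{\beta,\ve}$, whereas the Dirichlet form of $\cL$ is taken with respect to $\mu_{\beta,0}$. The identification $\langle (D_{0,1}\Gamma_g)^2\rangle_{\beta,\ve}=\langle (D_{0,1}\Gamma_g)^2\rangle_{\beta,0}=\ll g,(-\cL)g\gg_{\beta,0}$ uses the unpinned assumption $W=0$ of this section: the energies are functions of the velocities alone, and the velocity marginal of $\mu_{\beta,\ve}$ does not depend on $\ve$. Without this step your $\ve>0$ bound is not in terms of $\cL$ and the extension by density does not go through. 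Finally, a minor correction: $\gamma^{-1}\in L^2(\mu_{\beta,\ve})$ does not by itself put $(\cT_0 j_{x-1,x})\,\gamma_{x-1,x}^{-1}$ in $L^2(\mu_{\beta,\ve})$ (a product of two $L^2$ functions need not be square integrable); what is actually used is the conditional expectation $\langle j_{y,y+1}\,\cT_0 j_{0,1}\,|\,\mathbf e\rangle_{\beta,\ve}$ multiplied by $\gamma_{0,1}^{-1}$, as in the display above.
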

\begin{proof}
To start with, note that $\ell_\ve$ is well defined since $G g$ is local and in $\mc H^a_\ve$ for $\ve>0$. Moreover, ${\mc T}_0 Gg$ is a local function. By using (\ref{eq:GL}) we compute
\begin{equation*}
  \begin{split}
    \ll j_{0,1} ,  \mc T_0 G g \gg_{\beta,\ve}&= \sum_z \langle
    j_{z,z+1} \, , \, {\mc T}_0 G g \rangle_{\beta,\ve}\\ 
&    =  \sum_{z,x}  \langle j_{z,z+1} [{\mc T}_0 j_{x,x+1}] \, (\partial_{e_{x+1}}
     - \partial_{e_{x}}) g \rangle_{\beta,\ve} \\
    & = \sum_{y=-1,0,1} \sum_{x} \langle j_{x+y,x+y+1} [{\mc T}_0 j_{x,x+1}] \, 
     (\partial_{e_{x+1}} - \partial_{e_{x}}) g \rangle_{\beta,\ve} \\
   &  = \sum_{y=-1,0,1}   \langle j_{y,y+1} [{\mc T}_0 j_{0,1}] 
     (\partial_{e_1} - \partial_{e_{0}}) \Gamma_g \rangle_{\beta,\ve}\\
    & = \sum_{y=-1,0,1}  \left\langle \frac{j_{y,y+1} {\mc T}_0
        j_{0,1}}{\gamma_{0,1}}  
     \; D_{0,1} \Gamma_g \right\rangle_{\beta,\ve}.
  \end{split}
\end{equation*}
That can be bounded by 
\begin{equation*}
  \sum_{y=-1,0,1}  \langle \,  \langle j_{y,y+1} \, {\mc T}_0 j_{0,1}|\mb
    e\rangle_{\beta,\ve}^2 \; \gamma_{0,1}^{-2}\, \rangle_{\beta,\ve}^{1/2}   \;
     \langle (D_{0,1} \Gamma_g)^2 \rangle_{\beta,\ve}^{1/2}.
\end{equation*}
Since we assumed here that there is no pinning potential, we have that
energies are function of the only velocities and
\[
\left< (D_{0,1}  \Gamma_g)^2\right>_{\beta,\ve} =  \left< (D_{0,1}
   \Gamma_g)^2\right>_{\beta,0}=\ll g,(-\cL)g\gg_{\beta,0}.
\]   
Since we assumed that the smooth local function are a core for $\cL$ (see Remark \ref{rem:core}), it follows that
$\ell_\ve$ can be extended to any non-local function of the energies
$g$ that belongs to the domain of $\cL$.
\end{proof}

Notice that, using \eqref{eq:steponeb}, Proposition \ref{prop:31} and Lemma \ref{lem:functional-extension}
\begin{equation}
\begin{split}
\ll j_{0,1}, v_{\lambda,1} \gg_{\beta,\ve}
&= \ll j_{0,1}, {\mc P}^a v_{\lambda,1} \gg_{\beta,\ve } \\
& =  \ll j_{0,1}, \cT_0 j_{0,1} \gg_{\beta,\ve} 
+ \ll j_{0,1},\cT_0 G w_{\lambda} \gg_{\beta,\ve}. 
\end{split}
\end{equation}
Thus, by the first line of \eqref{eq:3}, $\kappa_2 (\noise)$ is given by
\begin{equation}
\label{eq:k2345}
\beta^{-2}\kappa_2 (\noise)= \lim_{\ve \to 0} \lim_{\lambda \to 0} \left\{  \ll j_{0,1}, {\mc T}_0 j_{0,1} \gg_{\beta,\ve} + \ll j_{0,1}, \mc T_0 G w_{\lambda} \gg_{\beta,\ve} \right\}.
\end{equation}

Next, we will use the above formula for $\kappa_2 (\noise)$ for a rigorous study of the lowest order term. The first step consists to make sense, for $\ve>0$ and $\lambda>0$ fixed, of the two terms involved. 


To compute the limit of the second term of the RHS of (\ref{eq:k2345})
it is convenient to use the following Lemma. Observe first that if $g$
is a smooth local function depending only on the energies then
$j_{0,1} g$ is an antisymmetric function and $\mc T_0 (j_{0,1} g) =
(\mc T_0 j_{0,1}) \, g$. 


\begin{lem}\label{lem:orsomethinglikethat}
There exists a constant $C, \ve_0>0$ such that for any $0\le \ve<\ve_0$ and for each function $g$ of the energies $\{e_x\}$ we have
\begin{equation}
\left| \, \langle j_{-1,0}, \mc T_0 j_{0,1}
g\rangle_{\beta,\ve} \right| \le C \, \ve\, 
\sqrt{\langle g^2 \rangle_{\beta,0}} . 
\label{eq:20} 
\end{equation}
and
\begin{equation}
\left| \langle j_{0,1}, \mc T_0 j_{0,1} g\rangle_{\beta,\ve} - \langle
  \gamma_{0,1}^2 g \rangle_{\beta,\ve} \right| \le C \, \ve\,
\sqrt{\langle g^2 \rangle_{\beta,0}} . 
\label{eq:21}
\end{equation}
\end{lem}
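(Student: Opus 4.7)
The plan is to establish both estimates in parallel. Set $F = j_{-1,0}\,\mc T_0 j_{0,1}$ for \eqref{eq:20} and $F = j_{0,1}\,\mc T_0 j_{0,1} - \gamma_{0,1}^2$ for \eqref{eq:21}, and define $\phi_\ve(\mb e):=\langle F\mid \mb e\rangle_{\beta,\ve}$. By the tower property and Cauchy--Schwarz,
\[
|\langle Fg\rangle_{\beta,\ve}|=|\langle g(\mb e)\,\phi_\ve(\mb e)\rangle_{\beta,\ve}|\le \|g\|_{L^2(\mu_{\beta,\ve})}\,\|\phi_\ve\|_{L^2(\mu_{\beta,\ve})}.
\]
Because $W=0$, the Hamiltonian splits additively in $\mb p$ and $\mb q$, so $\mu_{\beta,\ve}=\mu^p\otimes\mu^q_{\beta,\ve}$ with $\mu^p$ a Gaussian product independent of $\ve$. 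As $e_x=|p_x|^2/2$, the marginal on $\mb e$ is the $\ve$-independent law $\rho_\beta$, giving $\|g\|_{L^2(\mu_{\beta,\ve})}=\sqrt{\langle g^2\rangle_{\beta,0}}$. It thus suffices to prove $\|\phi_\ve\|_{L^2(\rho_\beta)}\le C\ve$.

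I first show $\phi_0\equiv 0$. For \eqref{eq:21} this is the defining property of $\gamma_{0,1}^2$, since under $\mu_{\beta,0}$ the cells are microcanonically independent given $\mb e$ and $j_{0,1}\mc T_0 j_{0,1}$ is supported on cells $\{0,1\}$. For \eqref{eq:20}, I integrate out cell $-1$ first: the $p_{-1}\cdot\nabla V$ piece of $j_{-1,0}$ vanishes by the $p\mapsto -p$ symmetry of the momentum sphere, and the $p_0\cdot\nabla V$ piece vanishes because $\int_M\nabla V(q_0-q_{-1})\,dq_{-1}=0$ by periodicity on the torus $M$. Hence $\langle j_{-1,0}\mid\mb e, q_0,p_0,q_1,p_1\rangle_{\beta,0}=0$, and integrating $\mc T_0 j_{0,1}$ against this yields $\phi_0\equiv 0$.

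To bound $\|\phi_\ve\|_{L^2(\rho_\beta)}$, introduce $\Psi_\ve(\mb p):=\int F(\cdot,\mb p)\,d\mu^q_{\beta,\ve}$, so that $\phi_\ve(\mb e)=\bE_{\mu^p}[\Psi_\ve\mid\mb e]$. Using $\phi_0\equiv 0$ one rewrites $\phi_\ve=\bE_{\mu^p}[\Psi_\ve-\Psi_0\mid\mb e]$, and Jensen's inequality gives $\|\phi_\ve\|_{L^2(\rho_\beta)}\le \|\Psi_\ve-\Psi_0\|_{L^2(\mu^p)}$. The assumed analyticity of $\mu_{\beta,\ve}$ in $\ve$, applied to bounded local $\mb q$-observables, yields at fixed $\mb p$
\[
\Psi_\ve(\mb p)-\Psi_0(\mb p)\;=\;-\beta\ve\sum_x\bigl[\langle F\,V_x\rangle^q_{\beta,0}(\mb p)-\Psi_0(\mb p)\,\langle V_x\rangle^q_{\beta,0}\bigr]+O(\ve^2),
\]
where $V_x=V(q_{x+1}-q_x)$. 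Under the product measure $\mu^q_{\beta,0}$, only the finitely many $V_x$ whose $\mb q$-support overlaps that of $F$ yield a nonzero truncated correlation. Each surviving coefficient is controlled by $C(1+|\mb p|^m)$ for some fixed $m$, using the linear $\mb p$-dependence of the currents, the uniform boundedness of $\nabla V$, and the $L^2(\mu_{\beta,0})$-bound on $\mc T_0 j_{0,1}$ from Remark \ref{rem:boundedlocdf}. Hence $\|\Psi_\ve-\Psi_0\|_{L^2(\mu^p)}\le C\ve$, which closes the estimate.

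The main obstacle is the quantitative, $\mb p$-uniform control of the analyticity invoked above. The paper's hypothesis furnishes analyticity only for bounded observables, whereas the $\mb q$-bound of $F(\cdot,\mb p)$ grows with $\mb p$; extracting an $L^2(\mu^p)$-estimate on the remainder therefore requires a cluster/polymer expansion on positions with remainder bounds polynomial in $\mb p$—a standard but nontrivial refinement in the infinite-volume regime.
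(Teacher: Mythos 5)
Your overall strategy is genuinely different from the paper's: you try to prove the lemma by conditioning on $\mb e$ and expanding the Gibbs measure itself to first order in $\ve$, whereas the paper never expands the measure at all. Unfortunately, the central quantitative step of your argument is exactly the one you leave unproved: the bound $\|\Psi_\ve-\Psi_0\|_{L^2(\mu^p)}\le C\ve$. The analyticity hypothesis of the paper is stated only for \emph{bounded local} observables, while your $F$ contains $\mc T_0 j_{0,1}$, about which the paper gives only an $L^2(\mu_{\beta,0})$ bound (Remark \ref{rem:boundedlocdf}); there is no pointwise, fixed-$\mb p$, or momentum-weighted control. Consequently the fixed-$\mb p$ expansion of $\Psi_\ve(\mb p)$, the claimed coefficient bound $C(1+|\mb p|^m)$ (which does not follow from an $L^2(\mu_{\beta,0})$ bound on $\mc T_0 j_{0,1}$), and above all the $L^2(\mu^p)$ control of the $O(\ve^2)$ remainder are not justified by anything available in the paper; you yourself flag that a cluster/polymer expansion with momentum-polynomial remainder bounds would be needed. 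As it stands this is a genuine gap, not a routine refinement: even the square-integrability of $\phi_\ve=\langle F\mid\mb e\rangle_{\beta,\ve}$ and of quantities like $p_0\,\mc T_0 j_{0,1}$ is not granted. A secondary point: your proof of $\phi_0\equiv 0$ for \eqref{eq:20} uses $\int_M\nabla V(q_0-q_{-1})\,dq_{-1}=0$, which is fine when $M$ is a torus (or a closed manifold with $V$ a function of differences) but is an extra structural assumption relative to the general cell dynamics of Section 2.

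For comparison, the paper's proof avoids any expansion of $\mu_{\beta,\ve}$: after killing the $p_{-1}$ part of $j_{-1,0}$ by parity, it writes the remaining part as $\tfrac12 L_0^* V(q_0-q_{-1})$, replaces $L_0^*$ (the $\mu_{\beta,0}$-adjoint) by the $\mu_{\beta,\ve}$-adjoint $L_0'$, whose difference is an \emph{explicit} term of order $\ve$ reducing to finitely many conditional expectations and bounded by Schwarz, and then shows that the leading term $\langle gV(q_0-q_{-1}),L_0(\mc T_0 j_{0,1})\rangle_{\beta,\ve}$ vanishes \emph{exactly} at finite $\ve$, via a resolvent argument that exploits $j_{0,1}=-\tfrac12 S j_{0,1}$ and the bound $\langle u_\nu,u_\nu\rangle_{\beta,0}\le C\nu^{-1}$ together with the local equivalence $\langle u_\nu,u_\nu\rangle_{\beta,\ve}\le K\langle u_\nu,u_\nu\rangle_{\beta,0}$. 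If you want to salvage your route, you would have to either prove the weighted cluster-expansion estimate you invoke, or, more economically, replace the measure expansion by an exact identity of the paper's type so that the $\ve$-dependence appears only through explicitly local, finitely many terms.
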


\begin{proof}
Note that the adjoint of $L_0$ with respect to $\langle \cdot
\rangle_{\beta,\ve}$ is given by 
\[
L_0'f=L_0^*f -\ve\beta\sum_x(p_{x+1}-p_x)\nabla V(q_{x+1}-q_x) f,
\]
where $L_0^*$ is the adjoint with respect to $\langle \cdot
\rangle_{\beta,0}$. Notice that the second term of the
  above expression contains a formal infinite sum, i.e. $L_0'f$ is a
  distribution well defined against any local function.

Then, observing that 
$$
\langle p_{-1}\cdot\nabla V(q_0-q_{-1}),\mc T_0 j_{0,1} g
\rangle_{\beta,\ve} = 0 
$$
we have
\begin{equation}
\label{eq:lokp}
\begin{split}
\langle j_{-1,0}\, , & \, {\mc T}_0 j_{0,1} g \rangle_{\beta,\ve}=\frac
12\langle (p_0-p_{-1}) \cdot \nabla V(q_0-q_{-1}), \mc T_0 j_{0,1}
g\rangle_{\beta,\ve}\\ 
&=\frac 12\langle L_0^* V(q_0-q_{-1}), \mc T_0 j_{0,1} g\rangle_{\beta,\ve}\\
&=\frac 12\langle L_0' V(q_0-q_{-1}),\mc T_0 j_{0,1}g \rangle_{\beta,\ve}\\
& +\frac 12\ve\beta\sum_x\langle(p_{x+1}-p_x) \cdot \nabla V(q_{x+1}-q_x)
V(q_0-q_{-1}),\mc T_0 j_{0,1}g \rangle_{\beta,\ve}.
\end{split}
\end{equation}

In the above equation, the term $\langle L_0' V(q_0-q_{-1}),\mc T_0 j_{0,1}g \rangle_{\beta,\ve}$ is well defined as well as the second one  because $V(q_0 -q_1)$ is a local function and
  $$\sum_x\langle(p_{x+1}-p_x) \cdot \nabla V(q_{x+1}-q_x)
  V(q_0-q_{-1}),\mc T_0 j_{0,1}g \rangle_{\beta,\ve}$$
is a finite sum, in fact equal to
  \begin{equation}
  \label{eq:secondterm}
    \sum_{x=-1,0,1} \langle \langle [{\mc T}_0 j_{0,1}] (p_{x+1}-p_x)|\bq,
    \mathbf e\rangle_{\beta,\ve} \cdot \nabla V(q_{x+1}-q_x) V(q_0-q_{-1})g
    \rangle_{\beta,\ve} 
  \end{equation}
since $\langle [{\mc T}_0 j_{0,1}] (p_{x+1}-p_x)|\bq,
    \mathbf e\rangle_{\beta,\ve} = 0$ if $x \neq -1,0,1$.

The first term of the RHS of the last equality in (\ref{eq:lokp}) is equal to
\begin{equation}
\begin{split}
&\langle L_0' V(q_0-q_{-1}), {\mc T}_0 j_{0,1}g
\rangle_{\beta,\ve}\\
&=\langle V(q_0-q_{-1}), L_0 \, ({\mc T}_0 j_{0,1}g) 
\rangle_{\beta,\ve} \\
&=\langle V(q_0-q_{-1})\, , \, [ L_0 ({\mc T}_0 j_{0,1}) ] \, g) \rangle_{\beta,\ve} \\
&=\langle f \, , \, [ L_0 ({\mc T}_0 j_{0,1}) ] \rangle_{\beta,\ve}
\end{split}
\end{equation}
where $f=g V(q_0 -q_{-1})$ is a symmetric function of the velocities. We claim now that
\begin{equation}
\label{eq:claim00}
\langle f \, , \, [ L_0 ({\mc T}_0 j_{0,1}) ] \rangle_{\beta,\ve}=0.
\end{equation}
Indeed, let $\nu>0$ and write
\begin{equation*}
j_{0,1}= (\nu -L_0){\mc P}^a ((\nu -L_0)^{-1} j_{0,1} + (\nu -L_0){\mc P}^s ((\nu -L_0)^{-1} j_{0,1}
\end{equation*}
and take the scalar product of both sides with $f$ to get
\begin{equation*}
0= \langle (\nu -L_0){\mc P}^a (\nu -L_0)^{-1} j_{0,1} \, , \, f \rangle_{\beta,\ve}+ \langle (\nu -L_0){\mc P}^s (\nu -L_0)^{-1} j_{0,1} \, , \, f \rangle_{\beta,\ve}.
\end{equation*}
The first term of the RHS goes to $\langle f \, , \, [- L_0 ({\mc T}_0 j_{0,1}) ] \rangle_{\beta,\ve}$ as $\nu \to 0$ and the second term is equal to 
\begin{equation*}
\nu \langle {\mc P}^s (\nu -L_0)^{-1} j_{0,1} \, , \, f \rangle_{\beta,\ve} \; -\; \langle L_0  \, {\mc P}^s (\nu -L_0)^{-1} j_{0,1} \, , \, f \rangle_{\beta,\ve}.
\end{equation*}
It is easy to see that the second term of the previous expression is equal to $0$. This is because $L_0 =A + \noise S$, $A$ maps a symmetric function of the $p_x$'s into an antisymmetric function of the $p_x$'s, $S$ is symmetric w.r.t. $\mu_{\beta,\ve}$ and $Sf=0$. 

Thus, to prove the claim (\ref{eq:claim00}) we are reduced to show that
\begin{equation}
\lim_{\nu \to 0} \nu \, \langle {\mc P}^s (\nu -L_0)^{-1} j_{0,1} \, , \, f \rangle_{\beta,\ve} = \lim_{\nu \to 0} \nu \,\langle (\nu -L_0)^{-1} j_{0,1} \, , \, f \rangle_{\beta,\ve} =0.
\end{equation}
By Schwarz inequality it is sufficient to bound the $L^2 (\mu_{\beta,\ve})$ norm of the local function $u_{\nu}= (\nu -L_0)^{-1} j_{0,1}$. By definition we have
$$\nu u_{\nu} -L_0 u_\nu = j_{0,1}.$$
Since $j_{0,1}= -\tfrac{1}{2} S j_{0,1}$ a classical argument (\cite{KLO} Chapter 1) shows that
\begin{equation}
\langle u_{\nu}, u_{\nu} \rangle_{\beta,0} \le C \nu^{-1}
\end{equation} 
where $C$ is a constant independent of $\nu$. Observe that the support of the local function $u_{\nu}$ is fixed independently of $\nu$ so that there exist $K, \ve_0>0$ such that for any $0<\ve<\ve_0$, we have 
\begin{equation}
\langle u_{\nu}, u_{\nu} \rangle_{\beta,\ve} \le K \langle u_{\nu}, u_{\nu} \rangle_{\beta,0} \le CK \nu^{-1}.
\end{equation}
This concludes the proof of (\ref{eq:claim00}) and it remains only to show that  there exists $C, \ve_0>0$ independent of $g$ such that (\ref{eq:secondterm}) is bounded by 
\begin{equation}
 C\,  \ve\,  \sqrt{ \langle g^2 \rangle_{\beta,0}}, \quad \ve<\ve_0.
\end{equation}
This follows from Schwarz inequality. Therefore (\ref{eq:20}) is proved.


Proof of \eqref{eq:21} follows a similar line.
\end{proof}
Applying the above lemma with $g \equiv 1$ to \eqref{eq:k2345},
it follows 
\[
 \ll j_{0,1}, {\mc T}_0 j_{0,1} \gg_{\beta,\ve}= \left\langle \gamma_{0,1}^2 \right\rangle_{\beta,0}+\cO(\ve).
\]
We are left with the second term in \eqref{eq:k2345}. Using again Lemma \ref{lem:orsomethinglikethat} we have
\[
\begin{split}
&\sum_y \ll  j_{0,1}, {\mc T}_0 j_{y-1,y} \, (\partial_{e_y} - \partial_{e_{y-1}})
      w_{\lambda} \gg_{\beta,\ve}\\
&=\sum_{x,y}\langle j_{x-1,x}, \mc T_0  j_{y-1,y} \, 
(\partial_{e_y} - \partial_{e_{y-1}}) 
      w_{\lambda} \rangle_{\beta,\ve}\\
&=\sum_{x}\langle j_{x-1,x},\mc T_0 j_{x-1,x}\, 
(\partial_{e_x} - \partial_{e_{x-1}}) 
      w_{\lambda} \rangle_{\beta,\ve}+\cO(\ve)\\
&=\langle j_{0,1}, \mc T_0  j_{0,1} \,  (\partial_{e_1} - \partial_{e_{0}})
      \Gamma_{w_{\lambda}} \rangle_{\beta,\ve}+\cO(\ve)    \\
&= \langle \gamma_{0,1}^2  (\partial_{e_1} - \partial_{e_{0}})
      \Gamma_{w_{\lambda}} \rangle_{\beta,0} + \cO(\ve) 
=  \langle \gamma_{0,1} D_{0,1}\Gamma_{w_{\lambda}}
\rangle_{\beta,0} + \cO(\ve)
\end{split}
\]
In the second equality we used the fact that the sum over $y$ can be reduced to the sum over $y=x-1,x,x+1$ since the other terms are $0$. Observe that he remainder terms $ {\mc O} (\ve)$ are uniform in $\lambda$.

Therefore, we have that 
\begin{equation}
  \begin{split}
 \lim_{\lambda\to 0}    \ll j_{0,1}, v_{\lambda,1} \gg_{\beta,\ve} 
    &=  \left\langle \gamma_{0,1}^2 \right\rangle_{\beta,0} +
    \lim_{\lambda\to 0} \left\langle 
      \gamma_{0,1} D_{0,1} \Gamma_{w_{\lambda}}
    \right\rangle_{\beta,0} + {\mc O} (\ve)\\
    &= \left\langle \gamma_{0,1}^2 \right\rangle_{\beta,0} -
    \left\langle \eta^2 \right\rangle_{\beta,0} +  {\mc O} (\ve)\\ 
    &= \left\langle \gamma_{0,1}^2 \right\rangle_{\beta,0} +
    \left\langle \eta^2 \right\rangle_{\beta,0} + 2 \left\langle
      \gamma_{0,1} \eta \right\rangle_{\beta,0} + {\mc O} (\ve)\\
      & = \left\langle
      (\gamma_{0,1}+  \eta)^2 \right\rangle_{\beta,0} +  {\mc O} (\ve). 
  \end{split}
\end{equation}




We conclude that 
\begin{equation}
\label{eq:k2def}
  \begin{split}
    \kappa_2(\noise)
    = \beta^{2} \left\langle
      (\gamma_{0,1}+  \eta)^2 \right\rangle_{\beta,0} \; \ge \; 0. 
  \end{split}
\end{equation}

It follows from the above calculation that (recalling the notations introduced at the beginning of this section) 
  \begin{equation} 
\label{eq:gke}
\begin{split}
\beta^{-2} \kappa_2 (\noise)  = & \left\langle  \gamma^2_{0,1}
\right\rangle_{\beta} -  \ll  \alpha_{0,1}  \, , \, (-\mc L)^{-1}
\alpha_{0,1} \gg_{\beta}.   
  \end{split}
\end{equation}

The right hand side of \eqref{eq:gke} is \textbf{exactly} the
macroscopic diffusion of the energy in the autonomous stochastic dynamics
describing the evolution of $\mb e$, obtained in the weak coupling
limit \cite{DL, LO, LOS}. Thus even if 
\eqref{eq:gke} is
obtained from a formal expansion it is a mathematically well defined
object and we expect it coincides with $\lim_{\ve\to 0} \ve^{-2} \kappa(\ve,\noise)$.
 \\


So that
we have proved the following proposition. 

\begin{prop}
Assume that the pinning potential $W=0$ and that $V, \nabla V$ are uniformly bounded. Then we have
\begin{equation}
\lim_{\ve \to 0} \lim_{\lambda \to 0} \left\{  \ll j_{0,1}, {\mc T}_0
  j_{0,1} \gg_{\beta,\ve} + \ll j_{0,1}, \mc T_0 G w_{\lambda}
  \gg_{\beta,\ve} \right\} 
\end{equation} 
exists and is equal to
$$
\left\langle  \gamma^2_{0,1}
\right\rangle_{\beta} -  \ll \alpha_{0,1}  \, , \, (-\mc L)^{-1}
\alpha_{0,1} \gg_{\beta}.  
$$ 
\end{prop}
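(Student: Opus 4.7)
The plan is to assemble the pieces already developed in Section \ref{sec:green-kubo-formula2}: first control each of the two bilinear forms for fixed $\lambda,\ve>0$, then pass to $\lambda\to 0$ via Kipnis--Varadhan, and finally send $\ve\to 0$ using that the $\ve$-error is uniform in $\lambda$.

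First, for the term $\ll j_{0,1},\mc T_0 j_{0,1}\gg_{\beta,\ve}$, I would invoke Lemma \ref{lem:orsomethinglikethat} with $g\equiv 1$. The formal sum defining the semi-inner product reduces to the three terms with $y=-1,0,1$ since $\langle j_{y,y+1}\,\mc T_0 j_{0,1}\,|\,\mathbf e\rangle_{\beta,\ve}=0$ for other $y$. Estimate \eqref{eq:20} kills the off-diagonal contributions up to $\mathcal O(\ve)$, while \eqref{eq:21} yields the diagonal term $\langle \gamma_{0,1}^2\rangle_{\beta,0}$. Thus $\ll j_{0,1},\mc T_0 j_{0,1}\gg_{\beta,\ve}=\langle \gamma_{0,1}^2\rangle_{\beta,0}+\mathcal O(\ve)$, with an error independent of $\lambda$.

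Second, for $\ll j_{0,1},\mc T_0 G w_\lambda\gg_{\beta,\ve}$, I would apply the identity \eqref{eq:GL} to $g=w_\lambda$, which gives
\begin{equation*}
\mc T_0 G w_\lambda \;=\; \sum_y [\mc T_0 j_{y-1,y}]\,(\partial_{e_y}-\partial_{e_{y-1}})w_\lambda.
\end{equation*}
Expanding the double sum defining $\ll\cdot,\cdot\gg_{\beta,\ve}$ and using Lemma \ref{lem:orsomethinglikethat} with $g=(\partial_{e_y}-\partial_{e_{y-1}})w_\lambda$, the non-diagonal $(x,y)$ contributions are $\mathcal O(\ve)$ in a form uniform in $\lambda$, because the bounds \eqref{eq:Gamma-bound} control $\langle g^2\rangle_{\beta,0}$ uniformly. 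The remaining diagonal sum telescopes through the shift invariance into $\langle j_{0,1}\,\mc T_0 j_{0,1}(\partial_{e_1}-\partial_{e_0})\Gamma_{w_\lambda}\rangle_{\beta,\ve}$, which by \eqref{eq:21} equals $\langle\gamma_{0,1}^2(\partial_{e_1}-\partial_{e_0})\Gamma_{w_\lambda}\rangle_{\beta,0}+\mathcal O(\ve)=\langle \gamma_{0,1} D_{0,1}\Gamma_{w_\lambda}\rangle_\beta+\mathcal O(\ve)$.

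Third, I would pass to the limit $\lambda\to 0$ for fixed $\ve$. The Kipnis--Varadhan theorem (as recalled right before Lemma \ref{lem:functional-extension}) guarantees that $D_{0,1}\Gamma_{w_\lambda}$ converges strongly in $L^2(\rho_\beta)$ to some $\eta$ satisfying $\langle\eta^2\rangle_\beta=-\langle\gamma_{0,1}\eta\rangle_\beta$. Hence the $\lambda\to 0$ limit of the sum of the two terms exists and equals $\langle\gamma_{0,1}^2\rangle_{\beta,0}+\langle\gamma_{0,1}\eta\rangle_\beta+\mathcal O(\ve)=\langle\gamma_{0,1}^2\rangle_\beta-\langle\eta^2\rangle_\beta+\mathcal O(\ve)$. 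Since the $\mathcal O(\ve)$ error is uniform in $\lambda$, I can then take $\ve\to 0$ and obtain the clean expression $\langle\gamma_{0,1}^2\rangle_\beta-\langle\eta^2\rangle_\beta$.

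Finally, to match the stated right-hand side I would identify $\langle\eta^2\rangle_\beta$ with $\ll\alpha_{0,1},(-\mc L)^{-1}\alpha_{0,1}\gg_\beta$. This comes from the resolvent equation \eqref{eq:res}: testing against $w_\lambda$ itself gives $\lambda\ll w_\lambda,w_\lambda\gg_\beta+\langle(D_{0,1}\Gamma_{w_\lambda})^2\rangle_\beta=\ll w_\lambda,\alpha_{0,1}\gg_\beta$. By Kipnis--Varadhan the first term vanishes and the left-hand side converges to $\langle\eta^2\rangle_\beta$, while the right-hand side converges to $\ll\alpha_{0,1},(-\mc L)^{-1}\alpha_{0,1}\gg_\beta$. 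The main delicate point in the whole argument is the uniformity in $\lambda$ of the $\mathcal O(\ve)$ remainders from Lemma \ref{lem:orsomethinglikethat}: it is what legitimates the exchange of the two limits, and it hinges on the a priori bound $\langle(D_{0,1}\Gamma_{w_\lambda})^2\rangle_\beta\le\langle\gamma_{0,1}^2\rangle_\beta$ together with the fact that the absence of pinning makes the relevant $\ve$-dependent expectations reduce to $\ve=0$ ones for functions depending only on momenta-determined energies.
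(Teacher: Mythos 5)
Your proposal is correct and follows essentially the same route as the paper's own argument: Lemma \ref{lem:orsomethinglikethat} with $g\equiv 1$ for the first term, the expansion \eqref{eq:GL} plus the uniform-in-$\lambda$ $\mathcal O(\ve)$ bounds for the second, Kipnis--Varadhan for $\lambda\to 0$, and the resolvent identification of $\langle\eta^2\rangle_\beta$ with $\ll\alpha_{0,1},(-\mc L)^{-1}\alpha_{0,1}\gg_\beta$. The only point handled slightly more carefully in the paper is the use of Lemma \ref{lem:functional-extension} to justify applying \eqref{eq:GL} to the non-local function $w_\lambda$, which your argument uses implicitly.
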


\noindent
\textbf{Remark: Lower bounds on $\kappa_2(\noise)$.}
Notice that $ \langle \gamma_{0,1}^{-1} \eta \rangle_{\beta} = 0$, so
\begin{equation*}
  1 = \langle \gamma_{0,1}^{-1} (\gamma_{0,1} + \eta) \rangle_\beta \; \le\; 
  \langle \gamma_{0,1}^{-2} \rangle_\beta^{1/2}
 \langle (\gamma_{0,1} + \eta)^2 \rangle_\beta^{1/2}
\end{equation*}
In particular using the last line of \eqref{eq:k2def}
\begin{equation}
\label{eq:k2ublb}
 \left\langle \gamma_{0,1}^{-2} \right\rangle_{\beta}
 ^{-1} \; \le \;\beta^{-2} \kappa_2 (\noise) \; \le \;  \left\langle
   \gamma_{0,1}^{2} \right\rangle_{\beta}
\end{equation}

In \cite{LO}, a single particle Hamiltonian of the form $H = p^2/2
+ W(q)$ is considered in dimension $d=2$. 
It is shown there under suitable assumptions on the potentials $V$ and
$W$ that  
the bound $\gamma^2(e_0, e_1)\ge c_- (\noise) e_0 e_1$ holds, for
small energies and $c_- (\noise) > 0$ for $\noise > 0$.  
It follows that the lower bound \eqref{eq:k2ublb} is strictly positive
as soon as 
$\noise >0$ for that system. We conjecture that this holds in general
for $\noise >0$ and we prove it for the examples of Section
\ref{sec:ex1}.

When the Hamiltonian part of the cell dynamics is given by a geodesic
flow on a manifold of negative curvature, the lower bound in
\eqref{eq:k2ublb}  is strictly positive even without
the noise ($\noise=0$), in dimension $d\ge 3$ (\cite{DL}).

\section{The non-equilibrium stationary state}
\label{sec:remark-non-equil}

Instead of studying the energy flux via the GK, an alternative, more direct, approach is possible:
one can consider the stationary state in a finite open system with Langevin thermostats at the
boundaries having temperatures $T$ and $T+ \delta T$ respectively, \cite{BO11}. To simplify the study we assume that $d=1$.
The generator of the dynamics is then 
$$
L_{\ve, N, \delta T} =  \sum_{x=1}^N (A^{x} + \noise S^x) + \ve G +
B_{1,T+\delta T} + B_{N, T} 
$$
where $B_{1,T+\delta T}, B_{N, T}$ are the generators of the
corresponding Langevin dynamics at the boundaries:
$$
B_{x,T} = \frac{T}{2} \partial_{p_x}^2 - p_x  \partial_{p_x}
$$
and
$$
G = \sum_{x=2}^{N} \nabla V(q_x
    - q_{x-1}) (\partial_{p_{x-1}} - \partial_{p_{x}}) \ .
$$
Our goal is to compute the thermal conductivity of the stationary
state, e.g. the stationary current divided by the temperature gradient
$\delta T/N$:  
\begin{equation}
  \label{eq:10}
  \kappa_{N,T, \ve} = \lim_{\delta T\to 0} \frac N{\delta T}\ \ve
  \left< j_{0,1} \right>_{N,\delta T, \ve} ,
\end{equation}
where $<\cdot>_{N,\delta T, \ve}$ is the expectation with respect to
the stationary measure. 
To this end we are going to expand the stationary measure
in $\ve$ and $\delta T$.  

Let us reiterate once more that the following is only formal. Indeed, to simplify the presentation, we do not insist on issues that have already been treated more carefully in the previous sections.
As a preliminary step, we use as a reference measure the inhomogeneous
Gibbs distribution with 
linear profile of inverse temperature $\{\beta_x\}_{x=1,\dots,N}$,
interpolating between the two inverse temperatures by setting
$\beta_{x+1} - \beta_x \sim -\frac{\delta T}{N T^2}$. 
We will call $\bE$ the expectation with respect to such a measure,
that is 
\begin{equation}\label{eq:first-app}
\bE(f)=Z^{-1}\int e^{-\sum_{x=1}^N\beta_x e^\ve_{x}}f(q,p) dq dp,
\end{equation}
where as before $e^\ve_{x}=\frac 12 p_x^2+W(q_x)+\frac 12\ve
[V(q_{x}-q_{x-1})+V(q_{x+1}-q_{x})]$, for $x=2,\dots, N-1$ and
$e^\ve_{1}=\frac 12p_1^2+{W(q_x)}+\frac 12\ve V(q_{2}-q_{1})$, $e^\ve_{N}=\frac
12p_N^2+{W(q_x)}+\frac 12\ve V(q_{N}-q_{N-1})$.\footnote{ Since we will compute
  a correction of order one, the correction to the local energies
  does not really matter.} To keep consistency with previous notations, we will
use $e_x$ to designate $e^0_{x}$, the internal energy of the isolated cell. 

The corresponding adjoint operator is  
$$
L_{\ve, N, \delta T}^* =  \sum_{x=1}^N (- A^{x} + \noise S^x) - \ve G + \ve
\sum_{x=1}^{N-1} (\beta_{x+1} - \beta_x) j_{x,x+1} +
B_{1,T+\delta T} + B_{N, T} .
$$
We assume that there exists a unique stationary probability distribution with smooth density. 
The existence and uniqueness of such a probability measure still remains an open problem for most of the dynamics that appear in this work, 
though for some models, proofs can be found in \cite{BO11} (see also
\cite{Luc}).  
For certain choices of the local dynamics $L_0$ and interaction $V$,
the smoothness  of the density  follows by applying results of \cite{EH}, \cite{Car}. 

Let $f_{\ve, N, \delta T}$ be the density of this stationary measure with
respect to this inhomogeneous Gibbs measure, i.e. the solution of 
$$
L_{\ve, N, \delta T}^* f_{\ve, N, \delta T} = 0, \quad f_{\ve, N, \delta T}\ge
0. 
$$
It is observed that the energy of the particles at the boundary sites $(x=1,N)$ is not conserved due to the action of the reservoirs. 
So, if  $\mb e= \{e_2,\dots,e_{N-1}\}$, it is convenient
to define the projector\footnote{ Note that this projector is different from the one used in Section \ref{sec:form-expans-green}.} 
\[
\Pi f (e_2, \dots, e_{N-1}) =  \bE(f\;|\; \mb e) \ .
\]
Also let $B=B_{1,T} + B_{N, T} $ and 
$$
\coJ= \frac 1{NT^2}\sum_{x=1}^{N-1}  j_{x,x+1}.
$$

We expand the stationary measure as follows
\begin{equation}
  \label{eq:5}
  f_{\ve, N, \delta T} =1+\delta T\left[ w_0 + \sum_{n\ge 1} \left(v_n
      + w_n \right) \ve^n\right]+\cO((\delta T)^2) 
\end{equation}
where $\Pi w_n=w_n$ and $\Pi v_n =0$. Next, it is convenient to set 
\[
L_B=\sum_{x=1}^N (A^{x} + \noise S^x)+B= L_{0,N,0}.
\]
Note that\footnote{ Here the adjoint is taken with respect to all the
  measures $\bE(\cdot\;|\; {\mb e})$.}
\[
L_B^*=\sum_{x=1}^N (-A^{x} + \noise S^x)+B= L_{0,N,0}^*
\]
and that
$L_B^*\Pi=\Pi L_B^*=0$.
Since $L_{\ve, N, \delta T}^*{\bf 1}= \coJ \delta T + \cO((\delta T)^2)$, if we
compute at the first order in $\delta T$ we have 
\[
-\ve \coJ-\ve Gw_0+\sum_{n\geq 1}\ve^n\left\{L_B^* v_n-\ve Gv_n-\ve G w_n\right\}=0.
\]
From the above it follows
\begin{equation}\label{eq:details}
\begin{split}
 L_B^* v_1 &=\coJ +G w_0 \\
L_B^* v_{n+1} &= G w_n + Gv_n \quad\text{ for }n>0 .
\end{split}
\end{equation}
We proceed similarly to Section \ref{sec:form-expans-green}, starting from $n=1$.
Since, again, $\Pi G\Pi=0$, applying $\Pi$ to the second equation in \eqref{eq:details}, when $n=1$, we obtain\footnote{To simplify the presentation we assume that $L_B^{-1}$ is well defined. For a more rigorous argument it suffices to use the analogous of Proposition \ref{prop:31}.}
\begin{equation} \label{eq:ufff}
\begin{split}
 v_1 &=(L_B^*)^{-1}\left[\coJ +G w_0\right]\\
\Pi Gv_1&=0.
\end{split}
\end{equation}
We can then multiply the first equation by $\Pi G$ and define the operator
\begin{equation}\label{eq:cLB}
\cL_B=\Pi G (L_{B}^*)^{-1}G\Pi.
\end{equation}
This readily implies that
\begin{equation} \label{eq:details-more}
\begin{split}
  w_0&= \cL_B^{-1} \Pi G (-L_B^*)^{-1}\coJ\\
 v_1 &=(L_B^*)^{-1}\left[\coJ +G w_0\right],
\end{split}
\end{equation}
solve \eqref{eq:ufff}. Next, we consider $n>1$. 
Applying $\Pi$ to the second of the \eqref{eq:details} we have that $v_n$ must satisfy $\Pi G v_n=0$ for all $n\in\bN$ (recall that $\Pi G \Pi = \Pi L_B^* = 0$). 
As we have seen, this is indeed the case for $n=1$. Assume it and try for $n>1$. Then, we can apply $(L_B^*)^{-1}$ and obtain
\[
v_{n+1}=(L_B^*)^{-1}\left[Gw_n+Gv_n\right].
\]
Multiplying for $\Pi G$ yields, for $n\geq 1$,
\begin{equation} \label{eq:7-1}
  \begin{split}
&w_n= \cL_B^{-1} \Pi G (-L_B^*)^{-1}Gv_n  \\
&  v_{n+1}=(L_B^*)^{-1}\left[ Gw_n+ G v_n\right] .
 \end{split}
\end{equation}
Note that 
\[
\Pi Gv_{n+1}=\Pi G(L_B^*)^{-1}\left[ Gw_n+ G v_n\right]=\cL_Bw_n+\Pi G(L_B^*)^{-1} G v_n=0
\]
as needed.

Next, we want to compute how $\cL_B$ acts on the space of function $\{f\;:\; \Pi f=f\}$.
\[
\begin{split}
Gf&=\sum_{x=2}^N \nabla V(q_x-q_{x-1})(p_x\partial_{e_x}-p_{x-1}\partial_{e_{x-1}})f\\
&=\sum_{x=2}^N j_{x-1,x}(\partial_{e_x}-\partial_{e_{x-1}})f
-\frac 12 \sum_{x=2}^N  [L_B^* V(q_x-q_{x-1})](\partial_{e_x}+\partial_{e_{x-1}})f.
\end{split}
\]
Thus, given two function of the energies $f(e_2, \dots, e_{N-1})$ and
$g(e_2, \dots, e_{N-1})$, we have\footnote{ By $\bE_\beta$ we mean the measure \eqref{eq:first-app} with $\delta T=0$.}
\begin{equation}\label{eq:dichlet}
\begin{split}
\bE_\beta(g \cL_B f)& = \bE_\beta(g \Pi G (L_B^*)^{-1} G\Pi f)\\
&= \sum_{x=2}^N\bE_\beta(g  G
(L_B^*)^{-1}j_{x-1,x} (\partial_{e_x}-\partial_{e_{x-1}})f),
\end{split}
\end{equation}
where we have used the antisymmetry in $p$ of the measure.
Also, taking the adjoint with respect to $\bE_\beta$ yields
\begin{equation}\label{eq:G-adjoint2}
G^*=-G+\beta\sum_{x=2}^N \nabla V(q_x-q_{x-1})(p_x-p_{x-1})=-G+\beta L_B V.
\end{equation}
Inserting the above in \eqref{eq:dichlet} 
and using again the antisymmetry in $p$ we have
\[
-\bE_\beta(g\cL_B f)=\frac 14\sum_{x,y =2}^N \bE_\beta \left(
  (j_{y,y-1}(\partial_{e_y}-\partial_{e_{y-1}})g \cdot
 (- L_B^*)^{-1}j_{x, x-1}(\partial_{e_x}-\partial_{e_{x-1}})f \right).
\]
Finally, we have
\begin{equation*}
  \begin{split}
    \gamma(e_{x-1},e_x)^2\delta_{xy} &= \bE_\beta( j_{y-1,y}
    (-L_B^*)^{-1}j_{x-1,x}\;|\;{\mb e}) \qquad x\neq 2, N\\
    \gamma(e_1, e_2)^2\delta_{y,2} &= \bE_\beta( j_{y-1,y}
    (-L_B^*)^{-1}j_{1,2}\;|\;{\mb e}) \\
    \gamma(e_{N-1}, e_N)^2\delta_{y,N} &= \bE_\beta( j_{y-1,y}
    (-L_B^*)^{-1}j_{N-1,N}\;|\;{\mb e})
  \end{split}
\end{equation*}

Thus
\[
\begin{split}
-\bE_\beta(g\cL_B f)&= \frac
14\sum_{x=2}^{N}\bE_\beta(\gamma^2(e_{x-1},e_{x})(\partial_{e_x}-\partial_{e_{x-1}})g\cdot(\partial_{e_x}-\partial_{e_{x-1}})f)\\  
&= \frac 14\sum_{x=3}^{N-1}\bE_\beta(\gamma^2(e_{x-1},e_{x})(\partial_{e_x}-\partial_{e_{x-1}})g
\cdot(\partial_{e_x}-\partial_{e_{x-1}})f)\\
&\quad + \frac 14
\bE_\beta(\gamma^2(e_1,e_2)\partial_{e_2}g\cdot \partial_{e_2}f)\\
& \quad+ \frac 14\bE_\beta(\gamma^2(e_{N-1}, e_N)\partial_{e_{N-1}}g\cdot\partial_{e_{N-1}}f)
\end{split}
\]
which shows that $\cL_B$ is the operator that one would expect in
\cite{LO,DL} when adding the appropriate boundary terms.

Also note that, in analogy with Proposition \ref{prop:Lweak} and \eqref{eq:res}, 
\begin{equation}\label{eq:alpha-B}
\begin{split}
\alpha(e_x,e_{x+1})&=\Pi G(-L^*_B)^{-1}j_{x,x+1}=\Pi G^*(-L_B)^{-1}j_{x,x+1}\\
&=D^*_{x,x+1}\gamma(e_x,e_{x+1}).
\end{split}
\end{equation}
We can, at last, compute the current:
\[
\bE(f_{\ve,N,\delta T} j_{0,1})=\delta T\,\bE(\{w_0+\ve( v_1+w_1)\}j_{0,1})+\cO(\ve^2\delta T+(\delta T)^2).
\]
Thus, setting 
\[
{\bf j}_{0,N}=\lim_{\delta T\to 0}\frac 1{\delta T}\bE(f_{\ve,N,\delta T} j_{0,1})
\]
we have formally\footnote{Note that here, since we are dealing with finite systems, there is no problems in expanding the Gibbs measure in $\delta T$ and $\ve$.} 
\[
\begin{split}
{\bf j}_{0,N}&=\ve \bE_\beta( v_1j_{0,1})+\cO(\ve^2)\\
&=\frac {\ve}{N T^2}\bE_\beta(\gamma(e_{0},e_{1})^2)
+\ve\bE_\beta(\coJ\cdot (L_B^*)^{-1}G w_0)+\cO(\ve^2)\\
&=\frac {\ve\beta^2}{N }\bE_\beta(\gamma(e_{0},e_{1})^2)\\
&\qquad + 
\frac{\ve\beta^2}{N} \sum_{x=1}^{N-1}\bE_\beta( \alpha(e_{x-1}, e_{x}) [(-\cL_B)^{-1}
    \alpha (e_0, e_1)] )+\cO_N(\ve^2)
\end{split}
\]

Therefore, formally, the limit 
$$
{ \lim_{\ve \to 0}\; \lim_{N\to \infty} \; \lim_{\delta T \to 0} \; \frac{1}{\ve^2} \kappa_{N,T,\ve}}
\; = \;
\lim_{\ve \to 0}\; \lim_{N\to \infty} \frac{N}{\ve}  {\bf j}_{0,N},
$$ 
{with $ \kappa_{N,T,\ve}$ given by \eqref{eq:10}},
yields the  formula for $\kappa_2(\noise)$ in agreement with the
Green-Kubo formula expansion of Section \ref{sec:green-kubo-formula2}. 

\section{Behavior of $\kappa_2(\noise)$ in the limit
  $\noise \to 0$ for some model systems} 
\label{sec:ex1}

We now study the behavior of $\kappa_2 (\noise)$ in the
deterministic limit $\noise \rightarrow 0$. This limit is singular,
since the operator $\mathcal L = \Pi G (-L_0)^{-1}G \Pi$
formally vanishes at $\noise = 0$ for the 
whole class of systems considered in this work:  
both operators $(-L_0)^{-1}$ and $G$ exchange symmetric and
antisymmetric functions under the operation $p\rightarrow -p$,  
while $\Pi$ annihilates antisymmetric functions. It is therefore important to analyse some particular cases
in more detail. When the dynamics of individual cells is chaotic, the operator $(-L_0)^{-1}$ can be defined at $\noise = 0$ only on appropriate spaces of distributions, \cite{Liv}, in which the operator induced by the map $p\rightarrow -p$ is unbounded. Thus the above formal argument does not hold and the operator $\mathcal L$ does not trivially vanishes for $\noise=0$, \cite{DL}.

Another way to bypass the above problems is by looking at integrable isolated dynamics for which fairly explicit computations can be carried out.   
Here we consider three such examples. 
In all these cases, the uncoupled cells are one-dimensional and the stochasticity is the random velocity flip with rate $\noise^{-1}$. 

\emph{1. Anharmonic oscillators.} 
It is a common belief, based on extensive numerical simulation,
that the transport of energy in anharmonic
one-dimensional pinned chains is diffusive \cite{LLP}\cite{dhar} 
(see also 
\cite{GAGI} for physical approaches passing through a weak coupling limit).  
However, to our knowledge, there are no rigorous mathematical
arguments supporting this.  
We show here that $\limsup_{\noise\to 0}\kappa_2(\noise) <\infty$
for one-dimensional oscillators with rather generic pinning
potentials $W$ and interaction $V$. 

We consider the Hamiltonian
\eqref{Hamiltonian one dimensional} below which allows for an
explicit description. The fact that as
$\noise \rightarrow 0$, $\kappa_2 (\noise)$ does not diverge
results from averaging 
oscillations in the uncoupled cells, and not from decay of correlations as it would be the
case for a chaotic dynamics. The control of the time integrated
current-current correlations in the limit $\noise \rightarrow 0$ is
possible if resonances between near atoms occur with small probability
in the Gibbs state. This condition is violated if the pinning $W$ is
harmonic, but is otherwise typically satisfied.  

\emph{2. Disordered oscillators and rotors.}
We next consider in more details two examples of chains of one
dimensional systems that display a similar structure: the disordered
harmonic chain and the rotor model. 
In each case, the atoms are
one-dimensional systems, so that, when 
both noise and coupling are removed, the full dynamics becomes again
integrable.  Moreover, then, neighboring particles typically oscillate
at different frequencies. For these two examples, we are able to give
explicit formulas for the weak coupling operator ${\mc L}$ (see
Proposition \ref{prop:dc} and Proposition \ref{prop: weak coupling rotors}).  
 
In the absence of noise ($\noise = 0$), the disordered chain is well
known to be a perfect insulator: $\kappa = 0$ \cite{CBFH},  
while it is conjectured that the conductivity of the rotor chain is
finite and positive \cite{LLP},  but decays faster than any power law
in $\varepsilon$ as $\varepsilon \rightarrow 0$ \cite{WDFH}.  Thus in
these two cases it is expected that the conductivity of the
deterministic system $\kappa (\varepsilon,\noise)$ has no expansion in powers of $\varepsilon$.  
%
What we are actually able to prove is that for the disordered harmonic chain
$$\lim_{\noise \to 0} \kappa_2 (\noise)=0.$$
We also show in Subsection \ref{subsec: rotor chain} that for the rotor chain $\limsup_{\noise \to 0} \kappa_2 (\noise) <+\infty$, extending the conclusions of
Proposition \ref{prop:ubgeneric} to this case.


\subsection{Upper bound on the conductivity for pinned anharmonic oscillators.}
\label{sec:ex2}

Let
\begin{equation}\label{Hamiltonian one dimensional}
  \begin{split}
    H(q,p) \; = \; \sum_x \frac{p_x^2}{2} + W(q_x) + \ve V(q_{x+1} -
    q_x) \\
    \; = \; \sum_x H_0 (q_x,p_x) + \ve V(q_{x+1} - q_x)
  \end{split}
\end{equation}
with $(q_x,p_x) \in \RR^2$.
The potential $W$ is assumed to be smooth, strictly convex, except
possibly at the origin, and symmetric
.  
The potential $V$ is also taken smooth, symmetric, bounded below, and
of polynomial growth, always satisfying the requirement that
$\mu_{\beta,\ve}$ is analytic in $\ve$ for small $\ve$. 
To make things simple and concrete, we will actually focus on $W$ given by
\begin{equation}\label{special case W}
W (q) \; = \; \frac{|q|^r}{r}, \qquad r > 2. 
\end{equation}

\begin{prop}
\label{prop:ubgeneric}
Let $W$ be given by \eqref{special case W} for some $r > 2$. 
Then, with the assumptions on $V$ given after \eqref{Hamiltonian one
  dimensional}, $\limsup \kappa_2 (\noise) < + \infty$. 
\end{prop}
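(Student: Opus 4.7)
The plan is to use the upper bound $\beta^{-2}\kappa_2(\noise)\leq \langle \gamma_{0,1}^2\rangle_{\beta,0}$ from \eqref{eq:k2ublb}, which reduces the proposition to a uniform-in-$\noise$ estimate on the microcanonical integral $\gamma^2(e_0,e_1)=\int j_{0,1}\, \cT_0 j_{0,1}\, d\nu^0_{e_0}\,d\nu^1_{e_1}$ integrated against $\rho_\beta$. The mechanism at work is phase averaging: for $W(q)=|q|^r/r$ with $r>2$, a scaling argument gives a strictly monotone frequency map $\omega(e)\propto e^{(r-2)/(2r)}$, so adjacent cells typically oscillate at incommensurate frequencies and the current-current correlations remain integrable as $\noise\to 0$ on a set of full $\rho_\beta$-measure.

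First I would transport the problem to action-angle coordinates $(I,\vf)$ on each isolated cell: the uncoupled Hamiltonian flow becomes the uniform rotation $\dot\vf=\omega(I)$, the velocity flip acts as $\vf\mapsto -\vf$, and the microcanonical measure is Lebesgue in $\vf$. On the joint torus $\Sigma_{e_0}\times \Sigma_{e_1}\cong T^2$, the current $j_{0,1}$ expands in Fourier modes $\sum \hat j_{k,l}e^{i(k\vf_0+l\vf_1)}$ with rapidly decaying coefficients (since $V,W$ are smooth and $q(\vf;I), p(\vf;I)$ are smooth in $\vf$ for $I>0$). The two commuting velocity flips partition the Fourier basis into four-dimensional orbits $\{(\pm k,\pm l)\}_{k,l\geq 1}$ invariant under $-L_0=-A+\noise(-S)$, together with easily handled two-dimensional orbits $\{(\pm k,0)\}$ and $\{(0,\pm l)\}$ that contribute $0$ in the limit $\nu\to 0$ by a direct single-cell calculation. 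On each four-dimensional orbit $-L_0$ restricts to an explicit $4\times 4$ matrix that can be inverted by hand; pairing $\cT_0 j_{0,1}$ with $j_{0,1}$ yields, after a short calculation,
\begin{equation*}
 \gamma^2(I_0,I_1) = \sum_{k,l\geq 1} \frac{2\noise\, G_{k,l}(I_0,I_1)}{\Omega_+^2\,\Omega_-^2+\noise^2(\Omega_+^2+\Omega_-^2)},
 \qquad \Omega_\pm = k\omega(I_0)\pm l\omega(I_1),
\end{equation*}
where $G_{k,l}\geq 0$ is a non-negative quadratic form in the Fourier coefficients of $j_{0,1}$.

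The second step is integration against $\rho_\beta$. Since $\omega$ is monotone, $(I_0,I_1)\mapsto (\omega_0,\omega_1)$ is a diffeomorphism on $(0,\infty)^2$, and for each mode the further linear change $(\omega_0,\omega_1)\to (\Omega_+,\Omega_-)$ has Jacobian $(2kl)^{-1}$. The pointwise estimate $\Omega_+^2\Omega_-^2+\noise^2(\Omega_+^2+\Omega_-^2)\geq \Omega_+^2\max(\Omega_-^2,\noise^2)$ combined with the elementary identity $\int_\RR \noise/\max(\Omega_-^2,\noise^2)\,d\Omega_- = 4$ shows that integration in the transverse direction $\Omega_-$ contributes an $O(1)$ constant, uniform in $\noise$: the resonance region $|\Omega_-|\lesssim \noise$ has width $O(\noise)$ but height $O(1/\noise)$, while the off-resonance region gives a bounded tail. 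Rapid decay of $G_{k,l}$ jointly in $(k,l)$ and integrability of $1/\Omega_+^2$ against the weighted longitudinal measure then furnish an absolutely convergent series of uniformly bounded terms, giving $\limsup_{\noise\to 0}\langle \gamma_{0,1}^2\rangle_\beta<\infty$.

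The main obstacle is the degenerate boundary $I\to 0$, where $\omega(I)\to 0$ and $\omega'(I)\to \infty$, making the action-angle change of variables singular. One controls this region using the self-similarity of $W=|q|^r/r$: on an orbit of energy $E$ one has $|p|\leq \sqrt{2E}$ and $|q|\leq (rE)^{1/r}$, forcing $|j_{0,1}|$ and hence $G_{k,l}$ to vanish at a rate that compensates the blow-up of the inverse Jacobian $|\omega'(I)|^{-1}\sim \omega^{4/(r-2)}$ near the boundary, yielding an integrable contribution down to $I=0$. A secondary technical point is controlling the smoothness of $\hat j_{k,l}(I_0,I_1)$ uniformly enough in $(k,l)$ to justify the termwise integration and summation; this follows from smoothness of $V,W$ and standard smooth dependence of Hamiltonian flows on parameters, applied on compact subsets $I\geq\delta>0$ and matched to the boundary estimate above.
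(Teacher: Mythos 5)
Your proposal is correct and follows essentially the same route as the paper's Appendix~\ref{sec:proofubgeneric}: bound $\kappa_2(\noise)$ by $\langle\gamma_{0,1}^2\rangle_{\beta,0}$ via \eqref{eq:k2ublb}, pass to action-angle variables, invert $-L_0$ mode by mode (your $4\times4$ flip-orbit inversion is a real-basis repackaging of Lemma~\ref{lem: Poisson action angle}, and the resulting non-negative closed formula, as well as the vanishing of the $\{(\pm k,0)\},\{(0,\pm l)\}$ contributions, is indeed what the algebra gives), use invertibility of $\omega(I)$ for $W=|q|^r/r$ to change variables to frequencies, and tame the resonances $|\Omega_-|\lesssim\noise$ by a splitting at scale $\noise$, which is exactly the paper's Case~3. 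One cosmetic slip: the factor $|\omega'(I)|^{-1}\sim\omega^{4/(r-2)}$ vanishes rather than blows up as $\omega\to 0$, so the real danger at small frequencies is the $\Omega_+^{-2}$ singularity, which is compensated---as you correctly indicate---by the vanishing of the current's Fourier coefficients ($\propto I^{r/(r+2)}$), in agreement with the paper's power counting in \eqref{refined expression tilde h}.
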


\begin{proof}
Because of its length, the proof as well as the needed introductory material are postponed to Appendix \ref{sec:proofubgeneric}.
\end{proof}

\noindent
\textbf{Remark.}
In Proposition \ref{prop:ubgeneric}, we have limited ourselves to a
case leading to rather clean computations.  
A closer look at the proof in appendix \ref{sec:proofubgeneric} shows
that our hypotheses are too restrictive:  
what is important is that the map $\omega (I)$, giving the frequency
of oscillation as a function of the action, can be inverted.  
The main advantage of taking the $W$ given by \eqref{special case W},
is that this can be done explicitly.  

It then arises as a natural question whether the proof could
be further generalized to cases where $\omega (I)$ is invertible
everywhere but on a finite or countable number of points.  
This would for example be the case if we consider the pinning
potential $W(q) = q^2 + a \cos (q)$ for some small enough constant $a>
0$.  
This is unfortunatly not the case, as some logarithmic divergence in
$\noise$ shows up in the limit $\noise \rightarrow 0$, if one just
tries to mimic the proof of Proposition \ref{prop:ubgeneric}.  
Unless the system posseses some hidden symmetry, this in fact means
that $\langle \gamma^2(e_0,e_1)\rangle_{\beta,0}$ diverges
logarithmically in the deterministic limit.  
This however does not necessarily imply that $\kappa_2$ itself will
diverge in this limit, as the term $\langle \eta^2 \rangle_{\beta,0}$ in
\eqref{eq:k2def} can compensate this divergence.  
This is in fact what is expected to happen.

\subsection{The disordered harmonic chain.}\label{subsec: disordered harmonic chain}
The hamiltonian part of the generator is now given by 
\begin{equation}\label{Generator disordered harmonic chain}
A \; = \; \sum_{x} p_x \partial_{q_x} - \omega_x^2 q_x \partial_{p_x}, 
\qquad 
G \; = \; \sum_x (q_{x-1} - 2q_x + q_{x+1}) \partial_{p_x}, 
\end{equation}
where $\omega_x^2$ are random, independent and identically distributed squared frequencies, that satisfy the bound $c^{-1} \le \omega_x^2 \le c$, for some constant $c > 0$.
The internal energy is given by $e_x = p_x^2/2 + \omega_x^2 q_x^2/2$, 
while for $\varepsilon \ge 0$ the energy flux $\varepsilon j_{x,x+1}$ between two adjacent oscillators is given by 
\begin{equation*}
\ve \, j_{x,x+1} \; = - \ve \; \frac{p_x + p_{x+1}}{2} (q_{x+1} - q_x).
\end{equation*}

\begin{lem}
Let $x, y\in \bZ$. 
A solution $\psi_{x,y}$ to the equation
\begin{equation*}
-L_0 \psi_{x,y} \; = \; q_x p_y
\end{equation*}
is given by 
\begin{equation*}
\psi_{x,y} = 
\frac{
4 \noise \left(\omega_x^2 q_x p_y 
- \omega_y^2 q_y p_x \right)
+ (\omega_x^2 - \omega_y^2) p_x p_y 
+ \big((\omega_x^2 - \omega_y^2 -8 \noise^2)\omega_y^2\big) q_x q_y} 
{\Delta(x,y)}
\end{equation*}
with
\begin{equation}\label{Delta Definition}
\Delta_{x,y} 
\; = \; 
8 \noise^2 (\omega_x^2 + \omega_y^2) +  (\omega_x^2 - \omega_y^2 )^2.
\end{equation}
\end{lem}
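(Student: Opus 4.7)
The proof is a direct $4\times 4$ linear algebra computation whose only conceptual ingredient is the identification of a small $L_0$-invariant subspace. My plan is to work in the subspace $V_{x,y}$ spanned by the four monomials $q_xp_y$, $q_yp_x$, $p_xp_y$ and $q_xq_y$. Thanks to the harmonic form \eqref{Generator disordered harmonic chain} of the uncoupled part $A$, the image of any basis element of $V_{x,y}$ under $A$ is again a linear combination of these four monomials; similarly the velocity-flip generator $S$ acts diagonally on $V_{x,y}$, since $q_xq_y$ is flip-invariant, $q_xp_y$ and $q_yp_x$ each pick up a single flip, and $p_xp_y$ picks up both. Thus $L_0 = A + \noise S$ preserves $V_{x,y}$, and the equation $-L_0\psi_{x,y}=q_xp_y$ reduces to a linear system in four unknowns.

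Concretely, I would first write out the matrix of $L_0$ restricted to $V_{x,y}$ in the given basis: a direct differentiation gives
\[
A(q_xp_y)=p_xp_y-\omega_y^2q_xq_y,\qquad A(q_yp_x)=p_xp_y-\omega_x^2q_xq_y,
\]
\[
A(p_xp_y)=-\omega_x^2q_xp_y-\omega_y^2q_yp_x,\qquad A(q_xq_y)=q_xp_y+q_yp_x,
\]
with the noise contributing only a diagonal correction proportional to $\noise$. Writing $\psi_{x,y}=a\,q_xp_y+b\,q_yp_x+c\,p_xp_y+d\,q_xq_y$ and matching coefficients in $-L_0\psi_{x,y}=q_xp_y$, one obtains a $4\times 4$ linear system: the $q_xq_y$-row forces $a\omega_y^2+b\omega_x^2=0$; the $p_xp_y$-row ties $a+b$ to $c$ through a multiple of $\noise$; and the two remaining mixed-monomial rows determine $d$ and, upon subtraction, produce a single scalar equation for $c$. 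Solving that scalar equation yields precisely the denominator $\Delta_{x,y}=8\noise^2(\omega_x^2+\omega_y^2)+(\omega_x^2-\omega_y^2)^2$ advertised in the lemma, and back-substitution then delivers the coefficients $a$, $b$, $d$ in the stated form.

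The degenerate diagonal case $x=y$ has to be treated separately since $V_{x,x}$ collapses to the three-dimensional span of $\{q_x^2,p_x^2,q_xp_x\}$; the proposed formula then reduces to $\psi_{x,x}=-q_x^2/2$, which is verified in one line from $A(q_x^2)=2q_xp_x$ and $S(q_x^2)=0$. The only real point of care in the generic case is the accurate bookkeeping of the flip-rate contributions on each of the four monomials, which is what fixes the prefactor of $\noise^2$ in $\Delta_{x,y}$; apart from that no conceptual obstacle arises. For brevity, the cleanest way to present the result in the paper is not to re-derive $(a,b,c,d)$ from the system but simply to substitute the announced $\psi_{x,y}$ into $-L_0\psi_{x,y}$ and check, using the four computed images above, that the coefficients of $q_yp_x$, $p_xp_y$, $q_xq_y$ cancel identically while the coefficient of $q_xp_y$ equals one.
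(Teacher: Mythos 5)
Your proposal is correct and coincides with the paper's own (unwritten) argument: the published proof is simply ``this follows by a direct computation,'' and your $4\times 4$ verification on the $L_0$-invariant span of $q_xp_y,\,q_yp_x,\,p_xp_y,\,q_xq_y$ (including the degenerate case $x=y$, where the formula collapses to $\psi_{x,x}=-q_x^2/2$) is exactly that computation. The one bookkeeping point you rightly single out does fix the constant: the stated $\Delta_{x,y}=8\noise^2(\omega_x^2+\omega_y^2)+(\omega_x^2-\omega_y^2)^2$ comes out when the flip noise acts as $S^zf=f(\ldots,-p_z,\ldots)-f$, the convention actually used in Section \ref{sec:ex1} and the appendices (cf.\ $Sj_{x,x+1}=-4j_{x,x+1}$), so that $\noise S$ contributes $-2\noise,\,-2\noise,\,-4\noise,\,0$ on the four basis monomials, whereas with the $\tfrac12$-normalized flip \eqref{gen-flip} the same linear system would instead yield the denominator $2\noise^2(\omega_x^2+\omega_y^2)+(\omega_x^2-\omega_y^2)^2$.
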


\begin{proof}
This follows by a direct computation. 
\end{proof}

This lemma allows us to give an explicit form of the operator $\mathcal L = \Pi G (L_0)^{-1} G \Pi$. 
We know that $\mathcal L$ is the generator of a Ginzburg-Landau dynamics. 

\begin{prop}
\label{prop:dc}
Let $\mathcal L = \Pi G (L_0)^{-1}G\Pi$. Then
\begin{align}
\rho_\beta (\mathbf e)
\; &= \;
\prod_x \Big( \beta\ed^{-\beta e_x} \Big), \\
\gamma^2 (e_x,e_{x+1})
\; &= \;
\frac{4\noise}{\Delta_{x,x+1}} e_x e_{x+1}, \\
\alpha (e_x,e_{x+1})
\; &= \; 
\frac{8\noise}{\Delta_{x,x+1}} (e_x - e_{x+1}) .
\end{align}
\end{prop}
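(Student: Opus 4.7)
The proof will split naturally into three parts, one per formula.

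For $\rho_\beta$: at $\ve=0$ the Gibbs measure factorizes over cells, each being a harmonic oscillator at inverse temperature $\beta$. I will pass to action--angle variables $q_x=(2e_x)^{1/2}\omega_x^{-1}\cos\theta_x$, $p_x=(2e_x)^{1/2}\sin\theta_x$, note that the Jacobian $\omega_x^{-1}$ is independent of $e_x$, and integrate out the angles to obtain the exponential marginal $\beta\ed^{-\beta e_x}$. A useful byproduct is that the function $U$ in \eqref{eq:rhobeta} is constant, so $\cU$ drops out of \eqref{eq:alpha} and $\alpha$ reduces to $(\partial_{e_1}-\partial_{e_0})\gamma^2$.

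The core step is to compute $\cT_0 j_{0,1}$ explicitly using the preceding lemma. Expanding $j_{0,1}=-\tfrac12(p_0q_1+p_1q_1-p_0q_0-p_1q_0)$ and using the identification $\cT_0(q_xp_y)=\cP^a\psi_{x,y}$, the diagonal terms collapse to $\psi_{x,x}=-q_x^2/2$, which is symmetric in $p$ and hence annihilated by $\cP^a$ --- equivalently, $q_xp_x=\tfrac12 L_0 q_x^2$ is a total derivative. Keeping only the antisymmetric parts of $\psi_{0,1}$ and $\psi_{1,0}$, the $p_0p_1$ and $q_0q_1$ pieces will cancel and I expect to obtain
\[
\cT_0 j_{0,1} \;=\; \frac{4\noise}{\Delta_{0,1}}\bigl(\omega_0^2 q_0 p_1 - \omega_1^2 q_1 p_0\bigr).
\]
Substituting this into the defining integral for $\gamma^2$ and parameterizing the microcanonical surfaces by angles, only the monomials $p_0^2 q_1^2$ and $q_0^2 p_1^2$ will survive; each contributes $e_0 e_1$ via $\langle p_x^2\rangle_{\nu_{e_x}^x}=e_x$ and $\langle q_x^2\rangle_{\nu_{e_x}^x}=e_x/\omega_x^2$, yielding the stated $\gamma^2(e_0,e_1)$. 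The formula for $\alpha$ then follows by applying $G=\sum_x(q_{x-1}-2q_x+q_{x+1})\partial_{p_x}$ to the explicit form of $\cT_0 j_{0,1}$: the result is a quadratic polynomial in the $q_y$'s whose conditional expectation under $\Pi$ is supported only on the diagonal pieces $\Pi q_x^2 = e_x/\omega_x^2$, giving a multiple of $(e_0-e_1)/\Delta_{0,1}$; equivalently, one may invoke the derivative identity $\alpha=(\partial_{e_1}-\partial_{e_0})\gamma^2$ from the first paragraph.

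The main point needing attention is the identification $\cT_0(q_xp_y)=\cP^a\psi_{x,y}$, particularly for $x=y$ where the vanishing $\cP^a\psi_{x,x}=0$ must faithfully capture the $\nu\to 0$ limit of the antisymmetric part of $(\nu-L_0)^{-1}(q_xp_x)$. This is not a real obstacle: the ansatz $f_\nu=a(\nu)q_x^2+b(\nu)p_x^2+c(\nu)q_xp_x$ solves the resolvent equation exactly and yields $c(\nu)=\nu b(\nu)\to 0$ as $\nu\to 0$, confirming the vanishing of the antisymmetric component. Once this is settled the rest is just algebra and elementary trigonometric integration, and no further delicate estimate is required.
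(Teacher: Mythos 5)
Your proposal follows essentially the same route as the paper's proof: the paper likewise computes $\Pi\big(j_{x,x+1}(-L_0)^{-1}j_{x,x+1}\big)$ from the explicit $\psi_{x,y}$ of the preceding lemma, discards all monomials odd in any $q$ or $p$ under $\Pi$, and uses $\Pi(p_x^2)=\Pi(\omega_x^2q_x^2)=e_x$; your only real deviation is to first extract $\cT_0 j_{0,1}=\tfrac{4\noise}{\Delta_{0,1}}\big(\omega_0^2q_0p_1-\omega_1^2q_1p_0\big)$ and then average, and your extra verification that the diagonal pieces drop out (via $\psi_{x,x}=-q_x^2/2$, $q_xp_x=\tfrac12L_0q_x^2$ and the explicit resolvent ansatz with $c(\nu)\to0$) is a small refinement of the paper, which does not discuss the $\nu\to 0$ limit here at all. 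The computations of $\rho_\beta$ and $\gamma^2$ are correct and reproduce the stated formulas. The one point you must not leave vague is the constant in $\alpha$: carrying it through, both of your routes --- $\Pi G\,\cT_0 j_{0,1}$ evaluated with $\Pi(q_x^2)=e_x/\omega_x^2$, or the identity $\alpha=(\partial_{e_{x+1}}-\partial_{e_x})\gamma^2$, valid since $U$ is constant for this model --- yield $\alpha(e_x,e_{x+1})=\tfrac{4\noise}{\Delta_{x,x+1}}(e_x-e_{x+1})$, i.e.\ half the constant displayed in the Proposition. This factor-two tension is internal to the paper: its own proof merely invokes \eqref{eq:alpha}, which with $\gamma^2=4\noise\, e_xe_{x+1}/\Delta_{x,x+1}$ also gives $4\noise$, while the $8\noise$ reappears in Corollary \ref{cor: dhc corol}; it presumably traces back to the normalization chosen for the flip noise in the lemma. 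So your method is sound and matches the paper's, but state the constant explicitly and flag the discrepancy instead of writing ``a multiple of $(e_0-e_1)/\Delta_{0,1}$''.
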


\begin{proof}
To obtain the expression for the invariant measure, let us take an 
 $f$ that depends only on $e_x = (p_x^2 + \omega_x^2 q_x^2)/2$, and
 let us compute 
\begin{align*}
\langle f \rangle_{\beta,0}
\; &= \; 
Z_x(\beta)^{-1}
\int_{\bR^2} f \Big( \frac{p_x^2 + \omega_x^2 q_x^2}{2} \Big) \, \ed^{-\beta (p_x^2 + \omega^2_x q_x^2)/2} \, \dd q_x \dd p_x \\
\; & \sim \; \int_0^\infty f(e) \, \ed^{-\beta e} \dd e
\end{align*}
from which the expression for $\rho_\beta$ follows\footnote{
Here and in the sequel, we use $a \sim b$ to say that there exist two postive constants $C_1,C_2$ such that $a \le C_1 b$ and  $b \le C_2 a$. }.

Next we have that 
\begin{align*}
\gamma^2 (e_x,e_{x+1})
\; &= \;
\Pi \big( j_{x,x+1} (-L_0)^{-1} j_{x,x+1} \big) \\
& = \; 
\frac{1}{4} \Pi
\Big( q_{x+1}p_x - q_x p_{x+1} + q_{x+1}p_{x+1} - q_xp_x \Big) \\
&\hspace{1.5cm}
\Big( \psi_{x+1,x} - \psi_{x,x+1} + \psi_{x+1,x+1} - \psi_{x,x} \Big) \\
& = \;
\frac{2\noise}{\Delta_{x,x+1}} \Pi \Big( 
\omega_{x+1}^2 q_{x+1}^2 p_x^2 + \omega_x^2 q_x^2 p_{x+1}^2
\Big)
\end{align*}
where we have used the fact that odd powers of
$q_x,p_x,q_{x+1},p_{x+1}$ are annihilated by the projection $\Pi$.  
Using then polar coordinates
\begin{equation*}
\frac{\omega_x q_x}{\sqrt 2} \; = \: \sqrt{e_x} \cos \theta_x, 
\quad 
\frac{p_x}{\sqrt 2} \; = \; \sqrt{e_x} \sin \theta_x, 
\end{equation*} 
it is computed that both
\begin{equation*}
\Pi (p_x^2) \; = \; \Pi (\omega_x^2 q_x^2) \; = \; \frac{1}{2\pi} \int_0^{2\pi} 2e \sin^2 \theta_x \, \dd \theta_x \; = \; e_x.
\end{equation*}
This yields the announced expression for $\gamma^2 (e_x,e_{x+1})$. 

The current $\alpha (e_x,e_{x+1})$ follows using \eqref{eq:alpha}. 
\end{proof}

\begin{cor}\label{cor: dhc corol}
For $\noise > 0$, we have that a.s. in $\omega$
$$\kappa_2 (\noise)= \cfrac{8\noise\beta^2}{\langle \Delta_{0,1} (\noise) \rangle_*}>0$$
where $\langle \cdot \rangle_*$ represents the average with respect to
the realizations of the disorder. In particular, a.s. in $\omega$,  
$$\lim_{\noise \to 0} \kappa_2 (\noise)=0.$$ 
\end{cor}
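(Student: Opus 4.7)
The plan is to use the explicit formulas from Proposition~\ref{prop:dc} together with the closed-form expression \eqref{eq:gke} for $\kappa_2$, and to carry out the inversion of $\cL$ by exploiting the linearity of $\alpha$ in the energies. The crucial structural observation is that $\alpha_{x,x+1} = c_{x,x+1}\,(e_x - e_{x+1})$ with $c_{x,x+1}(\omega) = 8\noise/\Delta_{x,x+1}(\omega)$ is linear in the energies, while the rates $\gamma_{x,x+1}^2$ factorize as $c_{x,x+1}\,e_xe_{x+1}$. This suggests the ansatz $u_\lambda(\mathbf{e}) = \sum_x b_x(\lambda,\omega)\,e_x$ for the resolvent equation $(\lambda - \cL)u_\lambda = \alpha_{0,1}$. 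Since the $b_x$ do not depend on the energies, only first-order derivatives survive and one computes $-\cL u_\lambda = \sum_x c_{x,x+1}(b_{x+1}-b_x)(e_x - e_{x+1})$. Matching the coefficient of each $e_y$ against that of $\alpha_{0,1}$ reduces the resolvent equation to the 1D random Poisson equation
$$
\lambda\, b_y + c_{y,y+1}(b_{y+1}-b_y) - c_{y-1,y}(b_y-b_{y-1}) \;=\; c_{0,1}\,(\delta_{y,0} - \delta_{y,1}),
$$
which is exactly the Kirchhoff equation for an electrical network on $\Z$ with random bond conductances $\{c_{x,x+1}(\omega)\}_{x\in\Z}$, driven by a unit current between sites $0$ and $1$ in the presence of a uniform absorbing rate $\lambda$.

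Next I would plug this ansatz into \eqref{eq:gke} and carry out the exponential-energy integrals, using $\langle e_x\rangle_\beta = 1/\beta$, $\langle e_x^2\rangle_\beta = 2/\beta^2$ and $\langle e_xe_y\rangle_\beta = 1/\beta^2$ for $x\ne y$. This reduces $\beta^{-2}\kappa_2$ to the Dirichlet energy $\beta^{-2}\lim_{\lambda\to 0}\sum_x c_{x,x+1}(b_{x+1}-b_x)^2$, subject in the limit to the unit-drop boundary condition $b_{x}\to 0$ for $x\to+\infty$ and $b_x\to 1$ for $x\to -\infty$. In one dimension, the effective conductance of such a random resistor network is, almost surely in $\omega$, equal to $(\langle 1/c_{0,1}\rangle_*)^{-1} = 8\noise/\langle\Delta_{0,1}\rangle_*$: the resistances $1/c_{x,x+1} = \Delta_{x,x+1}/(8\noise)$ add in series, so by the Birkhoff ergodic theorem applied to the stationary i.i.d.\ sequence $\{\Delta_{x,x+1}(\omega)\}$ the total resistance per bond over a long stretch converges a.s.\ to $\langle\Delta_{0,1}\rangle_*/(8\noise)$. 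Restoring the $\beta^2$ factor from the energy integrations yields $\kappa_2(\noise) = 8\noise\beta^2/\langle\Delta_{0,1}(\noise)\rangle_*$. For the $\noise \to 0$ statement, \eqref{Delta Definition} shows $\Delta_{0,1}(\noise) \to (\omega_0^2-\omega_1^2)^2$ pointwise, and in $L^1$ by the uniform bound $c^{-1}\le\omega_x^2\le c$; since the disorder is i.i.d.\ and non-degenerate, this limit has strictly positive expectation, so $\langle\Delta_{0,1}(\noise)\rangle_*$ stays bounded below, and $\kappa_2(\noise) = O(\noise) \to 0$.

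The main technical obstacle is the rigorous justification of the passage from the quenched resolvent problem to the deterministic effective conductance. The formal $\lambda \to 0$ solution of the Kirchhoff equation is the step function $b_x = \mathbf{1}_{x\geq 1}$, which is not in $L^2(\rho_\beta)$, so one cannot simply invert $\cL$ directly; instead one must interpret the bulk inner product through the Kipnis-Varadhan limit $\ll\alpha_{0,1},(-\cL)^{-1}\alpha_{0,1}\gg_\beta = \lim_{\lambda\to 0}\ll\alpha_{0,1},u_\lambda\gg_\beta$, approximate by finite-volume truncations on a segment $[-N,N]$ with Dirichlet boundary conditions, and identify the $N\to\infty$ limit with the effective network conductance. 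In one dimension, the explicit solution of the finite-volume Kirchhoff equation combined with the law of large numbers for the random bond resistances makes this transparent.
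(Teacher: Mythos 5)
Your overall strategy is viable and, at bottom, rests on the same two observations as the paper's own proof: the drift $\alpha_{x,x+1}=\frac{8\noise}{\Delta_{x,x+1}}(e_x-e_{x+1})$ is linear in the energies with disorder-dependent coefficients, so the problem reduces to the one-dimensional random conductance model with conductances $c_{x,x+1}=8\noise/\Delta_{x,x+1}$, whose effective diffusivity is a.s.\ the harmonic mean. The paper implements this on the forward side: since $\alpha$ is linear, the quenched correlation $S(x,t,\omega)=\langle e_x(t)e_0(0)\rangle_{\rho_\beta}-\beta^{-2}$ satisfies exactly the master equation of a random walk among the random conductances $8\noise\Delta_{x,x+1}^{-1}$, so $S$ is a transition probability, its asymptotic variance is the harmonic mean $8\noise/\langle\Delta_{0,1}\rangle_*$ a.s.\ (quoting \cite{KLO}), and the Green--Kubo formula for the diffusivity of $\cL$ then gives $\kappa_2$. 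You instead go through the resolvent/Dirichlet-form side, which is a legitimate alternative, and your treatment of the $\noise\to 0$ limit (dominated convergence of $\Delta_{0,1}(\noise)$ to $(\omega_0^2-\omega_1^2)^2$, non-degenerate disorder) is fine.

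However, the step in which you identify the value of \eqref{eq:gke} is not correct as stated, and taken literally it gives the wrong answer. The $\lambda\to0$ limit of the localized-source problem $\lambda b_y-\nabla^*(c\nabla b)_y=c_{0,1}(\delta_{y,0}-\delta_{y,1})$ is indeed (formally) a unit step across the bond $(0,1)$; but with that profile the correction term $\langle\alpha_{0,1}\,u\rangle_{\beta}=c_{0,1}\beta^{-2}(b_0-b_1)=c_{0,1}\beta^{-2}$ exactly cancels $\langle\gamma_{0,1}^2\rangle_\beta=c_{0,1}\beta^{-2}$ and you would get $\kappa_2=0$; while the Dirichlet energy of a profile with unit \emph{total} drop across $[-N,N]$ is the conductance $\bigl(\sum_{|x|\le N}c_{x,x+1}^{-1}\bigr)^{-1}\to 0$, and the Dirichlet energy of the localized step is the single random number $c_{0,1}(\omega)$ --- none of these is the harmonic mean. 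The harmonic mean only appears once the spatial average implicit in $\ll\cdot,\cdot\gg$ (the sum over translations, i.e.\ a division by the volume in finite boxes, as in the passage from \eqref{eq:gke} to \eqref{eq:k2def} via $\Gamma_{w_\lambda}$) is carried out: the correct limiting variational object is the corrector problem with \emph{unit mean gradient}, equivalently the conductance per unit length (resistance per bond), which is indeed what your series-law/ergodic-theorem computation evaluates. Bridging \eqref{eq:gke} --- which moreover is derived for translation-invariant systems and must first be given a quenched meaning for the disordered chain --- to that per-bond-normalized effective conductance is the actual mathematical content here, and it is the step your sketch glosses over with the ``unit-drop'' picture; a finite-volume truncation can close it, but only if the truncated Green--Kubo quantity carries the $1/N$ average (unit gradient per site), not a unit total drop.
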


\begin{proof}
The proof is given in Appendix \ref{Appendix proof dhc}.
\end{proof}

\subsection{The rotor chain.}\label{subsec: rotor chain}
The Hamiltonian part of the dynamics is given by
\begin{equation}\label{Generator rotor chain}
A \; = \; \sum_x p_x \partial_{q_x}, 
\qquad 
G \; = \;  \sum_{x} \left[ \sin(q_{x-1} -q_x) - \sin (q_x -q_{x+1}) \right] \partial_{p_x},
\end{equation} 
with $q_x \in \bR/{2\pi \bZ}$.
The individual energy for the uncoupled dynamics ($\varepsilon=0$) is $e_{x} = p_x^2 /2$. 
If $\varepsilon>0$, there is a flux of energy which is given by $\varepsilon j_{x,x+1}$ where
\begin{equation*}
 j_{x,x+1}= -\frac12 (p_x + p_{x+1}) \sin (q_{x+1} -q_x) .
 \end{equation*}

\begin{prop}\label{prop: weak coupling rotors}
For this system
\begin{align}
\rho (\mathbf e) 
\; & = \; 
\prod_{x} \Big( \ed^{- (U(e_x) + \beta e_x)} \sqrt{\beta/\pi} \Big)
\quad \text{with} \quad 
U (e_x) \; = \; \frac{1}{2} \log e_x,
\label{Gibbs measure rotors}\\
\gamma^2 (e_x,e_{x+1}) 
\; &= \; 
\frac{2\noise \, e_x e_{x+1}}{\Delta (e_x,e_{x+1})},
\label{gamma rotors}\\
\alpha (e_x,e_{x+1})
\; &= \; 
\frac{\noise (e_{x} - e_{x+1})}{\Delta^2(e_x,e_{x+1})} \big( \Delta (e_x,e_{x+1}) + 8 e_xe_{x+1} \big)
\label{alpha rotors}
\end{align}
with 
\begin{equation*}
\Delta (e_x,e_{x+1}) \; = \; 4 \noise^2 (e_x +e_{x+1}) + (e_{x+1} -e_x)^2.
\end{equation*}
\end{prop}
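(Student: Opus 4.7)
The plan is to verify the three formulas by direct computation, exploiting the fact that under $L_0$ the cells decouple and each single-cell dynamics is integrable (a free rotor subject to velocity flips).

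For the density $\rho(\mb e)$ in \eqref{Gibbs measure rotors}: since $W\equiv 0$ the measure $\mu_{\beta,0}$ factorises as a product over cells of $(2\pi)^{-1}dq_x\otimes \sqrt{\beta/(2\pi)}\,e^{-\beta p_x^2/2}dp_x$. Pushing the $p_x$-marginal forward under $p_x\mapsto e_x=p_x^2/2$, with both signs $\pm\sqrt{2e_x}$ contributing, produces the density $\sqrt{\beta/\pi}\,e^{-\beta e_x}e_x^{-1/2}$, which is exactly $\exp(-U(e_x)-\beta e_x)\sqrt{\beta/\pi}$ with $U(e)=\tfrac12\log e$.

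For $\gamma^2$ in \eqref{gamma rotors}, the key observation is that $L_0$ commutes with multiplication by $e^{inQ}$ where $Q=q_1-q_0$, so I would Fourier-decompose $j_{0,1}=g_1 e^{iQ}+g_{-1}e^{-iQ}$ with $g_{\pm 1}=\mp(p_0+p_1)/(4i)$ and solve the Poisson equation $\mathcal T_0 j_{0,1}=\phi_1 e^{iQ}+\phi_{-1}e^{-iQ}$ sector by sector. In the $n=1$ sector the operator reduces to $M_1=i(p_1-p_0)+\noise S$ acting on functions of $(p_0,p_1)$ alone. Restricting to the microcanonical surface $p_x=\sigma_x V_x$, $V_x=\sqrt{2e_x}$, $\sigma_x\in\{\pm 1\}$, this becomes a $4\times 4$ linear system in the basis $\{1,\sigma_1,\sigma_0,\sigma_0\sigma_1\}$; solving $-M_1\phi_1=g_1$ produces coefficients whose common denominator reduces, after using $V_x^2=2e_x$, to $(V_1^2-V_0^2)^2+8\noise^2(V_0^2+V_1^2)=4\Delta(e_0,e_1)$. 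Then $\gamma^2=\int j_{0,1}\,\mathcal T_0 j_{0,1}\,d\nu_{e_0}^0 d\nu_{e_1}^1$: averaging over the uniform $Q$-marginal projects onto the constant Fourier mode, giving $2\operatorname{Re}\mathbb E_\sigma[g_1\overline{\phi_1}]$, and averaging over the four sign patterns picks out the $(\sigma_0,\sigma_1)$-invariant coefficient of $g_1\overline{\phi_1}$. A short calculation collects the two contributions (one from each of the $\sigma_0$- and $\sigma_1$-odd components of $\phi_1$) into $\noise V_0^2 V_1^2/(2\Delta)=2\noise e_0 e_1/\Delta$.

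For $\alpha$ in \eqref{alpha rotors}, I would apply the general formula \eqref{eq:alpha} specialised to the present $\mathcal U=\tfrac12\sum_x\log e_x$:
\[
\alpha=(\partial_{e_1}-\partial_{e_0})\gamma^2+\gamma^2\cdot\frac{e_1-e_0}{2e_0e_1}.
\]
Plugging in $\gamma^2=2\noise e_0 e_1/\Delta$, using $(\partial_{e_1}-\partial_{e_0})\Delta=4(e_1-e_0)$, and putting the result over the common denominator $\Delta^2$ collapses everything to $\noise(e_0-e_1)(\Delta+8e_0e_1)/\Delta^2$.

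The main obstacle is the bookkeeping in the $4\times 4$ inversion step: one must track how the ``$(e,o)$'', ``$(o,e)$'' and ``$(o,o)$'' components of $\phi_1$ couple through $i(p_1-p_0)$ and verify that the resulting denominator factors as $4\Delta$. A cleaner but longer alternative would be to compute $\gamma^2=\int_0^\infty \mathbb E_{e_0,e_1}[j_{0,1}(0)j_{0,1}(t)]\,dt$ directly: since $p_x(t)$ is a two-state Markov chain on $\{\pm V_x\}$ and $q_x(t)=q_x(0)+\int_0^t p_x(s)ds$, averaging over uniform $q_0(0),q_1(0)$ reduces the integrand to products of expectations over the two independent $p_x$-processes, each of which satisfies a closed second-order linear ODE with characteristic equation $\mu^2+\noise\mu+V_x^2=0$; the resulting damped-oscillation formulas, integrated against $\int_0^\infty e^{-\noise t}\cos\omega t\,dt=\noise/(\noise^2+\omega^2)$ and its sine analogue, reassemble after simplification into the same expression.
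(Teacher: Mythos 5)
Your proposal is correct and takes essentially the same route as the paper: both reduce the problem to solving the two-cell Poisson equation $-L_0u=j_{0,1}$ for the uncoupled flip dynamics, project onto the energies to obtain $\gamma^2$, and get $\alpha$ from \eqref{eq:alpha} (the $\rho$ computation is identical). The only difference is presentational—the paper writes down the explicit solution $\psi_{x,y}$ of Lemma \ref{lem: inverse of L0 rotators} and verifies it, while you derive the same inverse via the $n=\pm1$ Fourier modes in $q_1-q_0$ and the $4\times4$ sign-pattern system; I checked that your claimed denominator $4\Delta$ and the resulting $\gamma^2=2\noise e_0e_1/\Delta$ and $\alpha=\noise(e_0-e_1)(\Delta+8e_0e_1)/\Delta^2$ reproduce the stated formulas.
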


\begin{proof}
The proof is given in Appendix \ref{Appendix: proof of a lemma and a
  proposition}. 
\end{proof}

It is seen from  the above expressions that as noted earlier the
generator $\mathcal 
L$ formally vanishes as $\noise \rightarrow 0$.  
However, for $\noise$ small but positive, 
the coefficient $\gamma^2 (e_x,e_{x+1})$ can become of order $1 /\noise$ in case a resonance occurs, such that $|e_{x+1} - e_x| \le \varsigma$.
We have unfortunately not been able to decide whether, despite of this
phenomenon, the value of $\kappa_2 (\noise)$ still vanishes as $\noise
\rightarrow 0$,  
as suggested by the results in \cite{WDFH}.  

We have however a result analogous to that of Proposition \ref{prop:ubgeneric}:
\begin{prop}\label{prop:limsuprotors}
For any $\noise>0$, $\kappa_2 (\noise)$ is strictly positive and 
$$
\limsup_{\noise \rightarrow 0}\kappa_2 (\noise)< + \infty.
$$
\end{prop}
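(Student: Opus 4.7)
The plan is to handle the two claims in Proposition \ref{prop:limsuprotors} separately, using the explicit formulas of Proposition \ref{prop: weak coupling rotors} together with the general inequalities of Section \ref{sec:green-kubo-formula2}.

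For the $\limsup$ bound, I would invoke the upper inequality in \eqref{eq:k2ublb}, namely $\beta^{-2}\kappa_2(\noise) \le \langle \gamma_{0,1}^2\rangle_\beta$. Inserting \eqref{gamma rotors} and \eqref{Gibbs measure rotors}, the $\sqrt{e_0 e_1}$ factor in the denominator of the Gibbs marginals combines with the $e_0 e_1$ in $\gamma^2$, and the change of variables $s = e_0 + e_1$, $d = e_0 - e_1$ yields
\[
\langle \gamma_{0,1}^2\rangle_\beta \;=\; \frac{\beta}{2\pi}\int_0^\infty e^{-\beta s}\int_{-s}^{s}\frac{\noise\sqrt{s^2-d^2}}{4\noise^2 s + d^2}\,dd\,ds.
\]
Bounding $\sqrt{s^2-d^2}\le s$ and enlarging the $d$-range to $\bR$ reduces the inner integral to $\noise s\int_\bR dd/(4\noise^2 s + d^2) = \pi\sqrt{s}/2$, which is \emph{independent of $\noise$}. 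The remaining $s$-integral converges against the exponential factor, giving the explicit uniform bound $\langle \gamma_{0,1}^2\rangle_\beta \le \sqrt\pi/(8\sqrt\beta)$, so that $\limsup_{\noise\to 0}\kappa_2(\noise) \le \beta^{3/2}\sqrt\pi/8 < +\infty$.

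For strict positivity, the companion lower bound in \eqref{eq:k2ublb} is useless here: since the marginal density of $\rho_\beta$ behaves like $e^{-1/2}$ near zero and $\gamma_{0,1}^{-2}$ adds a factor of order $e_0^{-1}$, the integrand of $\gamma_{0,1}^{-2}\rho_\beta$ is of order $e_1^{1/2}/(\noise\, e_0^{3/2})$ as $e_0\to 0$, and hence $\langle \gamma_{0,1}^{-2}\rangle_\beta = +\infty$ for every $\noise>0$. I would instead use \eqref{eq:k2def} in the form $\beta^{-2}\kappa_2(\noise) = \langle (\gamma_{0,1} + \eta)^2\rangle_\beta$ and prove positivity by comparing means. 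Since $\gamma_{0,1}>0$ pointwise and $\gamma_{0,1}\in L^2(\rho_\beta)$ by the first part, $\langle \gamma_{0,1}\rangle_\beta > 0$. On the other hand, $\eta$ is, by construction, the strong $L^2(\rho_\beta)$-limit of $D_{0,1}\Gamma_{w_\lambda}$, hence also an $L^1$-limit on this probability space, so that $\langle \eta\rangle_\beta = \lim_{\lambda\to 0}\langle D_{0,1}\Gamma_{w_\lambda}\rangle_\beta$. The crucial ingredient is then the identity $\langle D_{0,1}\Gamma_f\rangle_\beta = 0$ for every local function $f$ of the energies. It is proved by observing that $(\partial_{e_1}-\partial_{e_0})\sum_y \tau_y f$ is a finite sum (only the shifts $\tau_y f$ whose support intersects $\{0,1\}$ contribute), and then cancelling the resulting expectations in pairs using the symmetry $\gamma(e_0,e_1) = \gamma(e_1,e_0)$, the exchange $e_0\leftrightarrow e_1$, and the fact that the remaining coordinates are i.i.d.\ under $\rho_\beta$. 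Combining $\langle\gamma_{0,1}+\eta\rangle_\beta > 0$ with Jensen's inequality $\langle X\rangle_\beta^2\le \langle X^2\rangle_\beta$ yields $\kappa_2(\noise)>0$.

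The main technical obstacle is the cancellation identity $\langle D_{0,1}\Gamma_f\rangle_\beta = 0$. Although $\Gamma_f$ is a formal infinite sum, the differential operator $\partial_{e_1}-\partial_{e_0}$ kills all but finitely many shifts, so the claim reduces to a finite combinatorial identity that can be checked by direct computation; one sees it already on the toy templates $f(e_0,e_1)=e_0 e_1$ and $f(e_0,e_1)=e_0$, where the three bracketed blocks (interior and the two boundary ones) cancel either by independence of a spectator energy or by the $e_0\leftrightarrow e_1$ symmetry of the iid measure weighted by the symmetric $\gamma$. A secondary subtlety is the integrability of the $d$-integral in the upper bound near $d=\pm s$, where $\sqrt{s^2-d^2}$ vanishes while the Gibbs factor $\sim (s^2-d^2)^{-1/2}$ diverges; the bound $\sqrt{s^2-d^2}\le s$ removes the issue at once and decouples the estimate from $\noise$.
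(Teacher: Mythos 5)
Your treatment of the $\limsup$ is correct and is essentially the paper's route: both start from the upper bound in \eqref{eq:k2ublb} and estimate $\langle \gamma_{0,1}^2\rangle_{\beta,0}$; the paper does this in momentum-like variables with polar coordinates and a Taylor expansion near the resonant angles, whereas your substitution $(s,d)=(e_0+e_1,e_0-e_1)$ with $\sqrt{s^2-d^2}\le s$ gives the cleaner, $\noise$-uniform bound $\sqrt\pi/(8\sqrt\beta)$, which is perfectly fine.

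The strict positivity half, however, has a genuine gap. Your ``crucial cancellation identity'' $\langle D_{0,1}\Gamma_f\rangle_\beta=0$ for \emph{every} local function $f$ of the energies is false. It does hold for the templates you checked (single-site $f$, and the symmetric $f=e_0e_1$), but these are misleading: for a non-symmetric two-site function the interior block $\gamma_{0,1}\,[\partial_2 f-\partial_1 f](e_0,e_1)$ is not odd under the exchange $e_0\leftrightarrow e_1$, while the boundary blocks only enter through one-variable conditional averages, so nothing is left to pair it with. Concretely, for $f(a,b)=a^2b$ one gets $\langle D_{0,1}\Gamma_f\rangle_\beta=\langle\gamma\,(e_0^3 e_1)\rangle-2\langle\gamma\, e_0^2e_1^2\rangle+2\langle e\rangle\langle\gamma\, e_0e_1^2\rangle-\langle e^2\rangle\langle\gamma\, e_0 e_1\rangle$ (with $\gamma$ evaluated at $(e_0,e_1)$), and this moment combination is nonzero for generic symmetric $\gamma$ and i.i.d.\ energies, so the cancellation is not structural. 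For the rotor data specifically, integrating by parts with respect to $\rho_\beta$ gives $\langle D_{0,1}\Gamma_f\rangle_\beta=\ll f, D_{0,1}^{*}\mathbf 1\gg_\beta$, so your identity for all $f$ would force the $\rho_\beta$-adjoint applied to the constant, $D_{0,1}^{*}\mathbf 1$, to be (a constant plus) a gradient $u(e_1)-u(e_0)$; one computes $D_{0,1}^{*}\mathbf 1\propto (e_1-e_0)\bigl[(e_0e_1\Delta)^{-1/2}+2\sqrt{e_0e_1}\,\Delta^{-3/2}\bigr]$, which is not of that form. Hence $\langle\eta\rangle_\beta=0$ (equivalently $\lim_\lambda\ll w_\lambda, D_{0,1}^{*}\mathbf 1\gg_\beta=0$) is unsubstantiated and the Jensen step collapses. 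Your preliminary observation that $\langle\gamma_{0,1}^{-2}\rangle_{\beta,0}=+\infty$ is correct, and it is a serious point: the paper proves positivity precisely through the chain $\beta^{-2}\kappa_2(\noise)\ge\langle\gamma_{0,1}^{-2}\rangle_{\beta,0}^{-1}\ge c\noise$, so the divergence you noticed undercuts that one-line argument as well; but your replacement does not repair it, and establishing $\kappa_2(\noise)>0$ for the rotor chain requires a genuinely different device (for instance a truncated test function $\gamma^{-1}\wedge M$ in the variational lower bound with control of the error terms), which your proposal does not supply.
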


\begin{proof}
By (\ref{eq:k2ublb}) and the explicit form of $\gamma$ we have that
\begin{equation*}
\beta^{-2} \kappa_2 (\noise) \ge \langle \gamma^{-2} (e_0,e_1)
\rangle_{\beta,0}^{-1} \ge c \noise  
\end{equation*}
for a positive constant $c$ independent of $\noise$. By (\ref{eq:k2ublb}) it holds also that 
\begin{equation*}
\beta^{-2}\kappa_2 
\; \le \; 
\langle \gamma^2 (e_0,e_1) \rangle_{\beta,0}.
\end{equation*}
The function $\langle \gamma^2 (e_0,e_1) \rangle_{\beta,0}$ has the behavior
\begin{align*}
\langle \gamma^2 (e_0,e_1) \rangle_{\beta,0}
\; &\sim \;
\int_{\bR_+^2} \frac{\noise \, e_0 e_1}{4\noise^2 (e_0 + e_1) + (e_1 - e_0)^2} \ed^{-\beta (e_0 + e_1)}\frac{\dd e_0 \dd e_1}{\sqrt{e_0e_1}} \\
\; &\sim \; 
\int_{\bR^2} \frac{\noise \, x^2 y^2}{8 \noise^2 (x^2 + y^2) + (y^2-x^2)^2} \ed^{-\beta (x^2 + y^2)/2} \, \dd x \dd y \\
\; &\sim \; 
\int_0^\infty \dd r \int_0^{2\pi} \dd \theta  \, \frac{ \noise \, r^3 \cos^2 \theta \sin^2 \theta}{8 \noise^2 + r^2 (\cos^2 \theta -\sin^{2} \theta)^2}\, \ed^{-\beta r^2 /2}.
\end{align*}
In the limit $\noise \rightarrow 0$, 
only the values of $\theta$ such that $\cos^2 \theta -\sin^{2} \theta \sim 0$ contribute ($\theta \sim \pm \pi/4$ and $\theta \sim \pm 3\pi/4$), 
so that, by a Taylor expansion, 
\begin{align*}
\langle \gamma^2 (e_0,e_1) \rangle_{\beta,0}
\; &\sim \; 
 \int_0^{\infty} \dd r \, \int_{0}^1 du \,  \frac{\noise \, r^3}{8 \noise^2 + r^2 u^2}\; \ed^{-\beta r^2 /2}\\
\; &\sim \;
\int_0^\infty r^2 \ed^{-\beta r^2/2} \left( \int_0^1 \frac{\noise/r}{8 (\noise/r)^2 + u^2} \, \dd u \right) \, \dd r \\
\; &\sim \;  1 \quad \text{as} \quad \noise \rightarrow 0.
\end{align*}
This proves the claim.
\end{proof}

\appendix

\section{Proof of Proposition \ref{prop:ubgeneric}}
\label{sec:proofubgeneric}

To study the system at hand, it is convenient to pass to action-angle variables. 
Let $\mathrm I : \RR_+ \rightarrow \RR_+$ be defined by 
\begin{equation*}
\mathrm I (E) \; = \; \frac{1}{2\pi}\int_{A(E)} \dd q \dd p
\quad \text{with} \quad 
A(E) \; = \; \{ (q,p) \in \RR^2 : H_0(q,p) \le E \}.
\end{equation*}
Our assumptions on $W$ ensure that $\mathrm I' (E) = \dd \mathrm I/\dd E \, (E)  > 0$ for any $E > 0$.
Given $E \ge 0$, we also set
\begin{equation*}
q^* (E) \; = \; \max \{ q \in \RR : H_0(q,p) = E \text{ for some }p\in \RR \} .
\end{equation*}
Then we define the action-angle variables by
\begin{align*}
I_x \; &= \; I(q_x,p_x) \; = \; \mathrm I(H_0 (q_x,p_x)), \\
\theta_x \; & = \; \theta (q_x,p_x) \; = \; \frac{ - \mathrm{sgn} (p_x)}{\mathrm I'(H_0(q_x,p_x))} \int_{q_x}^{q^*(H_0(q_x,p_x))} \frac{\dd q'}{\sqrt{2 (H_0 (q_x,p_x) - W(q'))}}.
\end{align*}
It is checked that $(I_x,\theta_x) \in \RR_+ \times \TT$ with $\TT = \RR/(2\pi\ZZ)$.
The potential $W$ is such that this change of variable is invertible, except at origin. 
We denote by $Q$ and $P$ the inverse maps: 
\begin{equation*}
q_x = Q (I_x,\theta_x),  \qquad p_x = P(I_x,\theta_x).
\end{equation*}
The change of variables $(q_x,p_x) \leftrightarrow (I_x,\theta_x)$ is known to be a canonical change of variables. 

Let $\mathrm H_0:\R_+ \rightarrow \R_+$ be the inverse function of $\mathrm I$: $\mathrm H_0\circ \mathrm I (E) = E$ for any $E\in\R_+$.
In the action-angle variables, the Hamiltonian \eqref{Hamiltonian one dimensional} reads
\begin{equation*}
H(I,\theta) 
\; = \;
\sum_x \mathrm H_0 (I_x) + \epsilon V \big( Q (I_{x+1}, \theta_{x+1})
- Q(I_x,\theta_x) \big).  
\end{equation*}
Defining
\begin{equation*}
\omega (I_x) \; = \; \mathrm H_0' (I_x) \; = \; \dd \mathrm H_0 / \dd I_x ,
\end{equation*}
Hamilton equations read
\begin{equation*}
  \begin{split}
    \dot{I}_x \; &= \; - \epsilon \frac{\partial }{\partial \theta_x} V
    \big( Q (I_{x+1}, \theta_{x+1}) - Q(I_x,\theta_x) \big), \\
    \dot{\theta}_x \; &= \; \omega_x \, + \, \epsilon
    \frac{\partial}{\partial I_x} V \big( Q (I_{x+1}, \theta_{x+1}) -
    Q(I_x,\theta_x) \big).
  \end{split}
\end{equation*}
The current, given by \eqref{current: general formula}, has the form
\begin{equation}\label{current action angle}
j_{x,x+1} \; = \; - \frac{1}{2} \big(P(I_x,\theta_x) + P(I_{x+1}, \theta_{x+1}) \big) V'\big( Q (I_{x+1}, \theta_{x+1}) - Q(I_x,\theta_x) \big)
\end{equation}
with $V'(x) = \dd V/\dd x$.

Since we are in dimension $d=1$, the noise written in the action-angle
coordinates is given by  
\begin{equation}\label{noise action angle}
S f (I , \theta) \; = \; \sum_x \big( f(I,\theta^x) - f(I,\theta) \big),
\end{equation}
with $\theta^x$ is obtained from $\theta$ by changing $\theta_x$ to
$-\theta_x$ ($-\theta_x$ is the inverse of $\theta_x$ for the
addition on $\T$).  
The symmetry of the potential $W$ implies 
\begin{equation*}
P(I_x,-\theta_x) \; = \; - P(I_x,\theta_x) \qquad \text{and} \qquad Q(I_x,-\theta_x) \; = \;  Q(I_x,\theta_x).
\end{equation*}
This implies that the noise $S$, as defined by \eqref{noise action angle}, preserves the total energy, and that the relation
\begin{equation*}
S j_{x,x+1} = - 4 j_{x,x+1}
\end{equation*}
holds. 

\subsection{The special case $W$ given by \eqref{special case W}.}
Let us now assume that $W(q)= |q|^r/r$, i.e. 
\begin{equation}\label{polynomial hamiltonian}
H_0 (q,p) \; = \; \frac{p^2}{2} + \frac{|q|^r}{r}, \qquad r > 2. 
\end{equation}
The following scaling relation are readily checked:
\begin{equation}\label{I and omega}
\mathrm H_0 (I) \; = \; \mathrm H_0 (1) \cdot  I^{2r/(r+2)}
\qquad \text{and} \qquad 
\omega (I) \; = \; \omega (1) \cdot I^{(r-2)/(r+2)}. 
\end{equation}
Moreover, writing 
\begin{equation}\label{Fourier Q P}
Q (I,\theta) \; = \; \sum_{k\in \Z} \hat Q (I,k) \ed^{\imaginary k \theta}, 
\qquad 
P (I,\theta) \; = \; \sum_{k\in \Z} \hat P (I,k) \ed^{\imaginary k \theta},
\end{equation}
we obtain 
\begin{equation}\label{Q P polynomial hamiltonian}
Q (I,\theta) \; = \; I^{2/ (r+2)} \sum_{k\in \Z} \hat Q (1,k) \ed^{\imaginary k \theta}, 
\qquad 
P (I,\theta) \; = \; I^{r/(r+2)} \sum_{k\in \Z} \hat P (1,k) \ed^{\imaginary k \theta}.
\end{equation}
Because $Q(1,\theta)$ and $P (1, \theta)$ are smooth, the Fourier coefficients $\hat Q (1,k)$, $\hat P (1,k)$, with $k \in \Z$, have good decay property as $|k| \rightarrow \infty$.

\subsection{Poisson equation for the uncoupled dynamics.}

In this subsection, we consider functions on $\R_+^2 \times \T^2$, that depend on two actions $(I_0,I_1)$ and two angles $(\theta_0,\theta_1)$. 
The actions play the role of a parameter, and, for clarity, will be dropped from several notations.
A function $f\in \mathcal C^\infty (\T^2)$ is expanded in Fourier series as
\begin{equation*}
f (I_0,I_1,\theta_0,\theta_1) 
\; = \;
\sum_{(k_0,k_1)\in \Z} \hat f (I_0,I_1,k_0,k_1) \ed^{\imaginary (k_0 \theta_0 + k_1 \theta_1)}
\end{equation*}
with
\begin{equation*}
\hat f (I_0,I_1,k_0,k_1) \; = \; \frac{1}{(2\pi)^2} \int_{[-\pi,\pi]^2} f (I_0,I_1,\theta_0,\theta_1) \ed^{-\imaginary (k_0 \theta_0 + k_1 \theta_1)} \, \dd \theta_0 \dd \theta_1.
\end{equation*}
It is seen that the current satisfies $\hat j_{0,1}(I_0,I_1,0,0) = 0$ for all $(I_0,I_1) \in \R_+^2$.
We introduce the notations
\begin{align*}
\eta (k_0,k_1) \; &= \;  \imaginary \big( k_0 \omega (I_0) + k_1 \omega (I_1) \big) - 2 \noise \\
D (k_0,k_1) \; &= \;   \eta(k_0,k_1) \eta (-k_0,-k_1) - \frac{16 \noise^4}{\eta (-k_0,k_1) \eta (k_0,-k_1)}.
\end{align*}

\begin{lem}\label{lem: Poisson action angle}
Let $f$ be a function on $\R_+^2 \times \T^2$ such that $f(I_0,I_1,\cdot,\cdot)$ is smooth and satisfies $\hat f (I_0,I_1,0,0) = 0$, for any $(I_0,I_1) \in \R_+^2$. 
Writing $\hat f(k_0,k_1)$ for $\hat f(I_0,I_1,k_0,k_1)$, we define
\begin{multline}\label{function g action angle}
g(I_0,I_1,k_0,k_1)
\; = \;
\hat f(k_0,k_1) 
\, - \, 
\noise \Big( \frac{\hat f(-k_0,k_1)}{\eta(-k_0,k_1)} + \frac{\hat f(k_0,-k_1)}{\eta(k_0,-k_1)} \Big) \\
+ \, 
\frac{\noise^2}{\eta(-k_0,-k_1)} \Big(\frac{1}{\eta(-k_0,k_1)} + \frac{1}{\eta(k_0,-k_1)} \Big) 
\big( \hat f(-k_0,-k_1) - \hat f(k_0,k_1) \big).
\end{multline}
A solution $u$ to the equation $-L_0 u = f$ is given, in the Fourier variables, by 
\begin{align*}
\hat u (I_0,I_1,0,0) 
\; &= \; 
0, \\
 \hat u (I_0,I_1,k_0,k_1) 
\; &= \; 
- \frac{\eta(-k_0,-k_1)}{D(k_0,k_1)}
\, g (I_0,I_1,k_0,k_1)
\quad \text{for} \quad 
(k_x,k_y) \ne (0,0).
\end{align*}
\end{lem}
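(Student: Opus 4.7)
Since $L_0$ preserves both actions $I_0, I_1$, these enter only as parameters and I would solve the Poisson equation $-L_0u=f$ mode by mode in the angles. In Fourier variables, the drift $A=\omega(I_0)\partial_{\theta_0}+\omega(I_1)\partial_{\theta_1}$ multiplies $\hat u(k_0,k_1)$ by $\imaginary(k_0\omega(I_0)+k_1\omega(I_1))$, while by \eqref{noise action angle} the flip noise $S$ sends $\hat u(k_0,k_1)$ to $\hat u(-k_0,k_1)+\hat u(k_0,-k_1)-2\hat u(k_0,k_1)$. Hence, with $\eta(k_0,k_1)$ as in the statement, the Fourier-transformed equation becomes
\[
\eta(k_0,k_1)\,\hat u(k_0,k_1)+\noise\,\hat u(-k_0,k_1)+\noise\,\hat u(k_0,-k_1)\;=\;-\hat f(k_0,k_1),
\]
together with the three analogous equations obtained by the sign flips $k_0\mapsto -k_0$ and $k_1\mapsto -k_1$. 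For each orbit $\{(\pm k_0,\pm k_1)\}$ with $(k_0,k_1)\ne(0,0)$ this is a closed $4\times 4$ linear system; the zero mode reduces to the solvability condition $\hat f(0,0)=0$, which is precisely the hypothesis, and we normalise $\hat u(0,0)=0$.

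\textbf{Resolution of the $4\times 4$ block.} The natural order of operations is as follows. First use the two equations indexed by $(-k_0,k_1)$ and $(k_0,-k_1)$ to express $\hat u(-k_0,k_1)$ and $\hat u(k_0,-k_1)$ as affine combinations of $\hat u(k_0,k_1)$ and $\hat u(-k_0,-k_1)$; this is allowed since $|\eta(\pm k_0,\mp k_1)|\ge 2\noise>0$. Substituting into the remaining two equations produces a $2\times 2$ system for the pair $(\hat u(k_0,k_1),\hat u(-k_0,-k_1))$. A direct calculation shows that the determinant of this reduced system, after clearing the denominators $\eta(-k_0,k_1)\eta(k_0,-k_1)$, is precisely $D(k_0,k_1)$, and that the coefficient of $\hat u(k_0,k_1)$ in the final resolution reassembles into $-\eta(-k_0,-k_1)\,g(k_0,k_1)$ with $g$ defined by \eqref{function g action angle}. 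This yields the announced formula.

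\textbf{Regularity and main obstacle.} For $\noise>0$ each $\eta(\pm k_0,\pm k_1)$ has modulus at least $2\noise$, and the explicit identity
\[
\eta(k_0,k_1)\eta(-k_0,-k_1)\eta(-k_0,k_1)\eta(k_0,-k_1)\;=\;\bigl[(k_0\omega(I_0)+k_1\omega(I_1))^2+4\noise^2\bigr]\bigl[(k_0\omega(I_0)-k_1\omega(I_1))^2+4\noise^2\bigr]
\]
implies that $D(k_0,k_1)$ is strictly positive for every $(k_0,k_1)\ne(0,0)$ and grows quadratically at infinity. Combined with the super-polynomial decay of $\hat f(k_0,k_1)$ inherited from the smoothness of $f$, this guarantees that the series defining $u$ converges in $\mathcal C^\infty(\TT^2)$, so $u$ solves $-L_0u=f$ classically. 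The only real difficulty I anticipate is the algebraic bookkeeping in the two-step elimination above: one must verify that the $\noise$- and $\noise^2$-correction terms appearing in \eqref{function g action angle} are exactly those produced by substituting the expressions for $\hat u(-k_0,k_1)$ and $\hat u(k_0,-k_1)$ back into the first equation. This is routine but tedious and presents no conceptual obstacle.
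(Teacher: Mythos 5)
Your proposal is correct and follows essentially the same route as the paper: the paper derives exactly the same Fourier-mode equation $\eta(k_0,k_1)\hat u(k_0,k_1)+\noise\hat u(-k_0,k_1)+\noise\hat u(k_0,-k_1)=-\hat f(k_0,k_1)$ and then checks the stated formula by direct computation, which is the same algebra you carry out by eliminating $\hat u(-k_0,k_1)$ and $\hat u(k_0,-k_1)$ (and indeed the reduced $2\times 2$ determinant is exactly $D(k_0,k_1)$, and Cramer's rule reproduces $\hat u(k_0,k_1)=-\eta(-k_0,-k_1)\,g/D$, using $\eta(k_0,k_1)+\eta(-k_0,-k_1)=\eta(-k_0,k_1)+\eta(k_0,-k_1)=-4\noise$). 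The only point to tidy is the degenerate orbits $k_0=0$ or $k_1=0$, where the block is $2\times 2$ rather than $4\times 4$ so your elimination step does not apply verbatim; the formula still satisfies the mode equation there, as direct substitution (the paper's verification) shows.
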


\begin{proof}
In the Fourier variables, the equation $-L_0 u = f$ reads
\begin{equation*}
\eta (k_0,k_1) \hat u (k_0,k_1) + \noise \hat u (-k_0,k_1) + \noise \hat u (k_0,-k_1) \: = \; - \hat f(k_0,k_1)
\end{equation*}
where we have written $\hat u (k_0,k_1)$ for $\hat u (I_0,I_1,k_0,k_1)$. 
The result is then checked by means of a direct computation. 
\end{proof}

\noindent
\textbf{Remarks.}
1. All other solutions are obtained by taking for $\hat u (I_0,I_1,0,0)$ an arbitrary function of the actions $I_0,I_1$.
This choice is irrelevant for the sequel. 

\noindent
2. Since $| \noise / \eta (k_0,k_1) | \le 1$ for all $(k_0,k_1) \in \Z^2$, we have the bound
\begin{equation*}
|g(I_0,I_1,k_0,k_1)| \; \le \; 5 \max \{  |\hat f(k_0,\pm k_1)|, |\hat f(-k_0,\pm k_1)|\}.
\end{equation*}

\noindent
3. For $\noise = 0$, the solution simply becomes 
\begin{equation*}
\hat u (I_0, I_1,k_0,k_1) \; = \; \imaginary \frac{\hat f(k_0,k_1)}{k_0 \omega (I_0) + k_1 \omega (I_1)}
\qquad \text{for} \qquad 
(k_x,k_y) \ne (0,0).
\end{equation*}

\subsection{
Proof of Proposition \ref{prop:ubgeneric}
}

By (\ref{eq:k2ublb}) we have
\begin{equation*}
\beta^{-2}\kappa_2 (\noise) 
\; \le \;
\langle \gamma^2 (e_0,e_1) \rangle_{\beta,0}
\; = \; 
\langle j_{0,1} (-L_0)^{-1} j_{0,1} \rangle_{\beta,0},  
\end{equation*}
with $\langle \cdot \rangle_{\beta,0}$ the uncoupled Gibbs state. 
Writing $u = (-L_0)^{-1} j_{0,1}$ we have thus
\begin{align*}
\langle j_{0,1} (-L_0)^{-1} j_{0,1} \rangle_{\beta,0}
\; &\sim \;
\int_{\R_+^2}  \ed^{-\beta \big(  \mathrm H_0 (I_0) + \mathrm H_0 (I_1) \big)} \, \dd I_0 \dd I_1
\\
&\qquad \times \int_{\T^2} u(I_0,I_1,\theta_0,\theta_1) j_{0,1} (I_0,I_1,\theta_0,\theta_1) \, \dd \theta_0 \dd \theta_1\\
\; &\sim \; 
\int_{\R_+^2}  \ed^{-\beta \big(  \mathrm H_0 (I_0) + \mathrm H_0 (I_1) \big)} \, \dd I_0 \dd I_1\\
&\qquad \times \sum_{k_0,k_1 \in \Z^2} \hat u(I_0,I_1,k_0,k_1) \hat j_{0,1} (I_0,I_1,k_0,k_1).
\end{align*}
Writing 
\begin{equation}\label{function h action angle}
h(I_0,I_1,k_0,k_1) \; = \; - g(I_0,I_1,k_0,k_1) \hat j (I_0,I_1,k_0,k_1),
\end{equation}
with $g$ as defined in \eqref{function g action angle},
we obtain by Lemma \ref{lem: Poisson action angle}, 
\begin{equation}\label{integral for kappa 2}
  \begin{split}
   & \langle j_{0,1} (-L_0)^{-1} j_{0,1} \rangle_{\beta,0} \; \sim \;\\
   & \sum_{(k_0,k_1) \ne (0,0)} \int_{\R_+^2} \frac{\eta
      (-k_0,-k_1)}{D(k_0,k_1)} \, h (I_0,I_1,k_0,k_1) \, \ed^{-\beta
      \big( \mathrm H_0 (I_0) + \mathrm H_0 (I_1) \big)} \, \dd I_0
    \dd I_1.
  \end{split}
\end{equation}
In this expression,
\begin{equation}\label{expression alpha D}
\frac{\eta (-k_0,-k_1)}{D(k_0,k_1)} 
\; = \;
-
\frac{ 2 \noise \, + \, \imaginary \, \big(k_0 \omega (I_0) + k_1 \omega (I_1) \big)}
{\big(k_0 \omega (I_0) + k_1 \omega (I_1) \big)^2 
\, + \, 4\noise^2 \frac{\left(k_0 \omega (I_0) - k_1 \omega (I_1)\right)^2}{\left(k_0 \omega (I_0) - k_1 \omega (I_1)\right)^2 \, + \, 4 \noise^2}}.
\end{equation}

We now come to the crux of the argument, and start using the specific form of $H_0$. 
In view of \eqref{expression alpha D}, it looks desirable to change integration variables in \eqref{integral for kappa 2}
from $(I_0,I_1)$ to $(\omega_0,\omega_1) = (\omega(I_0),\omega(I_1))$. 
The anharmonicity of $W$, specifically expressed in this case by
relation \eqref{I and omega}, makes this possible, giving 
\begin{equation}\label{integral for kappa 2 bis}
  \begin{split}
    & \langle j_{0,1} (-L_0)^{-1} j_{0,1} \rangle_{\beta,0} \; \sim \\
    & \sum_{(k_0,k_1) \ne (0,0)} \int_{\R_+^2}
    \frac{\eta (-k_0,-k_1)}{D(k_0,k_1)} \, \tilde h
    (\omega_0,\omega_1,k_0,k_1) \, (\omega_0 \omega_1)^{\frac{4}{r-2}} \,
    \rho_\beta (\omega_0,\omega_1)\, \dd \omega_0 \dd \omega_1
  \end{split}
\end{equation}
with
\begin{align*}
\tilde h \big(\omega_0,\omega_1,k_0,k_1) \; &= \;  h \big( c(r)\omega_0^{\frac{r+2}{r-2}}, c(r) \omega_1^{\frac{r+2}{r-2}} , k_0,k_1 \big), \qquad c(r) > 0, \\
\rho_\beta (\omega_0,\omega_1) \; &= \; \ed^{- c'(r) \beta (  \omega_0^{\frac{2r}{r-2}} + \omega_1^{\frac{2r}{r-2}} )}, \qquad c'(r) > 0.
\end{align*}

To proceed, we need some more technical informations on the function $\tilde h (\omega_0,\omega_1,k_0,k_1)$. The potential $W$ is not strictly convex at the origin, implying that $\omega (I)$ vanishes as $I \rightarrow 0$. 
For this reason, we need a relatively detailed knowledge on $\tilde
h(\omega_0,\omega_1,k_0,k_1)$ for $(\omega_0,\omega_1)$ near the origin,  
in a order to exclude any divergence at small frequencies. 

Using the general expression \eqref{current action angle} for the
current $j_{0,1}$,  
the specific expression \eqref{Q P polynomial hamiltonian} for
$Q(I,\theta)$ and $P(I,\theta)$,  
the definition \eqref{function g action angle} of $g$, and the
definition \eqref{function h action angle} of $h$,  
we conclude that $h$ is of the form 
\begin{multline*}
h(I_0,I_1,k_0,k_1) 
\; = \; 
I_0^{\frac{2r}{r+2}}
h_{0,0}(I_0,I_1,k_0,k_1) 
\, + \, 
I_0^{\frac{r}{r+2}} I_1^{\frac{r}{r+2}}
h_{0,1}(I_0,I_1,k_0,k_1) 
\; \\
 + \; 
I_1^{\frac{2r}{r+2}} 
h_{1,1}(I_0,I_1,k_0,k_1), 
\end{multline*}
so that in turn $\tilde h$ takes the form 
\begin{multline}\label{refined expression tilde h}
\tilde h(\omega_0,\omega_1,k_0,k_1) 
\; = \; 
\omega_0^{2r/(r-2)}
\tilde h_{0,0}(\omega_0,\omega_1,k_0,k_1) \\
\, + \, 
\omega_0^{r/(r-2)} \omega_1^{r/(r-2)}
\tilde h_{0,1}(\omega_0,\omega_1,k_0,k_1) 
\; + \; 
\omega_1^{2r/(r-2)} 
\tilde h_{1,1}(\omega_0,\omega_1,k_0,k_1), 
\end{multline}
where $\tilde h_{i,j}$ satisfies the following bounds: 
there exists $a < + \infty$ and, for any $b > 0$, there exists a constant $\mathrm C_b < + \infty$, such that 
\begin{equation}\label{decay of h}
\tilde h_{i,j}(\omega_0,\omega_1,k_0,k_1)
\; \le \; 
\mathrm C_b \frac{(|\omega_0| + |\omega_1| + 1 )^a}{(|k_0| + |k_1| + 1)^b}, 
\quad (i,j) = (0,0), (0,1), (1,1).
\end{equation}
Moreover, by symmetry, we have $\hat P(I,0) = 0$ for all $I > 0$, with $\hat P(I,0)$ defined by \eqref{Fourier Q P}. 
It follows that 
\begin{equation}
\label{symmetry tilde h current}
\begin{split}
\tilde h_{0,0} (\omega_0,\omega_1, 0 ,k_1) 
\; &= \; 
\tilde h_{0,1} (\omega_0,\omega_1, 0 ,k_1) \\
\; &= \; 
\tilde h_{0,1} (\omega_0,\omega_1, k_0 ,0)
\; = \; 
\tilde h_{1,1} (\omega_0,\omega_1, k_0 ,0)
\; = \; 0.  
\end{split}
\end{equation}

We now move back to the evaluation of \eqref{integral for kappa 2 bis}. 
We distinguish three cases, according to the values of $k_0$ and $k_1$; resonances appear in case 3.
The sum over $(k_0,k_1) \in \Z^2/\{0,0\}$ can then be controlled thanks to the decay in \eqref{decay of h} with $b$ large enough.

\emph{Case 1:} $k_0 k_1= 0$.
Let us, as an example, consider the case $k_0 = 0, k_1 \ne 0$. 
The integral \eqref{integral for kappa 2 bis} has a possible
divergence only for $k_1 \rightarrow 0$.  
We have
\begin{equation*}
\Big|\frac{\eta (-k_0,-k_1)}{D(k_0,k_1)} \Big| \; \le \; \mathrm C \omega_1^{-2} \qquad \text{for} \qquad \omega_0,\omega_1 \le 1.
\end{equation*}
Thanks to \eqref{symmetry tilde h current}, only the term in $\tilde h_{1,1}$ survives in \eqref{refined expression tilde h}, and we conclude that the integrand behaves as 
\begin{equation*}
\omega_1^{4/(r-2)} \omega_1^{2r/(r-2)} \omega_1^{-2} = \omega_1^{8/(r-2)} \qquad \text{as} \qquad \omega_1 \rightarrow 0,
\end{equation*}
so that there is in fact no singularity. 

\emph{Case 2:} $k_0 k_1 > 0$. 
The only possible divergence of the integral \eqref{integral for kappa
  2 bis} is at the origin. 
We have the bounds
\begin{equation*}
\Big|\frac{\eta (-k_0,-k_1)}{D(k_0,k_1)} \Big| \; \le \; \mathrm C \omega_0^{-2}, \mathrm C \omega_0^{-1}\omega_1^{-1}, \mathrm C \omega_1^{-2} 
\qquad \text{for} \qquad \omega_0,\omega_1 \le 1, 
\end{equation*}
allowing to check, as in the previous case, that there is no singularity. 

\emph{Case 3:} $k_0 k_1 < 0$. The integrand now becomes truly singular (resonances).
Let us assume, for example, that $k_0 > 0$ and $k_1 < 0$. 
We split the integral \eqref{integral for kappa 2 bis} as 
\begin{equation}\label{split integral noise}
\int_{\R_+^2} (\dots) \; = \; \int_{k_0 \omega_0 + |k_1| \omega_1 < \noise} (\dots)  \, + \, \int_{k_0 \omega_0 + |k_1| \omega_1 \ge \noise} (\dots).
\end{equation}
For the first integral, we are satisfied by the rough bound
\begin{equation*}
\Big|\frac{\eta (-k_0,-k_1)}{D(k_0,k_1)} \Big| \; \le \;  \frac{\mathrm C}{\noise^2 (k_0 \omega_0 + |k_1|\omega_1)^2}.
\end{equation*}
As in the cases treated previously, it is seen that there is no singularity. 
Moreover, the integration domain is of size $\noise^2$, so that the integral is of order 1 at most. 

We move to the second integral. 
We find it convenient to change again variables. 
With
\begin{equation*}
x \; = \; k_0 \omega_0 + |k_1| \omega_1, 
\qquad 
y \; = \; k_0 \omega_0 - |k_1| \omega_1,
\end{equation*}
the second integral in the right hand side of \eqref{split integral noise} becomes
\begin{equation*}
\begin{split}
& \int_{k_0 \omega_0 + |k_1| \omega_1 \ge \noise} (\dots)\\
&\; \sim \; 
\int_{\noise}^\infty \dd x \int_{-x}^x \dd y \frac{2 \noise +   \imaginary y}{y^2 + 4\noise^2 \frac{x^2}{x^2 + 4 \noise^2}}\,
\phi (x,y,k_0,k_1)\, \tilde\rho_\beta (x,y,k_0,k_1)
\end{split}
\end{equation*}
with 
\begin{align*}
\phi (x,y,k_0,k_1) \; &= \; h\left( \frac{x+y}{2k_0}, \frac{x-y}{2|k_1|},k_0,k_1\right) \,  \left(\frac{(x+y)(x-y)}{4 k_0|k_1|}\right)^{4/(r-2)} \\
\tilde\rho_\beta (x,y,k_0,k_1) \; &= \; \rho_\beta \left(\frac{x+y}{2k_0},
\frac{x-y}{2|k_1|}\right). 
\end{align*}
We observe that, in the domain of integration $x\ge \noise$:
\begin{equation*}
4\noise^2 \frac{x^2}{x^2 + 4 \noise^2} \; \ge \; \frac 45 \noise^2.
\end{equation*}
Therefore, the integral converges to a finite value as $\noise \rightarrow 0$. $\square$

\section{Proof of Corollary \ref{cor: dhc corol}}\label{Appendix proof dhc}
Consider the quenched space-time correlations of the energy:
\begin{equation*}
  S(x,t, \omega) \ = \left<e_x(t) e_0(0) \right>_{\rho_\beta} - \beta^{-2}
\end{equation*}
where $\omega = (\omega_x)_{x\in \Z}$, and where $\{e_x(t)\}$ is the time evolved energy generated by the
Ginzburg-Landau dynamics $\mc L$ 
with the coefficients $\gamma^2$ and $\alpha$ computed above, starting
with the equilibrium distribution at temperature $\beta^{-1}$.
Computing the time derivative, we have
\begin{equation*}
  \begin{split}
    \partial_t S(x,t, \omega) \ =\  8\noise \Delta_{x+1,x, \omega}^{-1} 
      \left[S(x+1,t,\omega) - S(x,t,\omega)\right]  \\
      -8\noise \Delta_{x,x-1, \omega}^{-1}
       \left[S(x,t,\omega) - S(x-1,t,\omega)\right] 
  \end{split}
\end{equation*}
with $\Delta_{x,y,\omega}$ defined by \eqref{Delta Definition}.
Thus $S(x,t,\omega) = \mathbb E_{0,\omega}(\delta_x(X(t)))$, the
transition probability of a
1-dimensional random walk on random bonds  $X(t)$ (so called bond
diffusion).  
It is well known and easy to compute the asymptotic variance of this
bond diffusion, it is given by the harmonic average of the bonds
variables (\cite{KLO}):
\begin{equation}
  \label{eq:19}
  \begin{split}
    \lim_{t\to\infty} \frac 1t \sum_x x^2 S(x,t,\omega)& =
    \lim_{t\to\infty} \frac 1t \mathbb E_{0,\omega}(X(t)^2)\\& =
    \Big\langle \Big(
    \frac{8\noise}{\Delta_{0,1}(\varsigma)}\Big)^{-1}
    \Big\rangle^{-1}_* \;= \; \frac{8\noise}{\langle
      \Delta_{0,1}(\varsigma)\rangle_*}
  \end{split}
\end{equation}
almost surely in $\omega$. 

By the Green-Kubo formula for the diffusivity for $\mc L$, this is yields
\begin{equation}
\kappa_2 (\noise) \; = \; \frac{8\noise\beta^2}{\langle
  \Delta_{0,1}(\varsigma)\rangle_*}  
\; \rightarrow \; 0 \qquad \text{as} \qquad \noise \rightarrow 0, 
\end{equation}
which gives the claims. $\square$

\section{Proof of Proposition \ref{prop: weak coupling rotors}}\label{Appendix: proof of a lemma and a proposition}

We start by the following lemma.
\begin{lem}\label{lem: inverse of L0 rotators}
Let $x,y\in \bZ$. 
A solution $\psi_{x,y}$ to the equation
\begin{equation}
- L_0 \psi_{x,y} = \sin(q_x- q_{y}) p_x
\end{equation}
is given by
\begin{equation}
\begin{split}
\psi_{x,y} \; =\; & \Delta_{x,y}^{-1} \left\{[ 4 \noise^2 +(e_{x} -e_{y})] e_x + \frac12 (e_x -e_{y}) p_x p_{y} \right\} \cos (q_x -q_y) \\
&+ \Delta_{x,y}^{-1} \left\{ 2 \noise (e_{y} p_x + e_x p_{y}) \right\} \sin(q_{x} -q_{y})
\end{split} 
\end{equation}
with 
\begin{equation}
\Delta_{x,y} \; := \; \Delta (e_x,e_{y}) \; = \;  4 \noise^2 (e_x +e_{y}) + (e_{y} -e_x)^2.
\end{equation}
\end{lem}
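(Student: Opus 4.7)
I would prove this by direct verification once the right ansatz is in place. The right-hand side $\sin(q_x-q_y)p_x$ depends on $(q_x,q_y,p_x,p_y)$ only, is even in $p_y$, and odd in $p_x$. Writing $c=\cos(q_x-q_y)$ and $s=\sin(q_x-q_y)$, the only part of $L_0$ that acts nontrivially on a function of these variables is $A=p_x\partial_{q_x}+p_y\partial_{q_y}$ together with the two flip operators $\noise(S^x+S^y)$. Since $A(c)=(p_y-p_x)s$ and $A(s)=(p_x-p_y)c$ preserve $\mathrm{span}\{c,s\}$ and $S$ preserves the monomial structure in the momenta, the ansatz compatible with the parity of the right-hand side is
\begin{equation*}
\psi_{x,y}\;=\;(a_0+a_1\,p_xp_y)\,c+(b_0\,p_x+b_1\,p_y)\,s,
\end{equation*}
with coefficients $a_0,a_1,b_0,b_1$ depending only on $(e_x,e_y,\noise)$.

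Applying $-L_0$ to this ansatz, using the formulas above for $A$, the relations $p_x^2=2e_x$ and $p_y^2=2e_y$, and the action of the flip noise (each odd power of $p_x$ or $p_y$ in a monomial contributes a factor $-2$ to $S$), one computes
\begin{equation*}
\begin{split}
-L_0\psi_{x,y}\;=\;&\big[(a_0-2a_1 e_y+2\noise b_0)p_x+(-a_0+2a_1 e_x+2\noise b_1)p_y\big]\,s\\
&+\big[(4\noise a_1+b_0-b_1)p_xp_y+2(b_1 e_y-b_0 e_x)\big]\,c.
\end{split}
\end{equation*}
Equating this to $p_x\,s$ produces a $4\times 4$ linear system in the four unknowns. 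The coefficient of $c$ supplies $b_0 e_x=b_1 e_y$ and $4\noise a_1+b_0-b_1=0$; the first forces the parametrization $(b_0,b_1)=\mu(e_y,e_x)$ for some scalar $\mu=\mu(e_x,e_y,\noise)$, and then the second gives $a_1=\mu(e_x-e_y)/(4\noise)$. The coefficient of $p_y s$ determines $a_0=2a_1 e_x+2\noise b_1$, and inserting this into the coefficient of $p_x s$ reduces everything to a single scalar equation for $\mu$ whose coefficient is exactly $\Delta_{x,y}=4\noise^2(e_x+e_y)+(e_x-e_y)^2$. Solving yields $\mu=2\noise/\Delta_{x,y}$, and back-substitution recovers the four coefficients stated in the lemma.

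There is no conceptual obstacle; only bookkeeping. The delicate point is tracking the factors of two coming from the convention $S(p_x)=-2p_x$, $S(p_xp_y)=-4p_xp_y$ for the flip noise (consistent with the harmonic-chain computation of Subsection~\ref{subsec: disordered harmonic chain}) and from the identifications $p_x^2=2e_x$, $p_y^2=2e_y$ when converting polynomial expressions in the momenta to functions of the internal energies. Getting these factors right is what produces the precise form $\Delta_{x,y}=4\noise^2(e_x+e_y)+(e_x-e_y)^2$ as opposed to a multiple of it.
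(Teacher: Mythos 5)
Your proposal is correct and is essentially the paper's argument: a direct computation of $(A+\noise S)$ on functions of $(q_x,q_y,p_x,p_y)$ spanned by $\cos(q_x-q_y)$, $\sin(q_x-q_y)$ times momentum monomials, using $p_x^2=2e_x$, $p_y^2=2e_y$; the only difference is that you derive the coefficients by solving the resulting $4\times 4$ linear system (yielding $\mu=2\noise/\Delta_{x,y}$), whereas the paper simply verifies the stated $\psi_{x,y}$ by checking that the cosine terms cancel and the sine term reproduces $-p_x\Delta_{x,y}\sin(q_x-q_y)$. Your bookkeeping, including the flip-noise convention $S(p_x)=-2p_x$, $S(p_xp_y)=-4p_xp_y$, matches the convention actually used in the paper's appendix computations, and back-substitution indeed recovers the stated coefficients and $\Delta_{x,y}=4\noise^2(e_x+e_y)+(e_y-e_x)^2$.
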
 

\begin{proof}
We compute
\begin{align*}
& A \psi_{x,y}
\; =\;  2 \noise \Delta_{x,y}^{-1}
 (e_y p_x + e_x p_y) (p_x - p_y) \cos (q_x - q_y) \\
& \quad -\Delta_{x,y}^{-1}
\Big\{
\big(4\noise^2 + (e_x - e_y)\big) e_x + \frac{1}{2} (e_x - e_y) p_x p_y
\Big\}
(p_x - p_y) \sin (q_x - q_y) \\
\end{align*}
and
\begin{equation*}
S \psi_{x,y}
\; = \;
- 4\noise \Delta_{x,y}^{-1}   (e_y p_x + e_x p_y) \sin (q_x - q_y)
 - 2\Delta_{x,y}^{-1}  (e_x - e_y) p_x p_y \cos (q_x - q_y).
\end{equation*}
Remembering that $p_x^2 = 2e_x$ and $p_y^2=2e_y$, the terms in $\cos (q_x - q_y)$ cancel in $(A + \noise S)\psi_{x,y}$, so that 
\begin{equation*}
[A + \noise S ] \psi_{x,y} \; = \; \Delta^{-1}_{x,y} \theta_{x,y}\sin(q_{x} -q_y)
\end{equation*}
with
\begin{align*}
\theta_{x,y}
\; &= \; 
\Big\{
\big(4\noise^2 + (e_x - e_y)\big) e_x + \frac{1}{2} (e_x - e_y) p_xp_y
\Big\} 
(p_x - p_y)\\
&\quad - 4 \noise^2  (e_y p_x + e_x p_y)\\
& = \; - p_x \Delta_{x,y}.
\end{align*}
This proves the claim.
\end{proof}

\begin{proof}[{\bf Proof of Proposition \ref{prop: weak coupling rotors}}]
The Gibbs measure at inverse temperature $\beta$ is readily computed. 
For a function $f$ depending only on the uncoupled energy $e_x = p_x^2/2$, it holds that 
\begin{equation*}
\langle f \rangle_{\beta,0}
\; = \; 
\sqrt{\frac{\beta}{2\pi}} \int_\bR f(p_x^2/2) \ed^{-\beta p_x^2/2} \, \dd p_x
\; = \;
\sqrt{\frac{\beta}{4\pi}} \int_0^\infty f (e) \ed^{-\beta e} \,\frac{\dd e}{\sqrt e}
\end{equation*}
from which \eqref{Gibbs measure rotors} follows. 

Next, $\gamma^2 (e_x,e_{x+1})$ is computed by means of Lemma  \ref{lem: inverse of L0 rotators}: 
\begin{align*}
& \gamma^2 (e_x,e_{x+1})
\; = \; \Pi \,  j_{x,x+1} (-L_0)^{-1} j_{x,x+1} \\
\;  &\quad = \; \frac{1}{2}\Pi \,  j_{x,x+1} \big[ (-L_0)^{-1} \sin (q_{x+1} - q_{x}) p_{x+1} - (-L_0)^{-1} \sin(q_{x} - q_{x+1}) p_{x} \big] \\
\; &\quad = \; \frac{1}{2} \Pi \, j_{x,x+1} \big[ \psi_{x+1,x} - \psi_{x,x+1} \big].
\end{align*}
The terms in $\cos (q_x - q_{x+1})$ in $\psi_{x+1,x}$ and $\psi_{x,x+1}$ will vanish due to the projection $\Pi$, so that we are left with 
\begin{equation*}
\gamma^2 (e_x,e_{x+1}) =
\frac{1}{4} \Pi (p_x + p_{x+1}) \sin (q_{x+1} - q_x)
\frac{4 \noise (e_{x+1}p_x + e_x p_{x+1}) \sin (q_{x+1} - q_{x})}{ \Delta_{x,x+1} } .
\end{equation*}
Since $\frac{1}{(2\pi)^2}\int_{[0,2\pi]^2} \sin^2 (x - y) \, \dd x \dd y = 1/2$,
and since the projection of expressions containing uneven powers of $p_x$ or $p_{x+1}$ vanishes, we obtain \eqref{gamma rotors}.

The current $\alpha (e_x,e_{x+1})$ can be computed in two possible ways: 
directly by the definition $\alpha (e_x,x_{x+1}) = \Pi G (-L_0)^{-1} j_{x,x+1}$, 
or by means of the expression
\begin{equation*}
\alpha (e_x,e_{x+1})
\; = \; 
\ed^{\mathcal U (e_x) + \mathcal U (e_{x+1})} \big(\partial_{e_{x+1}} - \partial_{e_x}\big) \ed^{ - (\mathcal U (e_x) + \mathcal U (e_{x+1}))} \gamma^2 (e_x,e_{x+1})
\end{equation*} 
with $\mathcal U(x) = \frac{1}{2}\log x$. 
Both computations lead to \eqref{alpha rotors}. 
\end{proof}



\begin{thebibliography}{999}
\footnotesize

\bibitem{B08} C. Bernardin, Stationary non-equilibrium properties for a heat conduction model, Phys. Rev E, \textbf{78}, no. 2, (2008).

\bibitem{CBFH} C. Bernardin and F. Huveneers, 
Small perturbation of a disordered harmonic chain by a noise and an anharmonic potential, 
Prob. Th. Rel. Fields {\bf 157}, 301-331 (2013).

\bibitem{BO11} C. Bernardin and S. Olla, 
Transport Properties of a Chain of Anharmonic Oscillators with Random
Flip of Velocities, 
J. Stat.Phys. {\bf 145}, 1224-1255 (2012). 

\bibitem{BObook}  C. Bernardin and S. Olla, 
Thermodynamics and Non-equilibrium Macroscopic Dynamics of Chains
Anharmonic Oscillators, 
https://www.ceremade.dauphine.fr/~olla/springs13.pdf


\bibitem{blr} F. Bonetto, J.L. Lebowitz, L. Rey-bellet, Fourier's Law:
  a challenge to theorist, Mathematical Physics 2000, Imperial College Press, London, 2000, pp.128-150.

\bibitem{BL07}
Oliver Butterley and Carlangelo Liverani,
 Smooth anosov flows: correlation spectra and stability.
 {Journal of Modern Dynamics}, {\bf 1}, 2:301-322, 2007. 
 
\bibitem{BL13} Oliver Butterley and Carlangelo Liverani, Robustly
  invariant sets in fiber contracting bundle flows. Journal of Modern
  Dynamics, {\bf 7}, 2, 255--267 (2013) 

\bibitem{Car} P. Carmona, Existence and uniqueness of an invariant
  measure for a chain of oscillators in contact with two heat baths,
  Stochastic Process. Appl. {\bf 117}, no. 8, 1076--1092 (2007). 

\bibitem{dhar} A. Dhar, Heat Transport in low-dimensional systems,
 Advances in Physics Vol. 57, No. 5, 2008, 457--537.

\bibitem{DLK} A. Dhar, J.L. Lebowitz, V. Kannan, Heat conduction in
  disordered harmonic lattices with energy conserving noise,
  Phys. rev. E , 83, 021108, (2011), 

\bibitem{WDFH} W. De Roeck and F. Huveneers, 
Asymptotic localization of energy in non-disordered oscillator chains,
arXiv:1305.5127, 1-31 (2013).

\bibitem{DL} D. Dolgopyat and C. Liverani, 
Energy transfer in a fast-slow Hamiltonian system,
Commun. Math. Phys. {\bf 308}, 201--225 (2011).

\bibitem{EH} J.-P. Eckmann and M. Hairer, 
Non-Equilibrium Statistical Mechanics of Strongly Anharmonic Chains of Oscillators,  
Commun. Math. Phys. {\bf 212}, 105-164 (2000).



\bibitem{F2} J. Fritz, Gradient dynamics of infinite point systems,
Ann. Prob. {\bf 15} (1987), 478--514.  


\bibitem{FFL} {{J. Fritz, T. Funaki and J.L. Lebowitz}}, {Stationary
    states of random Hamiltonian systems},  Probab. Theory Related
  Fields {\bf 99}, (1994), 211--236.  

\bibitem{GAGI}
P. Gaspard and T. Gilbert,
Heat conduction and Fourier's law by consecutive local mixing and thermalization   
Physical Review Letters 101, 020601 (2008).

\bibitem{GL06}
S\'ebastien Gou\" ezel and Carlangelo Liverani,
 Banach spaces adapted to Anosov systems.
 {Ergodic Theory and Dynamical Systems}, {\bf 26}, 1:189--217, 2006.

\bibitem{Huveneers}
F.~Huveneers,
Energy fluctuations in simple conduction models, 
Stochastic Processes and their Applications, {\bf 123} (10), 3753--3769, 2013.

\bibitem{KV} C. Kipnis and S.R.S. Varadhan, Central limit theorem for
  additive functionals of reversible Markov processes and applications
  to simple exclusions, Comm. Math. Phys. {\bf{104}}, Number 1, 1-19
  (1986). 

\bibitem{KLO} T. Komorowski, C. Landim, and S. Olla,
  Fluctuations in Markov processes. Time symmetry and martingale
  approximation. Springer (2012). 



\bibitem{LLL} O.E. Lanford III, J.L. Lebowitz, E. H. Lieb, Time evolution of infinite anharmonic systems. 
J. Statist. Phys. {\bf 16}, no. 6, 453--461 (1977). 

\bibitem{LLP} S. Lepri, R. Livi, A. Politi, 
Thermal Conduction in classical low-dimensional lattices, 
Phys. Rep. {\bf 377}, 1--80 (2003).

\bibitem{Liv} C. Liverani,  On contact Anosov flows. Ann. of Math. (2) {\bf 159}, no. 3, 1275--1312 (2004).
\bibitem{LO} C. Liverani and S. Olla,  Toward the Fourier law for a weakly interacting anharmonic crystal,  JAMS {\bf 25}, N. 2, 555--583, (2012).

\bibitem{LOS} C. Liverani, S. Olla, M. Sasada, 
Macroscopic Energy diffusion after weak coupling, in preparation.

\bibitem{MPP}
C. Marchioro, A. Pellegrinotti, E. Presutti, 
Existence of time evolution in $\nu$-dimensional statistical mechanics, 
Commun. Math. Phys. {\bf 40}, 175-185 (1975).  

\bibitem{OT} S. Olla, C. Tremoulet, Equilibrium fluctuations for
  interacting Ornstein-Uhlenbeck particles, Comm. Math. Phys.
   {\bf 233} (3), 463-491 (2003). 

 \bibitem{Otto} Georgii, Hans-Otto, Gibbs measures and phase transitions. Second edition. de Gruyter Studies in Mathematics, 9. Walter de Gruyter \& Co., Berlin, 2011. xiv+545 pp.

\bibitem{Luc} L. Rey-Bellet,  Open Classical System, LNM XX, Springer.

\bibitem{S0} H. Spohn, Equilibrium Fluctuations for Interacting
  Brownian Particles, Comm. Math. Phys, {\bf 103}, 1-33, (1986).
 
\bibitem{S} H. Spohn, private communication.



\bibitem{Var} S.R.S. Varadhan, 
Nonlinear diffusion limit for a system with nearest neighbor
interactions II, Asymptotic problems in probability theory: stochastic
models and diffusions on fractals (Sanda/Kyoto, 1990), 75-128,  Pitman
Res. Notes Math. Ser., {\bf 283}, Longman Sci. Tech., Harlow, (1993).   

\bibitem{Zw} M.Zworski, private communication.

\end{thebibliography}
\end{document}